\documentclass[conference]{IEEEtran}
\usepackage{epsfig,endnotes}
\usepackage{multirow}
\usepackage{graphicx}
\usepackage{subfig}
\usepackage{grffile}
\usepackage{thmtools, thm-restate}

\newcounter{subcopyrightbox@save}
\usepackage{caption}
\usepackage{color, url}
\usepackage{xspace} 
\usepackage{mathrsfs}
\usepackage{amssymb}
\usepackage{amsmath}
\usepackage{epstopdf}
\usepackage{balance}
\usepackage{bm}

\usepackage{textalpha} 

\usepackage{algorithm}
\usepackage{algorithmic}

\usepackage{lipsum}
\usepackage{xcolor}
\newcommand\hl[1]{%
  \bgroup
  \hskip0pt\color{red!80!black}%
  #1%
  \egroup
}

\usepackage{fancyhdr}

\fancyhf{}
\fancyfoot[C]{SCEES}    

\fancypagestyle{plain}{
  \fancyhf{}
  \fancyhead[C]{Conference on \LaTeX}     

}
\usepackage{eso-pic}



\newcommand{\argmin}{\operatornamewithlimits{argmin}}

\newtheorem{definition}{Definition}
\newtheorem{theorem}{Theorem}

\newcommand{\myparatight}[1]{\smallskip\noindent{\bf {#1}:}~}
\newenvironment{packeditemize}{\begin{list}{$\bullet$}{\setlength{\itemsep}{0.2pt}\addtolength{\labelwidth}{10pt}\setlength{\leftmargin}{\labelwidth}\setlength{\listparindent}{\parindent}\setlength{\parsep}{1pt}\setlength{\topsep}{0pt}}}{\end{list}}
\newcommand{\alan}[1]{{{#1}}}

\graphicspath{ {../fig/} }

\begin{document}
\AddToShipoutPictureBG*{%
  \AtPageUpperLeft{%
    \setlength\unitlength{1in}%
    \hspace*{\dimexpr0.5\paperwidth\relax}
    \makebox(0,-0.75)[c]{In the 39th IEEE Symposium on Security and Privacy, May 2018, San Francisco, CA}%
}}



%
%
%
%
%
\title{Stealing Hyperparameters in Machine Learning}



\author{
    \IEEEauthorblockN{Binghui Wang, Neil Zhenqiang Gong}
    \IEEEauthorblockA{ECE Department, Duke University}
    \IEEEauthorblockA{\{binghui.wang, neil.gong\}@duke.edu}
}

\maketitle
\begin{abstract}
Hyperparameters are critical in machine learning, as different hyperparameters often result in models with significantly different performance. 
Hyperparameters may be deemed confidential because of their commercial value and the confidentiality of the proprietary algorithms that the learner uses to learn them. 
 In this work, we propose attacks on stealing the hyperparameters that are learnt 
by a learner. 
We call our attacks \emph{hyperparameter stealing attacks}. Our attacks are applicable to a variety of popular machine learning algorithms such as ridge regression, logistic regression, support vector machine, and neural network. We evaluate the effectiveness of our attacks both theoretically and empirically. For instance, we evaluate our attacks on Amazon Machine Learning. Our results demonstrate that our attacks can accurately steal hyperparameters. 
We also study countermeasures. Our results highlight the need for new defenses against our hyperparameter stealing attacks for certain machine learning algorithms.

\end{abstract}
\section{Introduction}
\label{intro}


Many popular supervised machine learning (ML) algorithms--such as ridge regression (RR)~\cite{hoerl1970ridge} , logistic regression (LR)~\cite{hosmer2013applied}, support vector machine (SVM)~\cite{cortes1995support}, and neural network (NN)~\cite{goodfellow2016deep}--learn the parameters in a model via minimizing an \emph{objective function}, which is often in the form of \emph{loss function + $\lambda\ \times$ regularization term}.  Loss function characterizes how well the model performs over the training dataset, regularization term is used to prevent \emph{overfitting}~\cite{bishop2006pattern}, and $\lambda$ balances between the two. Conventionally, $\lambda$ is called \emph{hyperparameter}. Note that there could be multiple hyperparameters if the ML algorithm adopts multiple regularization terms.  Different ML algorithms use different loss functions and/or regularization terms. 

Hyperparameters are critical for ML algorithms. For the same training dataset, with different hyperparameters, an ML algorithm might learn models that have significantly different performance on the testing dataset, e.g., see our experimental results about the impact of hyperparameters on different ML classifiers in Figure~\ref{accvshyper} in the Appendix. Moreover, hyperparameters are often learnt through a computationally expensive cross-validation process, which may be implemented by proprietary algorithms that could vary across learners. Therefore, hyperparameters may be deemed confidential. 

\myparatight{Our work} In this work, we formulate the research problem of stealing hyperparameters in machine learning, and we provide the first systematic study on hyperparameter stealing attacks as well as their defenses.    

{\bf Hyperparameter stealing attacks.} We adopt a threat model in which an attacker knows the training dataset, the ML algorithm (characterized by an objective function), and (optionally) the learnt model parameters.  
Our threat model is motivated by the emerging machine-learning-as-a-service (MLaaS) cloud platforms, e.g., Amazon Machine Learning~\cite{amazon} and Microsoft Azure Machine Learning~\cite{ms}, in which the attacker could be a user of an MLaaS platform. When the model parameters are  unknown, the attacker can use model parameter stealing attacks~\cite{tramer2016stealing} to learn them. 
As a first step towards studying the security of hyperparameters, 
\alan{\emph{we focus on hyperparameters that are used to balance between the loss function and the regularization terms in the objective function}. Many popular ML algorithms--such as ridge regression, logistic regression, and SVM (please refer to Table~\ref{sum_ml} for more ML algorithms)--rely on such hyperparameters.  
It would be an interesting future work to study the security of hyperparameters for other ML algorithms, e.g., the hyperparameter $K$ for $K$NN, as well as network architecture, dropout rate~\cite{dropout}, and mini-batch size for deep neural networks.} However, as we will demonstrate in our experiments, an attacker (e.g., user of an MLaaS platform) can already significantly financially benefit from stealing the hyperparameters in the objective function.

We make a key observation that the model parameters learnt by an ML algorithm are often \emph{minima} of the corresponding objective function. Roughly speaking, a data point is a minimum of an objective function if the objective function has larger values at the nearby data points. This implies that the \emph{gradient} of the objective function at the model parameters is close to $\mathbf{0}$ ({$\mathbf{0}$ is a vector whose entries are all 0).
Our attacks are based on this key observation. First, we propose a general attack framework to steal hyperparameters.  Specifically, in our framework, we compute the gradient of the objective function at the model parameters and set it to $\mathbf{0}$, which gives us a \emph{system of linear equations} about the hyperparameters. This linear system is \emph{overdetermined} since the number of equations (i.e., the number of model parameters) is usually larger than the number of unknown variables (i.e., hyperparameters). Therefore, we leverage the \emph{linear least square} method~\cite{montgomery2015introduction}, a widely used method to derive an approximate solution of an overdetermined system, to estimate the hyperparameters. Second, we demonstrate how we can apply our framework to steal hyperparameters  for a variety of ML algorithms.

{\bf Theoretical and empirical evaluations.} We evaluate our attacks both theoretically and empirically. Theoretically, we show that 1) when the learnt model parameters are an exact minimum of the objective function, our attacks can obtain the exact hyperparameters; and 2)  when the model parameters deviate from their closest minimum of the objective function with a small difference, then our estimation error is a linear function of the difference. Empirically, we evaluate the effectiveness of our attacks using six real-world datasets. Our results demonstrate that our attacks can accurately estimate the hyperparameters on all datasets for various ML algorithms. For instance, for various regression algorithms, the relative estimation errors are less than  $10^{-4}$ on the datasets.

Moreover, \alan{via simulations and evaluations on Amazon Machine Learning}, we show that a user can use our attacks to learn a model via MLaaS with much less economical costs, while not sacrificing the model's testing performance. Specifically, the user samples a small fraction of the training dataset, learns model parameters via MLaaS, steals the hyperparameters using our attacks, and re-learns model parameters using the entire training dataset and the stolen hyperparameters via MLaaS. 

{\bf Rounding as a defense.} One natural defense against our attacks is to round model parameters, so attackers obtain obfuscated model parameters. We note that  rounding was proposed to obfuscate confidence scores of model predictions to mitigate model inversion attacks~\cite{fredrikson2015model} and model stealing attacks~\cite{tramer2016stealing}.
We  evaluate the effectiveness of rounding using the six real-world datasets.  

First, our results show that rounding increases the relative estimation errors of our attacks, which is consistent with our theoretical evaluation. However, for some ML algorithms, our attacks are still effective. For instance, for LASSO (a popular regression algorithm)~\cite{tibshirani1996regression}, the relative estimation errors are still less than around $10^{-3}$ even if we round the model parameters to one decimal. Our results highlight the need to develop new countermeasures for hyperparameter stealing attacks. 
Second, since different ML algorithms use different regularization terms, one natural question is which regularization term has better security property. Our results demonstrate that  $L_2$ regularization term can more effectively defend against our attacks than  $L_1$ regularization term using rounding. This implies that an ML algorithm should use $L_2$ regularization in terms of security against hyperparameter stealing attacks.  Third,  we also compare different loss functions in terms of their security property, and we observe that \emph{cross entropy loss} and \emph{square hinge loss} can more effectively defend against our attacks than \emph{regular hinge loss} using rounding. The cross-entropy loss function is adopted by logistic regression~\cite{hosmer2013applied}, while square and regular hinge loss functions are adopted by support vector machine and its variants~\cite{hsu2003practical}. 
 
In summary, our contributions are as follows:
\begin{packeditemize}
\item We provide the first study on hyperparameter stealing attacks to machine learning. We propose a general attack framework to steal the hyperparameters in the objective functions. 
\item We evaluate our attacks both theoretically and empirically. Our empirical evaluations on several real-world datasets demonstrate that our attacks can accurately estimate hyperparameters for various ML algorithms. \alan{We also show the success of our attacks on Amazon Machine Learning}.
\item We evaluate rounding model parameters as a defense against our attacks. Our empirical evaluation results show that our attacks are still effective for certain ML algorithms, highlighting the need for new countermeasures. We also compare different regularization terms and different loss functions in terms of their security  against our attacks. 
\end{packeditemize}

\section{Related Work}
\label{related}

Existing attacks to ML can be roughly classified into four categories: \emph{poisoning attacks}, \emph{evasion attacks},
\emph{model inversion attacks}, and \emph{model extraction attacks}. Poisoning attacks and  evasion attacks are also called \emph{causative attacks} and \emph{exploratory attacks}~\cite{barreno2006can}, respectively. Our hyperparameter stealing attacks are orthogonal to these attacks. 

\myparatight{Poisoning attacks} In poisoning attacks, an attacker aims to pollute the training dataset such that the learner produces a bad classifier, which would mislabel malicious content or activities generated by the attacker at testing time. In particular, the attacker could insert new instances, edit existing instances, or remove existing instances in the training dataset~\cite{Papernot16SoK}. Existing studies have demonstrated poisoning attacks to
 worm signature generators~\cite{newsome2005polygraph,perdisci2006misleading,newsome2006paragraph}, 
 spam filters~\cite{Nelson08poisoningattackSpamfilter,Nelson09poisoningspamfilter},
anomaly detectors~\cite{rubinstein2009antidote,KloftpoisoningAnomalydetection}, 
SVMs~\cite{biggio2012poisoning}, face recognition methods~\cite{biggio2013poisoning},
 as well as recommender systems~\cite{poisoningattackRecSys16,YangRecSys17}.

\myparatight{Evasion attacks} In these attacks~\cite{nelson2010near,huang2011adversarial,biggio2013evasion,laskov2014practical,szegedy2013intriguing, goodfellow2014explaining,featurePositioning,spamEvasion16,Papernot16,XuEvasionMalware,CarliniUsenixSecurity16,sharif2016accessorize,liu2016delving,PracticalBlackBox17}, an attacker aims to inject carefully crafted noise into a testing instance (e.g., an email spam, a social spam, a malware, or a face image) such that the classifier predicts a different label for the instance.  
The injected noise often preserves the semantics of the original instance (e.g., a malware with injected noise preserves its malicious behavior)~\cite{laskov2014practical,XuEvasionMalware}, is human imperceptible~\cite{szegedy2013intriguing, goodfellow2014explaining,Papernot16,liu2016delving}, or is physically realizable~\cite{CarliniUsenixSecurity16,sharif2016accessorize}.  For instance, Xu et al.~\cite{XuEvasionMalware} proposed a general evasion attack to search for a malware variant that preserves the malicious behavior of the malware but is classified as benign by the classifier (e.g., PDFrate~\cite{SmutzPDFRate} or Hidost~\cite{SrndicHidost}).
Szegedy et al.~\cite{szegedy2013intriguing} observed that deep neural networks would misclassify an image after we inject a small amount of noise that is imperceptible to human. Sharif et al.~\cite{sharif2016accessorize} showed that an attacker can inject human-imperceptible noise to a face image to evade recognition or impersonate another individual, and the noise can be physically realized by the attacker wearing a pair of customized eyeglass frames. 
Moreover, evasion attacks can be even black-box, i.e., when the attacker does not know the classification model. This is because an adversarial example optimized for one model is highly likely to be effective for other models, which is known as \emph{transferability}~\cite{szegedy2013intriguing,liu2016delving,PracticalBlackBox17}. 

We note that Papernot et al.~\cite{Papernot16Distillation} proposed a distillation technique to defend against evasion attacks to deep neural networks. However, Carlini and Wagner~\cite{CarliniDistillation06} demonstrated that this distillation technique is not as secure to new evasion attacks as we thought. Cao and Gong~\cite{region} found that adversarial examples, especially those generated by the attacks proposed by Carlini and Wagner, are close to the classification boundary. Based on the observation, they proposed \emph{region-based classification}, which ensembles information in the neighborhoods around a testing example (normal or adversarial) to predict its label. Specifically, for a testing example, they sample some examples around the testing example in the input space and take a majority vote among the labels of the sampled examples as the label of the testing example. Such region-based classification significantly enhances the robustness of deep neural networks against various evasion attacks, without sacrificing classification accuracy on normal examples at all. In particular, an evasion attack needs to add much larger noise in order to construct adversarial examples that successfully evade region-based classifiers.

\myparatight{Model inversion attacks} In these attacks~\cite{fredrikson2014privacy,fredrikson2015model}, an attacker aims to leverage model predictions to compromise user privacy. 
For instance, Fredrikson et al.~\cite{fredrikson2014privacy} demonstrated that model inversion attacks can infer an individual's private genotype information. 
Furthermore, via considering confidence scores of model predictions~\cite{fredrikson2015model}, 
model inversion attacks can estimate whether a respondent in a lifestyle survey
admitted to cheating on its significant other  and can recover recognizable images of
people's faces given their name and access to the model. Several studies~\cite{membershipInfer,membershipInferCCS17,membershipLocation} demonstrated even stronger attacks,
 i.e., an attacker can infer whether a particular instance was in the training dataset or not.   



\myparatight{Model extraction attacks} These attacks aim to steal parameters of an ML model. Stealing model parameters compromises the intellectual property and algorithm confidentiality of the learner, and also enables an attacker to perform evasion attacks or model inversion attacks subsequently~\cite{tramer2016stealing}. 
Lowd and Meek~\cite{lowd2005adversarial} presented efficient algorithms to steal model parameters of {linear classifiers} when the attacker can issue membership queries to the model through an API. 
Tram\`{e}r et al.~\cite{tramer2016stealing} demonstrated that model parameters can be more accurately and efficiently extracted when the API also produces confidence scores for the class labels. 

\section{Background and Problem Definition}
\label{back_prob}

\subsection{Key Concepts in Machine Learning}
\label{sec:back}
We introduce some key concepts in machine learning (ML). 
In particular, we discuss supervised learning, which is the focus of this work. 
We will represent vectors and matrices as bold lowercase and uppercase symbols, respectively. 
For instance, $\mathbf{x}$ is a vector while $\mathbf{X}$ is a matrix. $x_i$ denotes the $i$th element of the vector $\mathbf{x}$.
We assume all vectors are column vectors in this paper. $\mathbf{x}^T$ (or $\mathbf{X}^T$) is the transpose of 
$\mathbf{x}$ (or $\mathbf{X}$).

\myparatight{Decision function} Supervised ML aims to learn a decision function $f$, which takes an instance as input and produces label of the instance. The instance is represented by a feature vector; the label can be continuous value (i.e., regression problem) or categorical value (i.e., classification problem). The decision function is characterized by certain parameters, which we call \emph{model parameters}. 
For instance, for a linear regression problem, the decision function is $f(\mathbf{x})$=$\mathbf{w}^T\mathbf{x}$, where $\mathbf{x}$ is the instance and $\mathbf{w}$ is the model parameters. For kernel regression problem, the decision function is $f(\mathbf{x})$=$\mathbf{w}^T\phi(\mathbf{x})$, where $\phi$ is a kernel mapping function that maps an instance to a point in a high-dimensional space.   Kernel methods are often used to make a linear model nonlinear. 

\myparatight{Learning model parameters in a decision function} An ML \emph{algorithm} is a computational procedure to determine the model parameters in a decision function from a given training dataset. Popular ML algorithms include ridge regression~\cite{hoerl1970ridge}, logistic regression~\cite{hosmer2013applied},  SVM~\cite{cortes1995support}, and neural network~\cite{goodfellow2016deep}.

\emph{Training dataset.} Suppose the learner is given $n$ instances $\mathbf{X} =\{ \mathbf{x}_i\}_{i=1}^n \in \mathbb{R}^{m \times n}$. 
For each instance $\mathbf{x}_i$, we have a label $y_i$, where $i=1,2,\cdots,n$. $y_i$ takes continuous value for regression problems and categorical value for classification problems. For convenience, we denote $\mathbf{y} = \{y_i\}_{i=1}^n$. $\mathbf{X}$ and $\mathbf{y}$ form the training dataset.

\emph{Objective function.} Many ML algorithms determine the model parameters via minimizing a certain \emph{objective function} over the training dataset. 
An objective function often has the following forms:
\begin{align}
\text{\bf Non-kernel algorithms: } &\mathcal{L}(\mathbf{w}) = \text{L}(\mathbf{X}, \mathbf{y}, \mathbf{w}) +  \lambda \text{R}(\mathbf{w}) \nonumber\\
\text{\bf Kernel algorithms: } &\mathcal{L}(\mathbf{w}) = \text{L}(\phi(\mathbf{X}), \mathbf{y}, \mathbf{w}) +  \lambda \text{R}(\mathbf{w}), \nonumber
\end{align} 
where $\text{L}$ is called \emph{loss function}, $\text{R}$ is called \emph{regularization term}, $\phi$ is a kernel mapping function (i.e., $\phi(\mathbf{X})=\{\phi(\mathbf{x}_i)\}_{i=1}^n$), and $\lambda>0$ is called \emph{hyperparameter}, which is used to balance between the loss function and the regularization term. 
Non-kernel algorithms include linear algorithms and nonlinear neural network algorithms.
In ML theory, the regularization term is used to prevent overfitting. 
Popular regularization terms include $L_1$ regularization (i.e., $\text{R}(\mathbf{w})$=$||\mathbf{w}||_1=\sum_{i} |{w}_i|)$ and $L_2$ regularization (i.e., $\text{R}(\mathbf{w})$=$||\mathbf{w}||_2^2=\sum_i {w}_i^2)$. 
\emph{We note that, some ML algorithms use more than one regularization terms and thus have multiple hyperparameters. Although we focus on ML algorithms with one hyperparameter in the main text of this paper for conciseness, our attacks are applicable to more than one hyperparameter and we show an example in Appendix~\ref{app:enet}.} 

\begin{table}[t]
\centering
\caption{Loss functions and regularization terms of various ML algorithms we study in this paper.}
\label{sum_ml}
\begin{tabular}{|c|c|c|c|}
\hline
  Category        &       {ML Algorithm} & Loss Function & Regularization \\ \hline
\multirow{4}{*}{Regression} & RR & Least Square & $L_2$ \\ \cline{2-4} 
                  &        LASSO & Least Square & $L_1$ \\ \cline{2-4} 
                  &      ENet  & Least Square & $L_2 + L_1$ \\ \cline{2-4} 
                  & KRR       &  Least Square & $L_2$ \\ \hline
           
\multirow{4}{*}{Logistic Regression} & L2-LR & Cross Entropy & $L_2$ \\ \cline{2-4} 
                  &    L1-LR & Cross Entropy & $L_1$ \\ \cline{2-4} 
                  
                  & L2-KLR  & Cross Entropy & $L_2$ \\ \cline{2-4} 
                  
                  & L1-BKLR & Cross Entropy & $L_1$ \\ \hline

\multirow{4}{*}{SVM} & SVM-RHL & Regular Hinge Loss & $L_2$ \\ \cline{2-4} 
                  &  SVM-SHL & Square Hinge Loss & $L_2$ \\ \cline{2-4} 
                  & KSVM-RHL & Regular Hinge Loss & $L_2$ \\ \cline{2-4} 
                  & KSVM-SHL & Square Hinge Loss & $L_2$ \\ \hline

\multirow{2}{*}{Neural Network} & Regression & Least Square & $L_2$ \\ \cline{2-4} 
                  &  Classification & Cross Entropy & $L_2$ \\ \hline
                  
\end{tabular} 
\small
\end{table}

An ML algorithm minimizes the above objective function for a given training dataset and a given hyperparameter, to get the model parameters $\mathbf{w}$, i.e., $\mathbf{w}=\argmin \mathcal{L}(\mathbf{w})$. 
The learnt model parameters are a \emph{minimum} of the objective function. $\mathbf{w}$ is a minimum if the objective function has larger values at the points near $\mathbf{w}$. 
Different ML algorithms adopt different loss functions and different regularization terms. For instance, ridge regression uses \emph{least-square} loss function and $L_2$ regularization term. Table~\ref{sum_ml} lists the loss function and regularization term used by popular ML algorithms that we consider in this work. 
As we will demonstrate, these different loss functions and regularization terms have different security properties against our hyperparameter stealing attacks. 

For kernel algorithms, the model parameters are in the form $\mathbf{w} = \sum_{i=1}^n \alpha_i \phi(\mathbf{x}_i)$. In other words, the model parameters are a linear combination of  the kernel mapping of the training instances. Equivalently, we can represent model parameters using the parameters $\mathbf{\bm \alpha}=\{\alpha_i\}_{i=1}^n$ for kernel algorithms.



\myparatight{Learning hyperparameters via cross-validation} Hyperparameter is a key parameter in machine learning systems; a good hyperparameter  makes it possible to learn a model that has good generalization performance on testing dataset. In practice, hyperparameters are often determined via cross-validation~\cite{hsu2003practical}. A popular cross-validation method is called \emph{$K$-fold} cross-validation. Specifically, we can divide the training dataset into $K$ folds. Suppose we are given a hyperparameter. For each fold, we learn the model parameters using the remaining $K-1$ folds as a training dataset and tests the model performance on the fold. Then, we average the  performance over the $K$ folds. The hyperparameter is determined in a search process such that the average  performance in the cross-validation is maximized. Learning hyperparameters is much more computationally expensive than learning model parameters with a given hyperparameter because the former involves many trials of learning model parameters. 
Figure~\ref{learningflow} illustrates the process to learn hyperparameters and model parameters.

\myparatight{Testing performance of the decision function}  We often use a testing dataset to measure performance of the learnt model parameters.  
Suppose the testing dataset consists of $\{\mathbf{x}_i^{test}\}_{i=1}^{n^{test}}$, whose labels are $\{{y}_i^{test}\}_{i=1}^{n^{test}}$, respectively. 
For regression, the performance is often measured by \emph{mean square error (MSE)}, which is defined as $\textrm{MSE} = \frac{1}{n^{test}} \sum_{i=1}^{n^{test}} ( {y}_i^{test} - f(\mathbf{x}_i^{test}))^2$. 
 For classification, the performance is often measured by \emph{accuracy (ACC)}, which is defined as $\textrm{ACC} = \frac{1}{n^{test}} \sum_{i=1}^{n^{test}} I({y}_i^{test} = f(\mathbf{x}_i^{test}))$, 
where $I$ is 1 if ${y}_i^{test} = f(\mathbf{x}_i^{test})$, otherwise it is 0. 
A smaller {MSE} or a higher {ACC} implies better model parameters.

\begin{figure}[!t]
\center
{\includegraphics[width=0.45\textwidth]{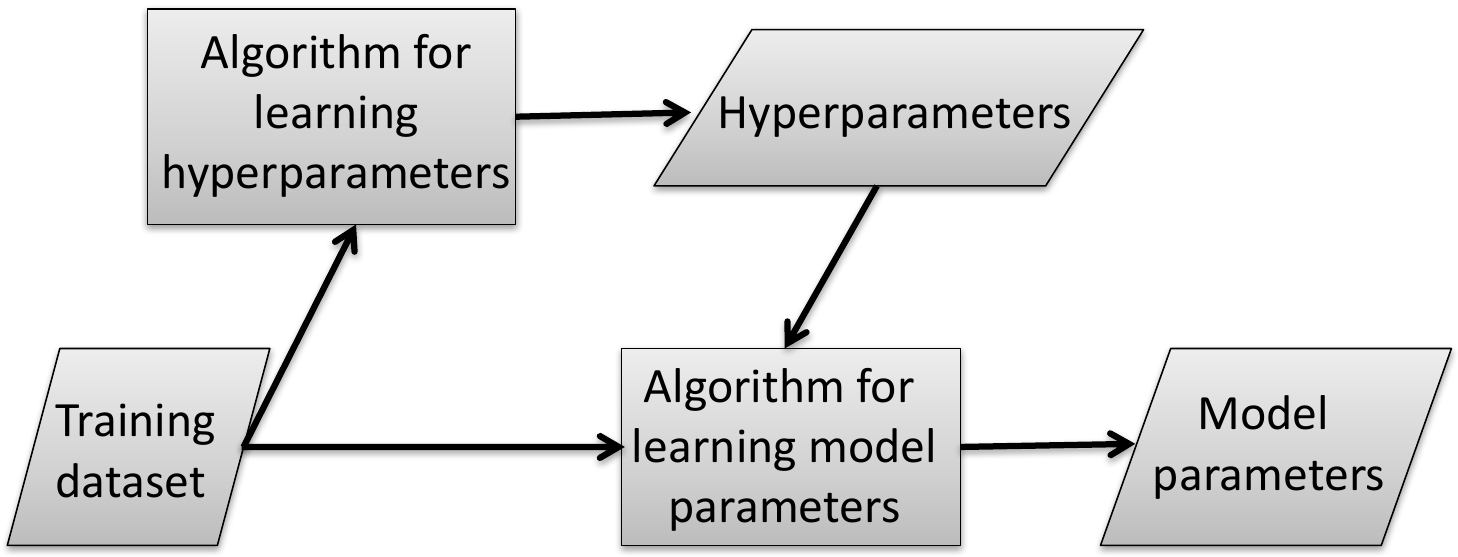}} 
\caption{Key concepts of a machine learning system.}
\label{learningflow}
\vspace{-6mm}
\end{figure}

\subsection{Problem Definition}
\label{prob_def}

\myparatight{Threat model} We assume the attacker knows the training dataset, the ML algorithm, and (optionally) the learnt model parameters. 
Our threat model is motivated by machine-learning-as-a-service (MLaaS)~\cite{BigML,amazon,google,ms}. MLaaS is an emerging technology to aid users, who have limited computing power and machine learning expertise, to learn an ML model over large datasets. Specifically, a user uploads the training dataset to an MLaaS platform and specifies an ML algorithm. The MLaaS platform uses proprietary algorithms to learn the hyperparameters, then learns the model parameters, and finally certain MLaaS platforms (e.g., BigML~\cite{BigML}) allow the user to download the model parameters to use them locally. 
Attackers could be such users. {We stress that when the model parameters are unknown, our attacks are still applicable as we demonstrate in Section~\ref{unknownparameter}. Specifically,
the attacker can first use model parameter stealing attacks~\cite{tramer2016stealing} to learn them and then perform our attacks}. We note that various MLaaS platforms--such as Amazon Machine Learning and Microsoft Azure Machine Learning--make the ML algorithm public. Moreover, for black-box MLaaS platforms such as Amazon Machine Learning and Microsoft Azure Machine Learning, prior model parameter stealing attacks~\cite{tramer2016stealing} are applicable.


We define \emph{hyperparameter stealing attacks} as follows:
\alan{
\begin{definition}[Hyperparameter Stealing Attacks] Suppose an ML algorithm learns model parameters via minimizing an objective function that is in the form of loss function + $\lambda\ \times$ regularization term. Given the ML algorithm, the training dataset, and (optionally) the learnt model parameters, hyperparameter stealing attacks aim to estimate the hyperparameter value in the objective function.
\end{definition}
}



\myparatight{Application scenario} One application of our hyperparameter stealing attacks is that a user can use our attacks to learn a model via MLaaS with much less computations (thus much less economical costs), while not sacrificing the model's testing performance. Specifically, the user can sample a small fraction of the training dataset, learns the model parameters through MLaaS, steals the hyperparameter using our attacks, and re-learns the model parameters via  MLaaS using the entire training dataset and the stolen hyperparameter. We will demonstrate this application scenario in Section~\ref{attackresults} via simulations and Amazon Machine Learning.

\section{Hyperparameter Stealing Attacks}
\label{hyper_attack}

We first introduce our general attack framework. 
Second, we use several regression and classification algorithms as examples to illustrate 
how we can use our framework to steal hyperparameters for specific ML algorithms, 
and we show results of more algorithms in Appendix~\ref{app:other}. 

\subsection{Our Attack Framework}
Our goal is to steal the hyperparameters in an objective function. For an ML algorithm that uses such hyperparameters, 
 the learnt model parameters are often a minimum of the objective function (see the background knowledge in Section~\ref{sec:back}). 
Therefore, the \emph{gradient} of the objective function at the learnt model parameters should be $\mathbf{0}$, which encodes the relationships between the learnt model parameters and the hyperparameters. We leverage this key observation to steal hyperparameters. 

\myparatight{Non-kernel algorithms} We compute the gradient of the objective function at the model parameters $\mathbf{w}$ and set it to be $\mathbf{0}$. Then, we have
\begin{align}
\label{attack_lin_soln}
\frac{\partial \mathcal{L}(\mathbf{w})}{\partial \mathbf{w}} = \mathbf{b} + \lambda \mathbf{a}= \mathbf{0},
\end{align}
where vectors $\mathbf{b}$ and $\mathbf{a}$ are defined as follows:
\begin{small} 
\begin{align}
\mathbf{b} = \left[ 
\begin{array}{c}
 \frac{ \partial \text{L}(\mathbf{X}, \mathbf{y}, \mathbf{w})}{\partial {w}_1} \\
\frac{ \partial \text{L}(\mathbf{X}, \mathbf{y}, \mathbf{w})}{\partial {w}_2} \\
\vdots \\
 \frac{ \partial \text{L}(\mathbf{X}, \mathbf{y}, \mathbf{w})}{\partial {w}_m} \\
\end{array}
\right], \, 
\mathbf{a} = \left[
\begin{array}{c}
\frac{\partial \text{R}(\mathbf{w})}{\partial w_1} \\
\frac{\partial \text{R}(\mathbf{w})}{\partial w_2} \\
\vdots \\
\frac{\partial \text{R}(\mathbf{w})}{\partial w_m} \\
\end{array}
\right].
\end{align}
\end{small}

First, Eqn.~\ref{attack_lin_soln} is a \emph{system of linear equations} about the hyperparameter $\lambda$. 
Second, in this system, the number of equations is more than the number of unknown variables (i.e., hyperparameter in our case). Such system is called 
an \emph{overdetermined system} in statistics and mathematics. 
We adopt the \emph{linear least square} method~\cite{montgomery2015introduction}, a popular method to find an approximate solution to an overdetermined system, 
to solve the hyperparameter in  Eqn.~\ref{attack_lin_soln}. More specifically, we estimate the hyperparameter as follows:
\begin{align}
\label{fianl_est}
\hat{\lambda} = -(\mathbf{a}^T \mathbf{a})^{-1}{\mathbf{a}^T \mathbf{b}}.
\end{align}

\myparatight{Kernel algorithms} Recall that, for kernel algorithms, the model parameters are a linear combination of the kernel mapping of the training instances, i.e., $\mathbf{w} = \sum_{i=1}^n \alpha_i \phi(\mathbf{x}_i)$. Therefore, the model parameters can be equivalently represented by the parameters $\mathbf{\bm \alpha}=\{\alpha_i\}_{i=1}^n$. 
We replace the variable $\mathbf{w}$ with $\sum_{i=1}^n \alpha_i \phi(\mathbf{x}_i)$ in the objective function, compute the gradient of the objective function with respect to $\mathbf{\bm \alpha}$, and set the gradient to $\mathbf{0}$. Then, we will obtain an overdetermined system. After solving the system with linear least square method, we again estimate the hyperparameter using Eqn.~\ref{fianl_est} with the vectors $\mathbf{b}$ and $\mathbf{a}$ re-defined as follows:
\begin{small} 
\begin{align}
\mathbf{b} = \left[ 
\begin{array}{c}
 \frac{ \partial \text{L}(\phi(\mathbf{X}), \mathbf{y}, \mathbf{w})}{\partial {\alpha}_1} \\
 \frac{ \partial \text{L}(\phi(\mathbf{X}), \mathbf{y}, \mathbf{w})}{\partial {\alpha}_2} \\
\vdots \\
 \frac{ \partial \text{L}(\phi(\mathbf{X}), \mathbf{y}, \mathbf{w})}{\partial {\alpha}_n} \\
\end{array}
\right], \, 
\mathbf{a} = \left[
\begin{array}{c}
\frac{\partial \text{R}(\mathbf{w})}{\partial \alpha_1} \\
\frac{\partial \text{R}(\mathbf{w})}{\partial \alpha_2} \\
\vdots \\
\frac{\partial \text{R}(\mathbf{w})}{\partial \alpha_n} \\
\end{array}
\right].
\end{align}
\end{small}


\myparatight{Addressing non-differentiability} Using Eqn.~\ref{fianl_est} still faces two more challenges: 1) the objective function might not be differentiable at certain dimensions of the model parameters $\mathbf{w}$ (or $\mathbf{\bm \alpha}$), and 2) the objective function might not be differentiable at certain training instances for the learnt model parameters. For instance, objective functions with $L_1$ regularization are not differentiable at the dimensions where the learnt model parameters are 0, while the objective functions in SVMs might not be differentiable for certain training instances.  We address the challenges via constructing the vectors $\mathbf{a}$ and $\mathbf{b}$ using the dimensions and training instances at which the objective function is differentiable. Note that using less dimensions of the model parameters is equivalent to using less equations in the overdetermined system shown in Eqn.~\ref{attack_lin_soln}. Once we have at least one dimension of the model parameters and one training instance at which the objective function is differentiable, we can estimate the hyperparameter.

\myparatight{Attack procedure}  We summarize our hyperparameter stealing attacks in the following two steps:
\begin{packeditemize}
\item {\bf Step I.} The attacker computes the vectors $\mathbf{a}$ and $\mathbf{b}$ for a given training dataset, a given ML algorithm, and the learnt model parameters.
\item {\bf Step II.} The attacker estimates the hyperparameter using Eqn.~\ref{fianl_est}.
\end{packeditemize}

\myparatight{More than one hyperparameter}
We note that, for conciseness, we focus on ML algorithms whose objective functions have a single hyperparameter in the main text of this paper. However, our attack framework is applicable and can be easily extended to ML algorithms that use more than one hyperparameter. Specifically, we can still estimate the hyperparameters using Eqn.~\ref{fianl_est} with the vector $\mathbf{a}$ expanded to be a matrix, where each column corresponds to the gradient of a regularization term with respect to the model parameters. We use an example ML algorithm, i.e., Elastic Net~\cite{zou2005regularization}, with two hyperparameters to illustrate our attacks in Appendix~\ref{app:enet}.

Next, we use various popular regression and classification algorithms to illustrate our attacks. 
In particular, we will discuss how we can compute the vectors $\mathbf{a}$ and $\mathbf{b}$. 
We will focus on linear and kernel ML algorithms for simplicity,  
and we will show  results on  neural networks in Appendix~\ref{neuralNet}.  
We note that the ML algorithms we study are widely deployed by MLaaS.  For instance, logistic regression is deployed by Amazon Machine Learning, Microsoft Azure Machine Learning, BigML, etc.; SVM is employed by Microsoft Azure Machine Learning, Google Cloud Platform, and PredictionIO. 

\subsection{Attacks to Regression Algorithms} 
\label{attack_cvx_reg}
\subsubsection{Linear Regression Algorithms} 
\label{cvx_reg_lin}

We demonstrate our attacks to popular linear regression algorithms including \emph{Ridge Regression (RR)}~\cite{hoerl1970ridge} and \emph{LASSO}~\cite{tibshirani1996regression}.
Both algorithms use \emph{least square} loss function, and their regularization terms are $L_2$ and $L_1$, respectively. 
Due to the limited space, we show attack details for RR, and the details for LASSO are shown in Appendix~\ref{app:other}.
The objective function of RR is $\mathcal{L}(\mathbf{w}) = \| \mathbf{y}-\mathbf{X}^T \mathbf{w} \|_2^2 + \lambda \| \mathbf{w} \|_2^2$.
 We compute the gradient of the objective function with respect to $\mathbf{w}$, and we have $\frac{\partial \mathcal{L}(\mathbf{w})}{\partial \mathbf{w}} = -2 \mathbf{X y} + 2 \mathbf{X}\mathbf{X}^T\mathbf{w} + 2 \lambda \mathbf{w}$.
By setting the gradient to be $\mathbf{0}$, we can estimate $\lambda$ using Eqn.~\ref{fianl_est} with $\mathbf{a} = \mathbf{w}$ and $\mathbf{b} = \mathbf{X} (\mathbf{X}^T\mathbf{w} - \mathbf{y})$.

\subsubsection{Kernel Regression Algorithms}
\label{cvx_reg_nonlin}

We use \emph{kernel ridge regression (KRR)}~\cite{vovk2013kernel} as an example to illustrate our attacks. Similar to linear RR, KRR uses least square loss function and $L_2$ regularization.
After we represent the model parameters $\mathbf{w}$ using $\bm{\alpha}$, the objective function of KRR is $\mathcal{L}(\bm{\alpha})= \| \mathbf{y} - \mathbf{K} \bm{\alpha} \|_2^2 + \lambda \bm{\alpha}^T \mathbf{K} \bm{\alpha}$, 
where matrix $\mathbf{K}=\phi(\mathbf{X})^T \phi(\mathbf{X})$, whose $(i,j)$th entry is $ \phi(\mathbf{x}_i)^T \phi(\mathbf{x}_j)$. 
In machine learning, $\mathbf{K}$ is called \emph{gram matrix} and is positive definite. 
By computing the gradient of the objective function  with respect to $\bm{\alpha}$ and setting it to be \textbf{0}, we have $\mathbf{K}(\lambda \bm{\alpha} + \mathbf{K} \bm{\alpha} - \mathbf{y}) = \mathbf{0}$. 
$\mathbf{K}$ is invertible as it is positive definite. Therefore, we multiply both sides of the above equation with $\mathbf{K}^{-1}$. Then, we can estimate $\lambda$ using Eqn.~\ref{fianl_est} with 
$\mathbf{a} = \bm{\alpha}$ and $\mathbf{b} = \mathbf{K} \bm{\alpha} - \mathbf{y}$. 
Our attacks are applicable to any kernel function. In our experiments, we will adopt the widely used Gaussian kernel.

\subsection{Attacks to Classification Algorithms}
\label{attack_cvx_clf}


\subsubsection{Linear Classification Algorithms}
\label{cvx_clf_lin}

We demonstrate our attacks to four popular linear classification algorithms: \emph{support vector machine with regular hinge loss function (SVM-RHL)},
 \emph{support vector machine with squared hinge loss function (SVM-SHL)},  \emph{$L_1$-regularized logistic regression (L1-LR)},
and \emph{$L_2$-regularized logistic regression (L2-LR)}. These four algorithms enable us to compare different regularization terms  and different loss functions. 
For simplicity, we show attack details for L1-LR, and defer details for other algorithms to Appendix~\ref{app:other}. L1-LR enables us to illustrate how we address the challenge where the objective function is not differentiable at certain dimensions of the model parameters.  

We focus on binary classification, since multi-class classification is often transformed to multiple binary classification problems via the \emph{one-vs-all} paradigm. However, our attacks are also applicable to multi-class classification algorithms such as \emph{multi-class support vector machine}~\cite{chang2011libsvm} and \emph{multi-class logistic regression}~\cite{chang2011libsvm} that use hyperparameters in their objective functions. For binary classification, each training instance has a label $y_i\in \{1,0\}$.

The objective function of L1-LR is $\mathcal{L}(\mathbf{w}) = \text{L}(\mathbf{X}, \mathbf{y}, \mathbf{w}) + \lambda \| \mathbf{w} \|_1$, 
where $\text{L}(\mathbf{X}, \mathbf{y}, \mathbf{w})$=$-\sum_{i=1}^n  ( y_i \, \log h_\mathbf{w}(\mathbf{x}_i) + (1-y_i)$ $ \log (1- h_\mathbf{w}(\mathbf{x}_i)))$ is called \emph{cross entropy} loss function and $h_\mathbf{w}(\mathbf{x})$ is defined to be $\frac{1}{1+\exp{(- \mathbf{w}^T \mathbf{x})}}$. 
The gradient of the objective function is $\frac{\partial \mathcal{L}(\mathbf{w})}{\partial \mathbf{w}} =\mathbf{X} (\mathbf{h}_\mathbf{w}(\mathbf{X})  - \mathbf{y}) + \lambda \text{sign}(\mathbf{w})$,  
where  $\mathbf{h}_\mathbf{w}(\mathbf{X}) = \left[ h_\mathbf{w}(\mathbf{x}_1); h_\mathbf{w}(\mathbf{x}_2); \cdots; h_\mathbf{w}(\mathbf{x}_n)) \right]$ and the $i$th entry $\text{sign}(w_i)$ of the vector $\text{sign}(\mathbf{w})$ is defined as follows:
\begin{small}
\begin{align}
\label{vectors}
\text{sign}(w_i)=\frac{\partial |w_i|}{\partial w_i}=
\begin{cases}
-1 & \text{if } w_i < 0 \\
0 & \text{if } \, w_i = 0 \\
1 & \text{if } w_i > 0
\end{cases}
\end{align}
\end{small}
$| w_i |$ is \emph{not differentiable} when $w_i=0$, so
we define the derivative at $w_i=0$ as 0, which means that we do not use the model parameters that are 0 to estimate the hyperparameter.
Via setting the gradient to be \textbf{0}, we can estimate $\lambda$ using Eqn.~\ref{fianl_est} with
$\mathbf{a} = \text{sign}(\mathbf{w})$ and $\mathbf{b} = \mathbf{X} (\mathbf{h}_\mathbf{w}(\mathbf{X})  - \mathbf{y})$.

\subsubsection{Kernel Classification Algorithms}
\label{cvx_clf_nonlin}

We demonstrate our attacks to the kernel version of the above four linear classification algorithms: \emph{kernel support vector machine with regular hinge loss function (KSVM-RHL)}, \emph{kernel support vector machine with squared hinge loss function (KSVM-SHL)}, \emph{$L_1$-regularized kernel LR (L1-KLR)}, and \emph{$L_2$-regularized kernel LR (L2-KLR)}.
We show attack details for KSVM-RHL, and defer details for the other algorithms in Appendix~\ref{app:other}. KSVM-RHL enables us to illustrate how we can address the challenge where the objective function is non-differentiable for certain training instances. Again, we focus on binary classification. 

The objective function of KSVM-RHL  is $\mathcal{L}(\bm{\alpha}) = \sum_{i=1}^n L(\phi(\mathbf{x}_i), y_i,  \bm{\alpha}) + \lambda \bm{\alpha}^T \mathbf{K} \bm{\alpha}$, 
where $L(\phi(\mathbf{x}_i), y_i,  \bm{\alpha}) = \max(0, 1 - y_i  \bm{\alpha}^T \mathbf{k}_i )$ is called \emph{regular hinge loss function}. $\mathbf{k}_i$ is the $i$th column of the gram matrix $\mathbf{K}=\phi(\mathbf{X})^T \phi(\mathbf{X})$.
The gradient of the loss function with respect to $\mathbf{\bm{\alpha}}$ is: 
\begin{small}
\begin{align}
\frac{\partial L(\phi(\mathbf{x}_i), y_i, \mathbf{\bm \alpha})}{\partial \mathbf{\bm \alpha}} = 
\begin{cases}
-y_i \mathbf{k}_i & \text{if } y_i  \mathbf{\bm \alpha}^T \mathbf{k}_i  < 1 \\
\mathbf{0} & \text{if } y_i  \mathbf{\bm \alpha}^T \mathbf{k}_i   > 1,
\end{cases}
\end{align}
\end{small}
where $L(\phi(\mathbf{x}_i), y_i, \mathbf{\bm \alpha})$ is non-differentiable when $\mathbf{k}_i$ satisfies  $y_i  \mathbf{\bm \alpha}^T \mathbf{k}_i  = 1$.
 We estimate $\lambda$ using $\mathbf{k}_i$ that satisfy $y_i  \mathbf{\bm \alpha}^T \mathbf{k}_i  < 1$.  
Specifically, via setting the gradient of the objective function to be $\mathbf{0}$, we estimate $\lambda$ using Eqn.~\ref{fianl_est} with $\mathbf{a} = 2 \mathbf{K} \bm{\alpha}$ and $\mathbf{b} = \sum_{i=1}^n -y_i \mathbf{k}_i \mathbf{1}_{y_i  \bm{\alpha}^T \mathbf{k}_i < 1}$, where  $\mathbf{1}_{y_i  \bm{\alpha}^T \mathbf{k}_i < 1}$ is an indicator function with value 1 if $ y_i  \bm{\alpha}^T \mathbf{k}_i  < 1 $ and 0 otherwise.

\section{Evaluations}
\subsection{Theoretical Evaluations}
We aim to evaluate the effectiveness of our hyperparameter stealing attacks theoretically. 
In particular, we show that 1) when the learnt model parameters are an exact minimum of the objective function, our attacks can obtain the exact hyperparameter value, and 2) when the model parameters deviate from their closest minimum of the objective function with a small difference, then the estimation error of our attacks is a linear function of the small difference. Specifically, our theoretical analysis can be summarized in the following theorems.

\begin{theorem}
\label{thm_1}
Suppose an ML algorithm learns model parameters via minimizing an \emph{objective function} which is in the form of \emph{loss function} + $\lambda\  \times$ \emph{regularization term}, $\lambda$ is the true hyperparameter value, and the learnt model parameters $\mathbf{w}$ (or $\bm{\alpha}$ for kernel algorithms) are an exact minimum of the objective function. 
Then, our attacks can obtain the exact true hyperparameter value, i.e., $\hat{\lambda} = {\lambda}$.   
\end{theorem}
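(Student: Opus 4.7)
The plan is to proceed by direct algebraic substitution. Since $\mathbf{w}$ (or $\bm{\alpha}$ in the kernel case) is assumed to be an exact minimum of the objective function $\mathcal{L}$, the first-order optimality condition tells us that the gradient vanishes there. By Eqn.~\ref{attack_lin_soln} (which is just $\nabla \mathcal{L} = \mathbf{0}$ rewritten in terms of the loss-gradient vector $\mathbf{b}$ and the regularizer-gradient vector $\mathbf{a}$), this means $\mathbf{b} + \lambda\mathbf{a} = \mathbf{0}$, i.e., $\mathbf{b} = -\lambda\mathbf{a}$. The key point is that the vector $\mathbf{b}$ is \emph{exactly} a scalar multiple of $\mathbf{a}$, so the overdetermined linear system is in fact consistent and has $\lambda$ as its unique solution.

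Next, I would substitute this identity into the estimator defined in Eqn.~\ref{fianl_est}. Assuming $\mathbf{a}\neq\mathbf{0}$ (so that $\mathbf{a}^T\mathbf{a}$ is a nonzero scalar and hence invertible), we compute
\begin{equation*}
\hat{\lambda} = -(\mathbf{a}^T\mathbf{a})^{-1}\mathbf{a}^T\mathbf{b} = -(\mathbf{a}^T\mathbf{a})^{-1}\mathbf{a}^T(-\lambda\mathbf{a}) = \lambda\,(\mathbf{a}^T\mathbf{a})^{-1}(\mathbf{a}^T\mathbf{a}) = \lambda.
\end{equation*}
This one-line computation gives the conclusion in both the non-kernel case (with $\mathbf{a},\mathbf{b}\in\mathbb{R}^m$) and the kernel case (with $\mathbf{a},\mathbf{b}\in\mathbb{R}^n$), since the least-squares formula and the stationarity condition have the same structure in both settings.

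The step I expect to be the main obstacle is handling non-differentiability, e.g., for $L_1$ regularization where $|w_i|$ is not differentiable at $w_i=0$, or for the hinge loss where $L$ is not differentiable at training points with $y_i\bm{\alpha}^T\mathbf{k}_i=1$. As described in the attack-framework section, the attacker restricts the construction of $\mathbf{a}$ and $\mathbf{b}$ to the coordinates/training points where the objective is differentiable. I would argue that at an exact minimum, the subdifferential of $\mathcal{L}$ must contain $\mathbf{0}$, and in particular the partial derivative in each differentiable coordinate must vanish. Hence the identity $\mathbf{b} + \lambda\mathbf{a} = \mathbf{0}$ continues to hold for the restricted vectors, and the computation above goes through unchanged, provided at least one such coordinate (respectively, one such training instance in the kernel case) yields a nonzero entry of $\mathbf{a}$ so that $\mathbf{a}^T\mathbf{a}>0$. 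This non-degeneracy assumption is implicit in the statement that the attack produces an estimate at all, and it is easy to verify case-by-case for the specific ML algorithms listed in Table~\ref{sum_ml}.
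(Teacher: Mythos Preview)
Your proposal is correct and takes essentially the same approach as the paper: substitute $\mathbf{b}=-\lambda\mathbf{a}$ (from exact stationarity) into the least-squares formula $\hat\lambda=-(\mathbf{a}^T\mathbf{a})^{-1}\mathbf{a}^T\mathbf{b}$ to get $\hat\lambda=\lambda$. The paper's proof is just this one-line computation without your additional remarks on non-differentiability and the non-degeneracy of $\mathbf{a}$, so your write-up is in fact more complete.
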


\begin{proof}
See Appendix~\ref{app:thm_1}.
\end{proof}


\begin{theorem}
\label{thm_2}
Suppose an ML algorithm learns model parameters via minimizing an \emph{objective function} which is in the form of \emph{loss function} + $\lambda \ \times$ \emph{regularization term}, $\lambda$ is the true hyperparameter value,  the learnt model parameters are $\mathbf{w}$ (or $\bm{\alpha}$ for kernel algorithms), and $\mathbf{w}^{\star}$ (or $\bm{\alpha}^{\star}$) is the minimum of the objective function that is closest to $\mathbf{w}$ (or $\bm{\alpha}$). We denote $\Delta \mathbf{w} =\mathbf{w}-\mathbf{w}^{\star}$ and $\Delta \mathbf{\bm \alpha}=\bm{\alpha} - \bm{\alpha}^{\star}$. Then, when $\Delta \mathbf{w}\rightarrow \mathbf{0}$ or $\Delta \mathbf{\bm \alpha} \rightarrow \mathbf{0}$,  the difference between the estimated hyperparameter $\hat{\lambda}$ and the true hyperparameter can be bounded as follows:
\begin{align}
&\text{\bf Non-kernel algorithms:} \nonumber \\
&\Delta \hat{\lambda} = \hat{\lambda}-{\lambda} = \Delta \mathbf{w}^T \nabla \hat{\lambda}(\mathbf{w}^{\star}) + O(\| \Delta \mathbf{w}\|_2^2) \\
&\text{\bf Kernel algorithms:} \nonumber \\
&\Delta \hat{\lambda} = \hat{\lambda}-{\lambda} = \Delta \mathbf{\bm \alpha}^T \nabla \hat{\lambda}(\mathbf{\bm \alpha}^{\star}) + O(\| \Delta \mathbf{\bm \alpha}\|_2^2),
\end{align}
where  $\nabla \hat{\lambda}(\mathbf{w}^{\star})$ is the gradient of $\hat{\lambda}$ at  $\mathbf{w}^{\star}$ and $\nabla \hat{\lambda}(\mathbf{\bm \alpha}^{\star})$ is the gradient of $\hat{\lambda}$ at $\mathbf{\bm \alpha}^{\star}$. 
\end{theorem}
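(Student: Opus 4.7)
The plan is to view the estimator $\hat{\lambda}$ given by Eqn.~\ref{fianl_est} as a scalar-valued function of the model parameters, and then to expand this function in a Taylor series around the nearby true minimum $\mathbf{w}^{\star}$ (or $\bm{\alpha}^{\star}$). Theorem~\ref{thm_1} will then pin down the zeroth-order term, and the first-order term will give the stated bound.

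Concretely, for the non-kernel case, observe that both $\mathbf{a}$ and $\mathbf{b}$ as defined in Eqn.~(2) are smooth functions of $\mathbf{w}$ on the subset of coordinates at which the objective is differentiable (for the algorithms in Table~\ref{sum_ml} this is either everywhere, or on an open set containing $\mathbf{w}^{\star}$ once the non-differentiable coordinates have been removed from $\mathbf{a}$ and $\mathbf{b}$ as discussed in Section~\ref{hyper_attack}). Therefore the composition
\begin{equation}
\hat{\lambda}(\mathbf{w}) \;=\; -\bigl(\mathbf{a}(\mathbf{w})^T \mathbf{a}(\mathbf{w})\bigr)^{-1} \mathbf{a}(\mathbf{w})^T \mathbf{b}(\mathbf{w}) \nonumber
\end{equation}
is itself smooth in a neighborhood of $\mathbf{w}^{\star}$, provided $\mathbf{a}(\mathbf{w}^{\star}) \neq \mathbf{0}$ so that $\mathbf{a}^T\mathbf{a}$ is invertible there. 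Assuming this mild regularity, I would apply the multivariate Taylor theorem:
\begin{equation}
\hat{\lambda}(\mathbf{w}) \;=\; \hat{\lambda}(\mathbf{w}^{\star}) \;+\; (\mathbf{w}-\mathbf{w}^{\star})^T \nabla\hat{\lambda}(\mathbf{w}^{\star}) \;+\; O(\|\mathbf{w}-\mathbf{w}^{\star}\|_2^2). \nonumber
\end{equation}

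The next step is to identify $\hat{\lambda}(\mathbf{w}^{\star})$. Because $\mathbf{w}^{\star}$ is, by hypothesis, an exact minimum of the objective function with true hyperparameter $\lambda$, Theorem~\ref{thm_1} applies verbatim at $\mathbf{w}^{\star}$ and yields $\hat{\lambda}(\mathbf{w}^{\star}) = \lambda$. Substituting $\Delta \mathbf{w} = \mathbf{w} - \mathbf{w}^{\star}$ and $\Delta\hat{\lambda} = \hat{\lambda}(\mathbf{w}) - \lambda$ into the Taylor expansion gives exactly the claimed identity for the non-kernel case. The kernel case is identical with $\bm{\alpha}$ in place of $\mathbf{w}$ and $\mathbf{a}$, $\mathbf{b}$ given by the kernel-algorithm definitions in Section~\ref{hyper_attack}.

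The main obstacle, and the step that requires the most care, is the smoothness assumption underlying the Taylor expansion. For algorithms with $L_1$ regularization or a regular hinge loss the map $\mathbf{w}\mapsto \hat{\lambda}(\mathbf{w})$ is only piecewise smooth. One needs to argue that for $\Delta \mathbf{w}$ sufficiently small, $\mathbf{w}$ lies in the same smooth piece as $\mathbf{w}^{\star}$, i.e., the same coordinates of $\mathbf{w}$ remain nonzero and the same training points remain on the same side of the hinge. Under this local constancy of the active set, the indicator/sign patterns in the formulas for $\mathbf{a}$ and $\mathbf{b}$ do not change, so $\hat{\lambda}$ is smooth on a neighborhood of $\mathbf{w}^{\star}$ and the expansion above holds with the stated $O(\|\Delta\mathbf{w}\|_2^2)$ remainder. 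Everything else is a direct application of Taylor's theorem together with Theorem~\ref{thm_1}.
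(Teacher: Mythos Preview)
Your proposal is correct and follows essentially the same approach as the paper: treat $\hat{\lambda}$ as a function of the model parameters, Taylor-expand at $\mathbf{w}^{\star}$ (respectively $\bm{\alpha}^{\star}$), and use Theorem~\ref{thm_1} to identify the zeroth-order term with $\lambda$. If anything, your discussion of the piecewise-smoothness issue for $L_1$ and hinge-loss cases is more careful than the paper's own proof, which simply applies the Taylor expansion without commenting on regularity.
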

\begin{proof}
See Appendix~\ref{app:thm_2}.
\end{proof}

\begin{table}[!t]
\centering
\caption{Datasets.}
\label{data_reg}
\begin{tabular}{|c|c|c|c|}
\hline
 \textbf{Dataset} & \textbf{\#Instances} & \textbf{\#Features} &\textbf{Type}  \\ \hline
 \textbf{Diabetes} &  442 & 10 & \multirow{3}{*}{Regression} \\ \cline{1-3}
\textbf{GeoOrig} & 1059 & 68  & \\ \cline{1-3}
\textbf{UJIIndoor} & 19937 & 529 & \\ \hline \hline
 \textbf{Iris}  &  100  &  4  & \multirow{3}{*}{Classification}  \\ \cline{1-3}
 \textbf{Madelon} & 4400 & 500 &  \\ \cline{1-3}
 \textbf{Bank} & 45210 & 16  & \\ \hline
\end{tabular}
\end{table}



\subsection{Empirical Evaluations}
\label{attackresults}

\begin{figure*}[!t]
\vspace{-4mm}
\center
\subfloat[Diabetes]{\includegraphics[width=0.30\textwidth]{./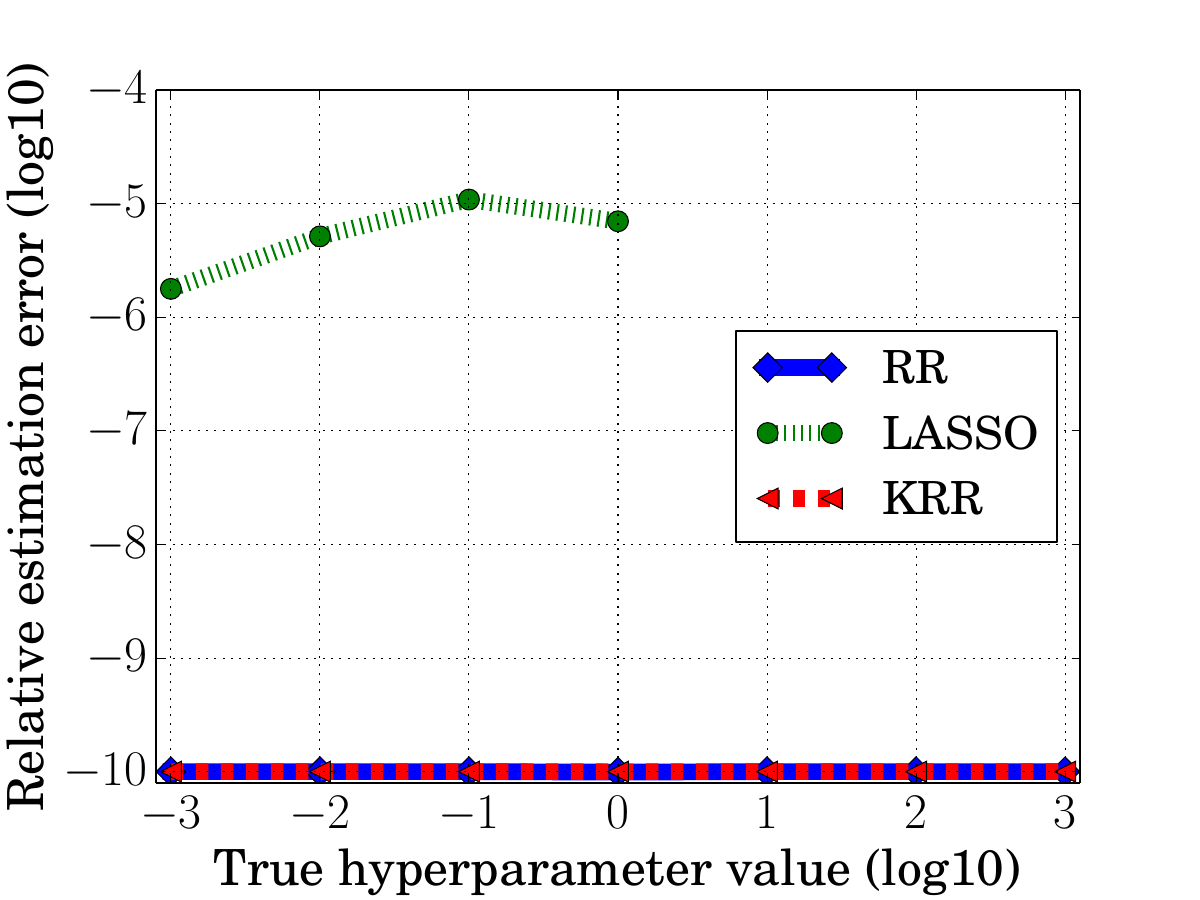}\label{reg_diabetes}} 
\subfloat[GeoOrigin]{\includegraphics[width=0.30\textwidth]{./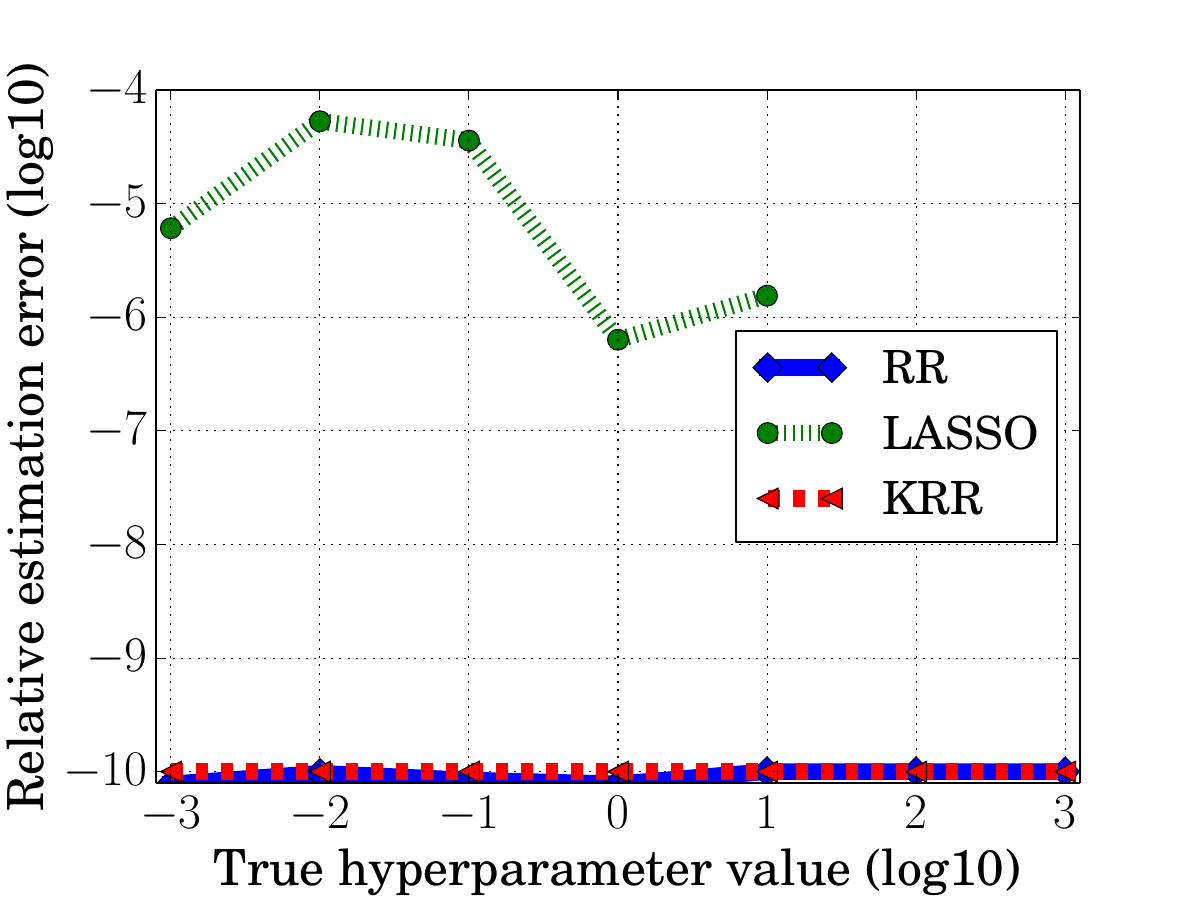}\label{reg_geoorigin}}
\subfloat[UJIIndoor]{\includegraphics[width=0.30\textwidth]{./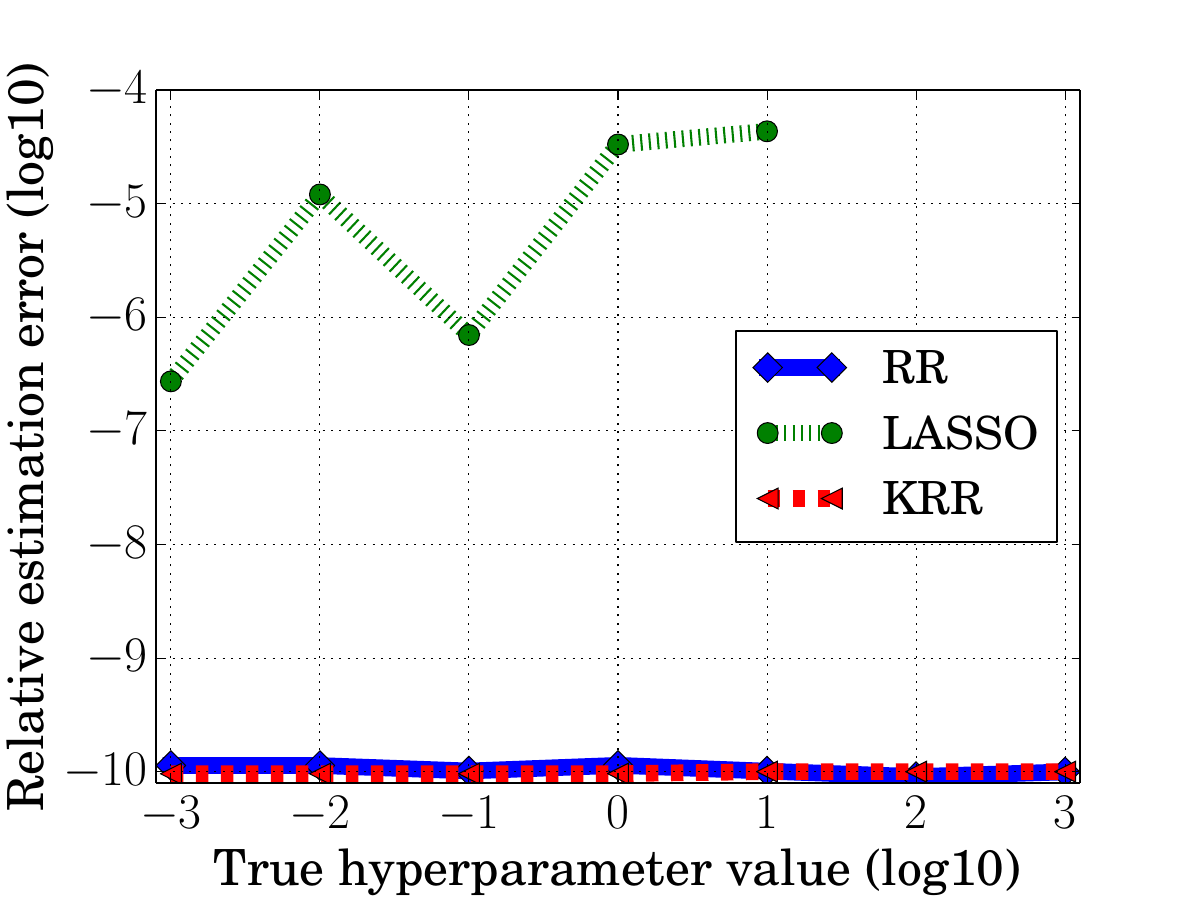}\label{ujiindoor}}
\caption{Effectiveness of our hyperparameter stealing attacks for regression algorithms. 
}
\label{reg_res_attack}
\vspace{-4mm}
\end{figure*}

\begin{figure*}[t]
\center
\subfloat[Iris]{\includegraphics[width=0.30\textwidth]{./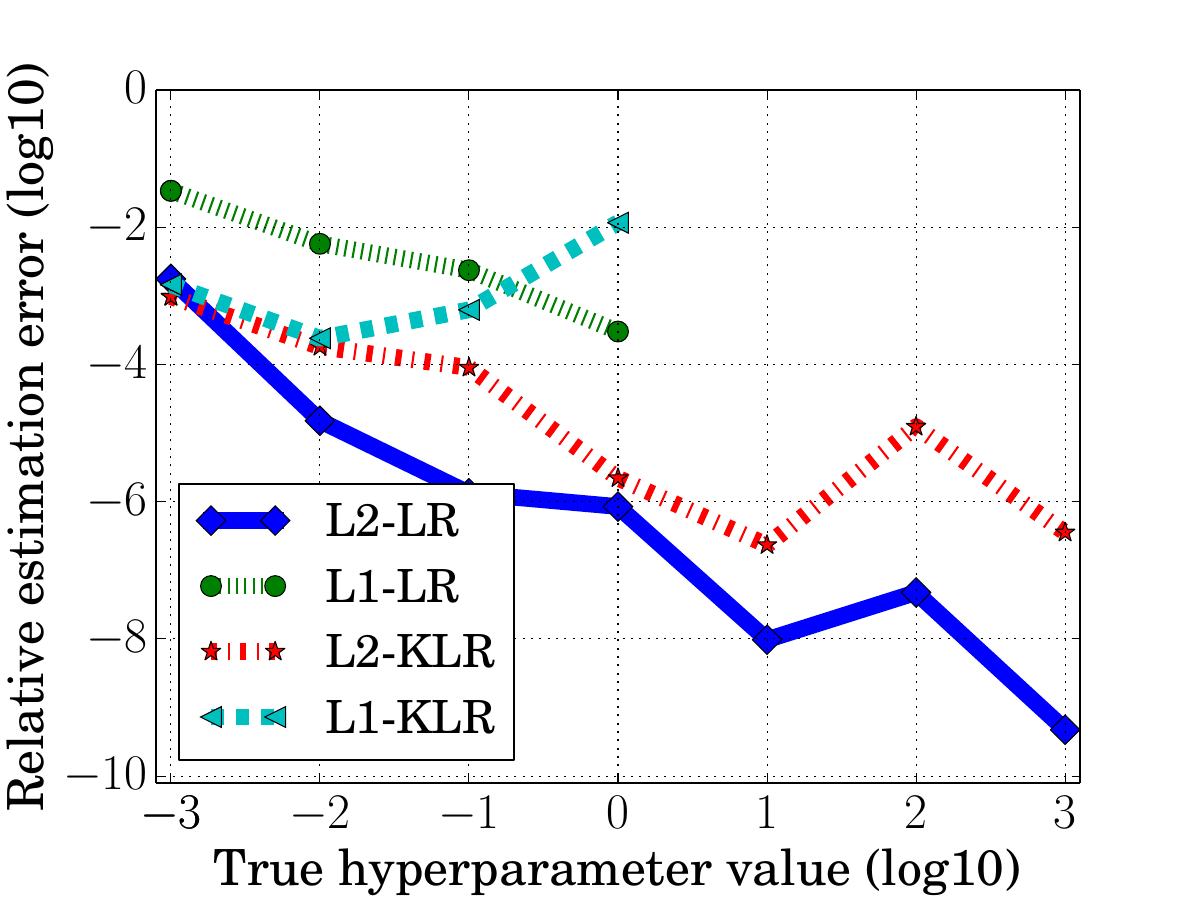}\label{clf_lr_iris}} 
\subfloat[Madelon]{\includegraphics[width=0.30\textwidth]{./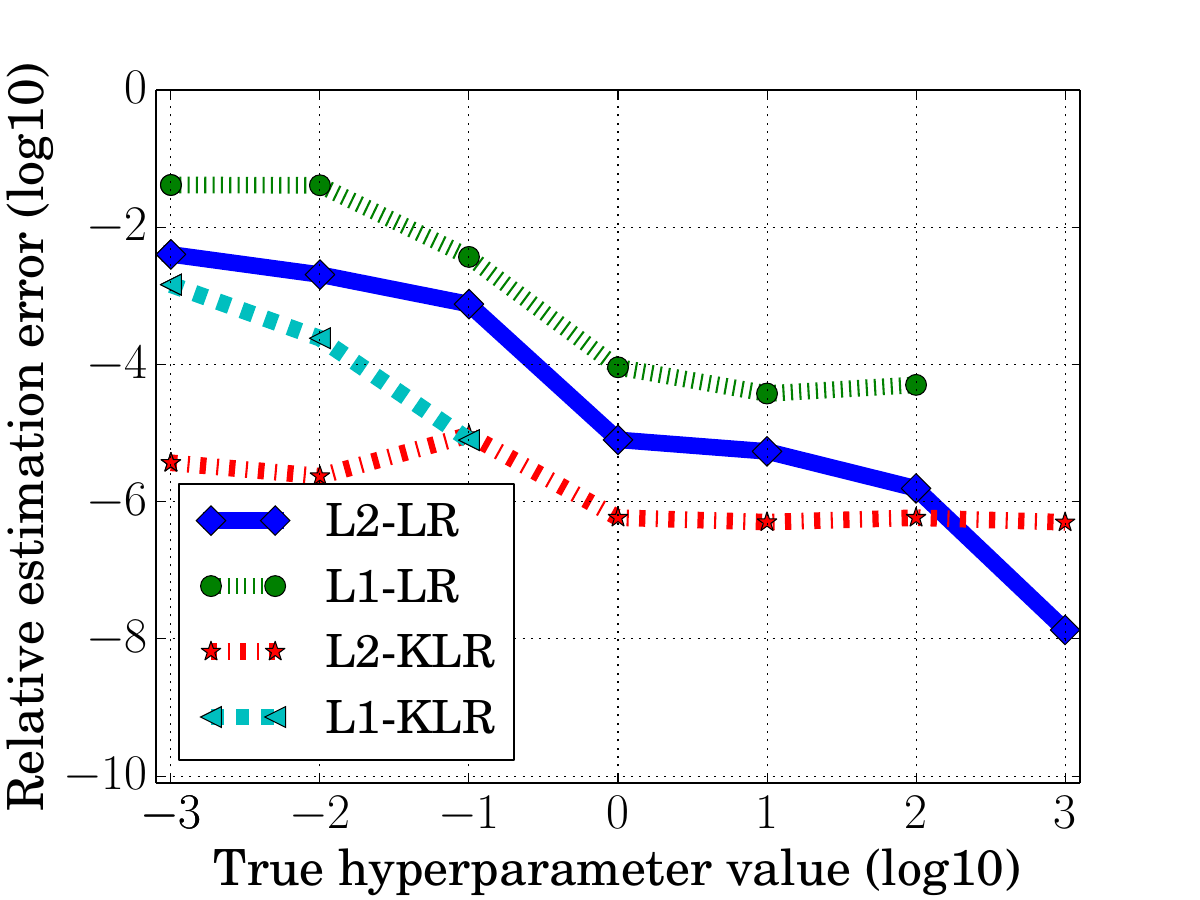}\label{clf_lr_madelon}}
\subfloat[Bank]{\includegraphics[width=0.30\textwidth]{./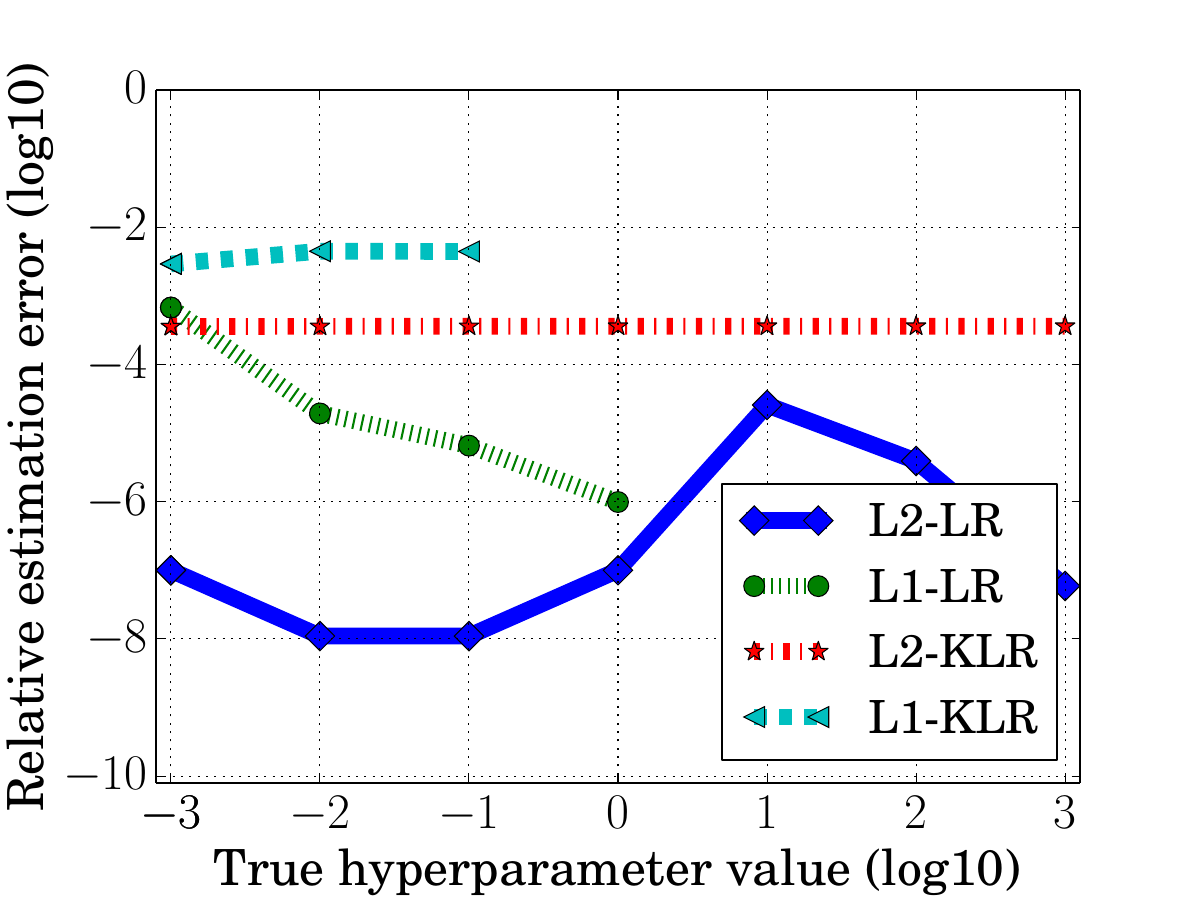}\label{clf_lr_bank}}
\caption{Effectiveness of our hyperparameter stealing attacks for logistic regression classification algorithms.}
\label{clf_res_attack}
\vspace{-4mm}
\end{figure*}

\begin{figure*}[t]
\center
\subfloat[Iris]{\includegraphics[width=0.30\textwidth]{./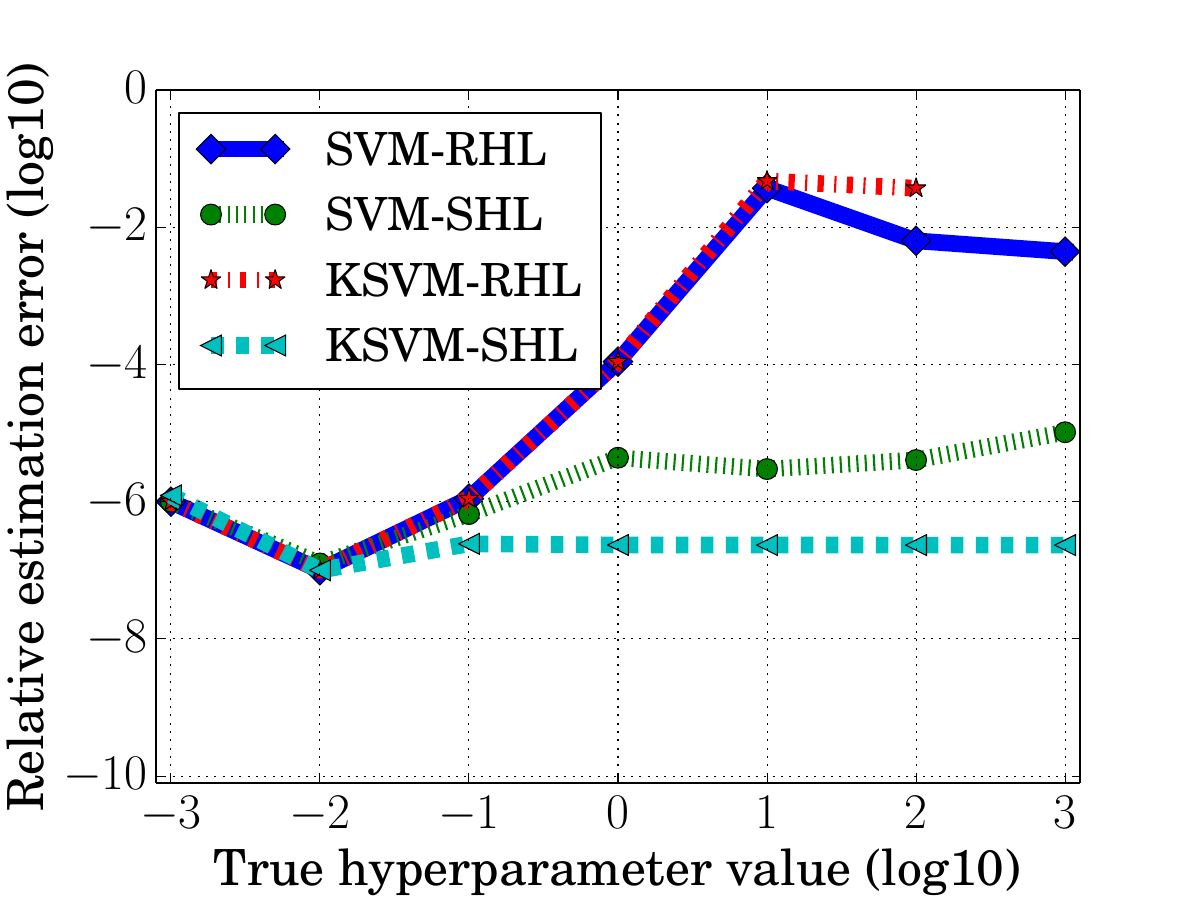}\label{clf_svc_iris}} 
\subfloat[Madelon]{\includegraphics[width=0.30\textwidth]{./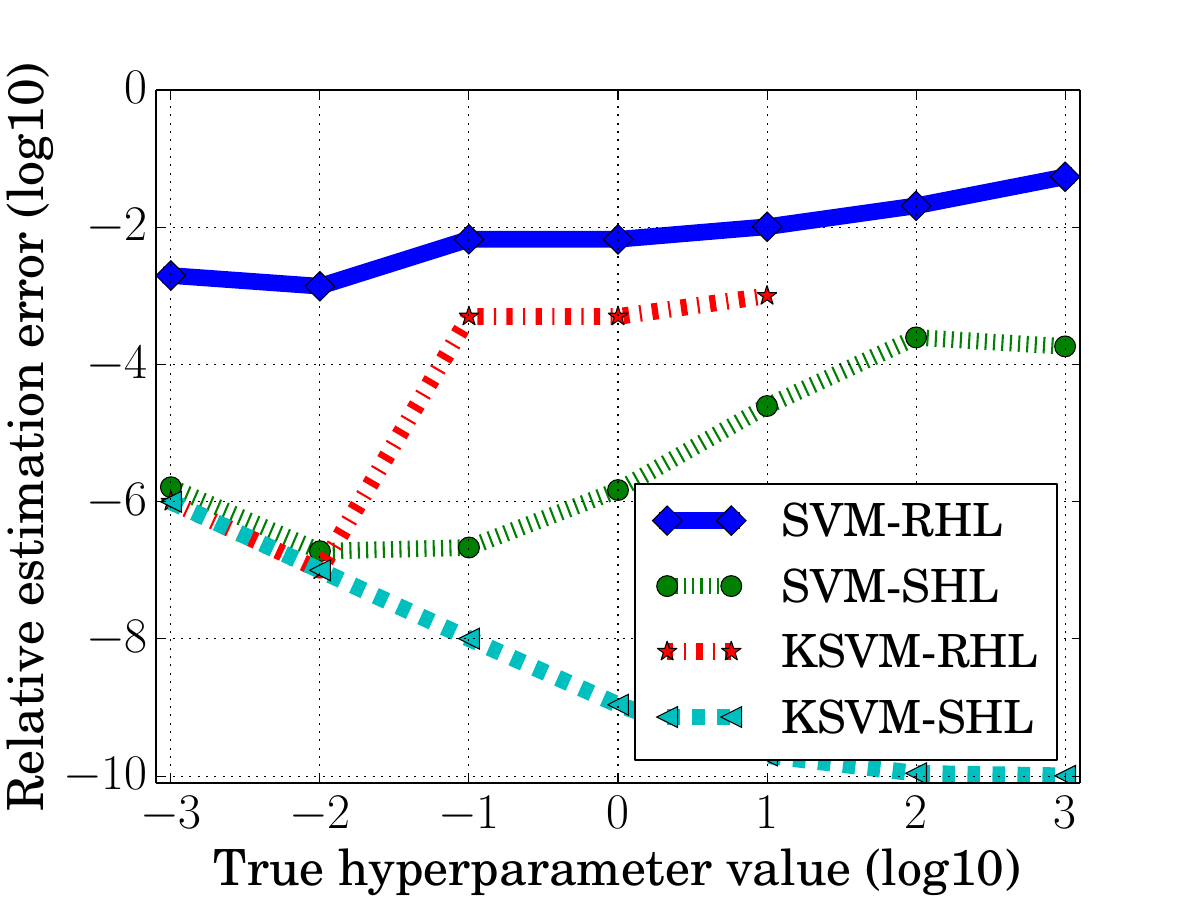}\label{clf_svc_madelon}}
\subfloat[Bank]{\includegraphics[width=0.30\textwidth]{./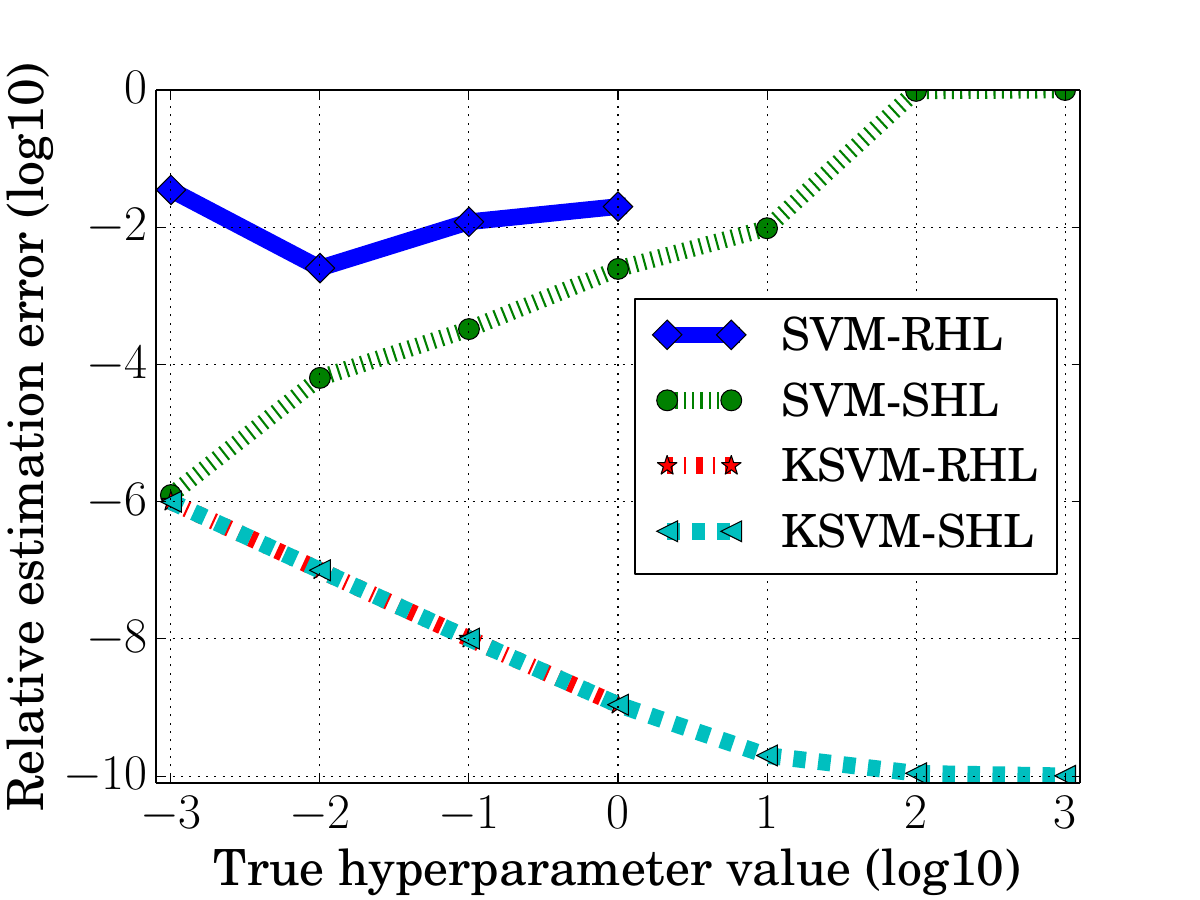}\label{clf_svc_bank}} 
\caption{Effectiveness of our hyperparameter stealing attacks for SVM classification algorithms.}
\label{clf_res_attack_svm}
\vspace{-4mm}
\end{figure*}

\subsubsection{Experimental Setup}
\label{empiricalAttack}

We use several real-world datasets to evaluate the effectiveness of our hyperparameter stealing attacks on the machine learning algorithms we studied.   
We obtained these datasets from the UCI Machine Learning Repository,\footnote{https://archive.ics.uci.edu/ml/datasets.html} and their statistics are summarized in Table~\ref{data_reg}. We note that our datasets have significantly different number of instances and features, which represent different application scenarios. 
We use each dataset as a training dataset.  



\myparatight{Implementation} We use the \emph{scikit-learn} package~\cite{pedregosa2011scikit}, which implements various machine learning algorithms, to learn model parameters.  
{All experiments are conducted on a laptop with a  2.7GHz CPU and 8GB memory.} 
We predefine a set of hyperparameters which span over a wide range, i.e., $10^{-3}, 10^{-2}, 10^{-1}, 10^{0}, 10^{1}, 10^{2}, 10^{3}$, in order to evaluate the effectiveness of our attacks for a wide range of hyperparameters. Note that $\lambda>0$, so we do not explore negative values for $\lambda$.  For each hyperparameter and for each learning algorithm, we learn the corresponding model parameters using the scikit-learn package. 
\alan{For kernel algorithms, we use the Gaussian kernel, where the parameter $\sigma$ in the kernel is set to be 10.}
 We implemented our attacks in Python. 

\myparatight{Evaluation metric} We evaluate the effectiveness of our attacks using \emph{relative estimation error}, which is formally defined as follows:
\begin{align}
\label{rer}
\text{\bf Relative estimation error: } \epsilon =  \frac{|\hat{\lambda}-\lambda|}{\lambda},
\end{align}
where $\hat{\lambda}$ and $\lambda$ are the estimated hyperparameter and true hyperparameter, respectively. 

\alan{
\subsubsection{Experimental Results for Known Model Parameters}




We first show results for the scenario where an attacker knows the training dataset, the learning algorithm, and the model parameters.
Figure~\ref{reg_res_attack} shows the relative estimation errors for different regression algorithms on the regression datasets. Figure~\ref{clf_res_attack}  shows the results for 
logistic regression algorithms on the classification datasets. Figure~\ref{clf_res_attack_svm}  shows the results for 
SVM algorithms on the classification datasets. 
Figure~\ref{clf_res_attack_nn} shows the results for three-layer neural networks for regression and classification. In each figure, x-axis represents the true hyperparameter in a particular algorithm, and the y-axis represents the relative estimation error of our attacks at stealing the hyperparameter. For better illustration, we set the relative estimation errors to be $10^{-10}$ when they are smaller than $10^{-10}$.}
Note that learning algorithms with $L_1$ regularization require the hyperparameter to be smaller than a maximum value  $\lambda_{\max}$ in order to learn meaningful model parameters (please refer to Appendix~\ref{app:other} for more details). 
Therefore, in the figures, the data points are missing for such algorithms when the hyperparameter gets larger than $\lambda_{\max}$, which is different for different training datasets and algorithms. We didn't show results on \emph{kernel LASSO} because it is not widely used. Moreover, we didn't find open-source implementations to learn model parameters in kernel LASSO, and implementing kernel LASSO is out of the scope of this work. However, our attacks are applicable to kernel LASSO. 

\begin{figure}[t]
\center
\vspace{-3mm}
\subfloat[Regression]{\includegraphics[width=0.25\textwidth]{./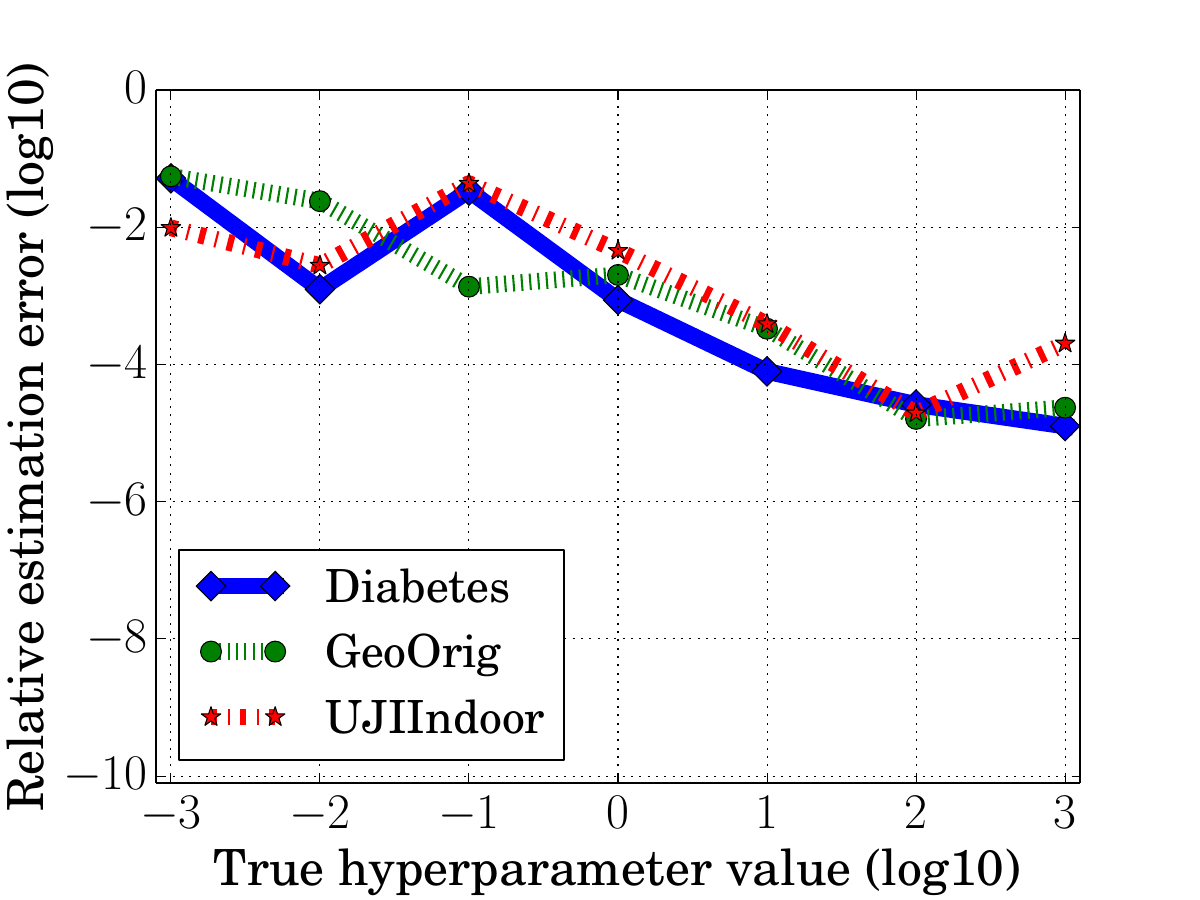}\label{nn_reg_attack}} 
\subfloat[Classification]{\includegraphics[width=0.25\textwidth]{./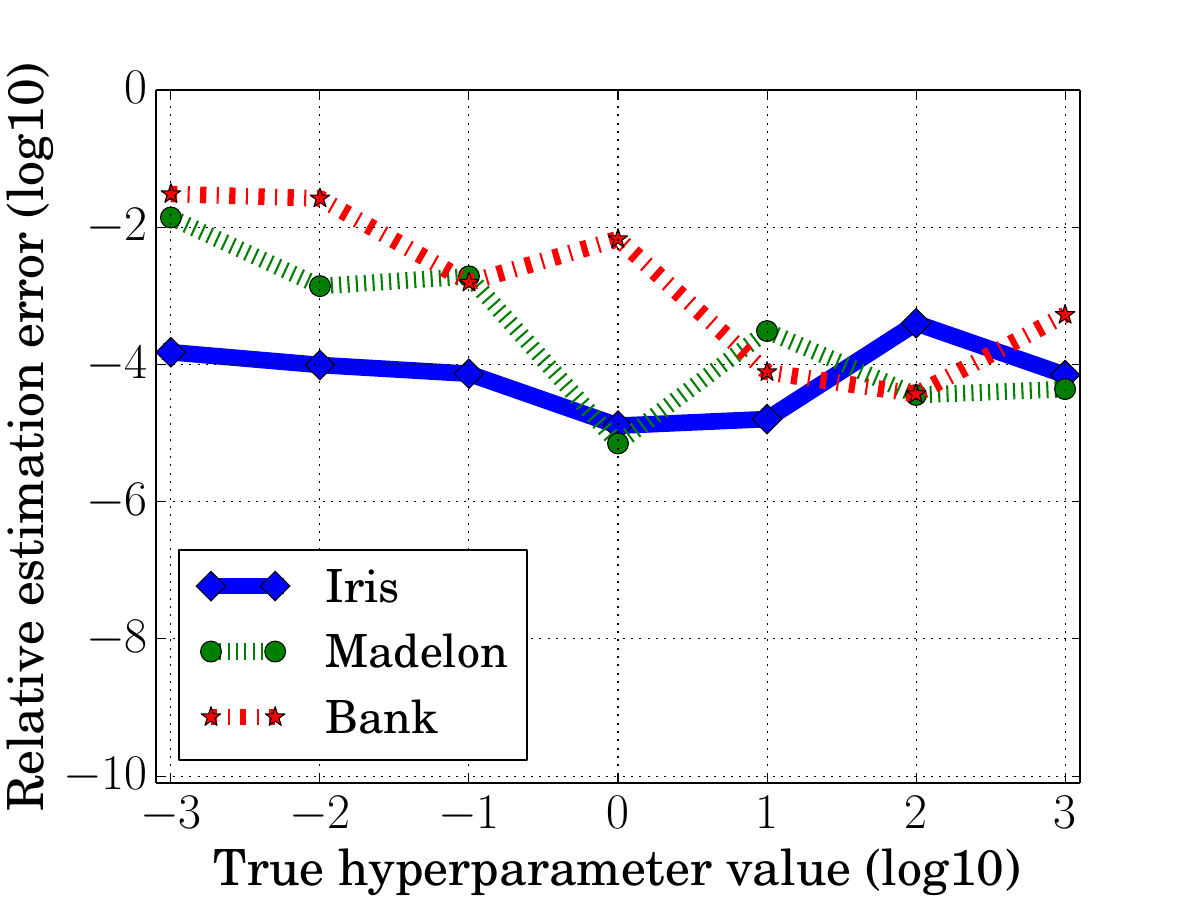}\label{nn_clf_attack}}
\caption{Effectiveness of our hyperparameter stealing attacks for a) a three-layer neural network regression algorithm and b) a three-layer neural network classification algorithm.}
\label{clf_res_attack_nn}
\vspace{-5mm}
\end{figure}

\begin{figure}[!t]
\center
{\includegraphics[width=0.28\textwidth]{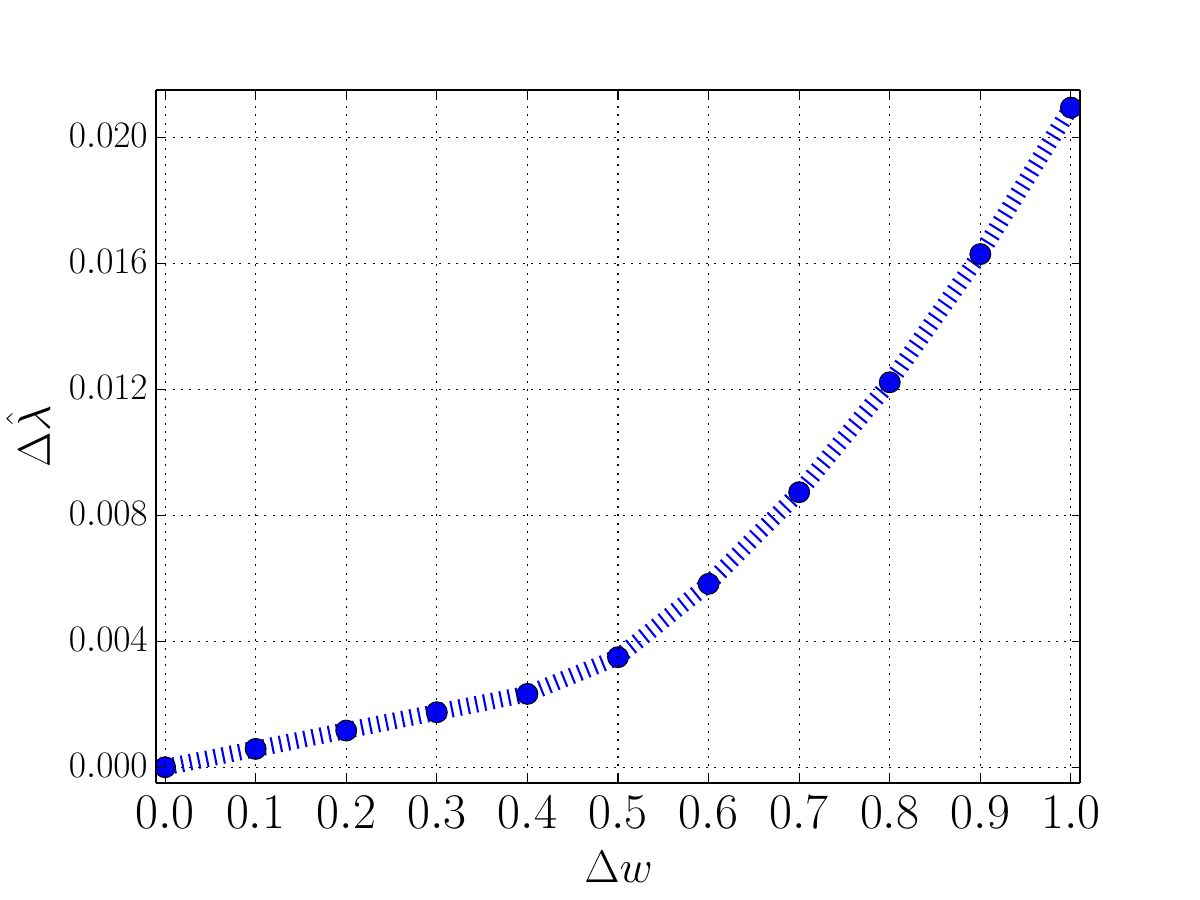}}
\caption{\alan{Effectiveness of our hyperparameter stealing attacks for RR when the model parameters deviate from the optimal ones.}}
\vspace{-6mm}
\label{rroptimal}
\end{figure}

We have two key observations.
First, our attacks can accurately estimate the hyperparameter for all learning algorithms we studied and for a wide range of hyperparameter values. Second, we observe that our attacks can more accurately estimate the hyperparameter for Ridge Regression (RR) and Kernel Ridge Regression (KRR) than for other learning algorithms. This is because RR and KRR have \emph{analytical solutions} for model parameters, and thus the learnt model parameters are the exact minima of the objective functions. In contrast, other learning algorithms we studied do not have analytical solutions for model parameters, and their learnt model parameters are relatively further away from the corresponding minima of the objective functions.  Therefore, our attacks have larger estimation errors for these learning algorithms. 

\alan{In practice, a learner may use approximate solutions for RR and KRR because computing the exact optimal solutions may be computationally expensive. We evaluate the impact of such approximate solutions on the accuracy of hyperparameter stealing attacks, and compare the results with those predicted by our Theorem~\ref{thm_2}. Specifically, we use the RR algorithm, adopt the Diabetes dataset, and set the true hyperparameter to be 1. We first compute the optimal model parameters for RR.  Then, we modify a model parameter by $\Delta w$  
and estimate the hyperparameter by our attack. 
Figure~\ref{rroptimal} shows the estimation error $\Delta \hat{\lambda}$ as a function of $\Delta w$ (we show the absolute estimation error instead of relative estimation error in order to compare the results with Theorem~\ref{thm_2}). We observe that when $\Delta w$ is very small, the estimation error $\Delta \hat{\lambda}$ is a linear function of  $\Delta w$. As $\Delta w$ becomes larger, $\Delta \hat{\lambda}$ increases quadratically with $\Delta w$. Our observation is consistent with Theorem~\ref{thm_2}, which shows that the estimation error is linear to the difference between the learnt model parameters and the minimum closest to them when the difference is very small. }

\subsubsection{Experimental Results for Unknown Model Parameters} 
\label{unknownparameter}

Our hyperparameter stealing attacks are still applicable when the model parameters are unknown to an attacker, e.g., for black-box MLaaS platforms such as Amazon Machine Learning.  Specifically, 
the attacker can first use the \emph{equation-solving-based} model parameter stealing attacks proposed in~\cite{tramer2016stealing} to learn the model parameters and then perform our hyperparameter stealing attacks. 
Our Theorem~\ref{thm_2} bounds the estimation error of hyperparameters with respect to the difference between the stolen model parameters and the closest minimum of the objective function of the ML algorithm.

We also empirically evaluate the effectiveness of our attacks when model parameters are unknown. For instance, Figure~\ref{modelparameterunknown} shows the relative estimation errors of hyperparameters for regression algorithms and classification algorithms, when the model parameters are unknown but stolen by the model parameter stealing attacks~\cite{tramer2016stealing}. For simplicity, we only show results on the Diabetes dataset for regression algorithms and on the Iris dataset for classification algorithms, but results on other datasets are similar. Note that LASSO requires the hyperparameter to be smaller than a certain threshold as we discussed in the above, and thus some data points are missing for LASSO.
We find that we can still accurately steal the hyperparameters. The reason is that the model parameter stealing attacks can accurately steal the model parameters.

\alan{
\subsubsection{Summary}
Via empirical evaluations, we have the following observations. First, our attacks can accurately estimate the hyperparameter for all ML algorithms we studied. Second, our attacks can more accurately estimate the hyperparameter for ML algorithms that have analytical solutions of the model parameters. Third, via combining with model parameter stealing attacks, our attacks can accurately estimate the hyperparameter even if the model parameters are unknown.
}




\begin{figure}[!t]
\vspace{-3mm}
\center
\subfloat[Diabetes]{\includegraphics[width=0.25\textwidth]{./src_attack/figs/reg/result-reg-diabetes-final.pdf}\label{reg_diabetes}}
\subfloat[Iris]{\includegraphics[width=0.25\textwidth]{./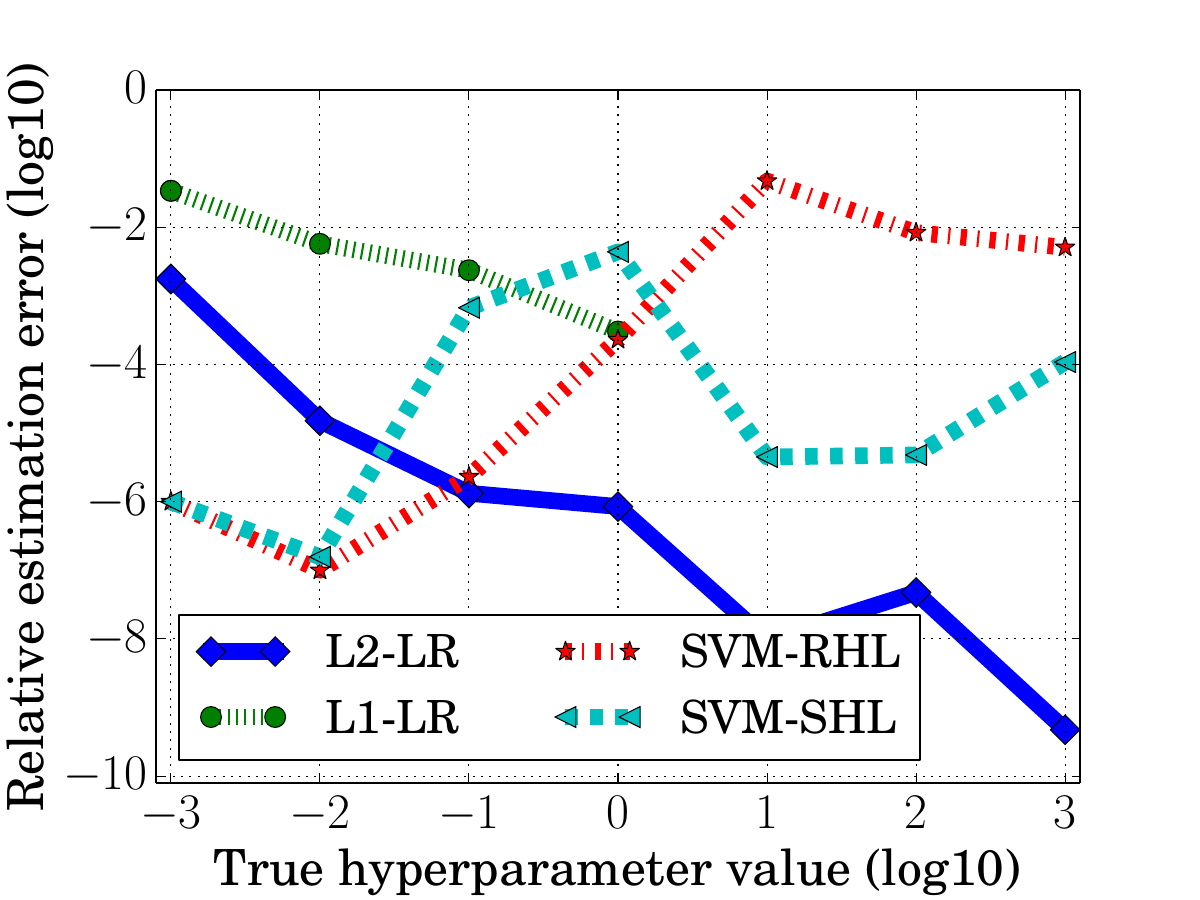}\label{clf_diabetes}}
\caption{Effectiveness of our hyperparameter stealing attacks when model parameters are unknown but stolen by model parameter stealing attacks. (a)  Regression algorithms on Diabetes and (b) Classification algorithms on Iris.}
\vspace{-3mm}
\label{modelparameterunknown}
\end{figure}

\subsection{Implications for MLaaS}
\label{eval_attack}


We show that a user can use our hyperparamter stealing attacks to learn an accurate model through a machine-learning-as-a-service (MLaaS) platform with much less costs.
While different MLaaS platforms have different paradigms, we consider an MLaaS platform (e.g., Amazon Machine Learning~\cite{amazon}, Microsoft Azure Machine Learning~\cite{ms}) that charges a user  according to the amount of computation that the MLaaS platform performed to learn the model, and supports two protocols for a user to learn a model. In the first protocol (denoted as \emph{Protocol I}), the user uploads a training dataset to the MLaaS platform and specifies a learning algorithm;  the MLaaS platform learns the hyperparameter using proprietary algorithms and learns the model parameters with the learnt hyperparameter; 
\alan{and then (optionally) the model parameters are sent back to the user. When the model parameters are not sent back to the user, the MLaaS is called black-box.} 
 The MLaaS platform (e.g., a black-box platform) does not share the learnt hyperparameter value with the user considering intellectual property and algorithm confidentiality. 
 
In Protocol I, learning the hyperparameter is often the most time-consuming and costly part, because it more or less involves cross-validation. In practice, some users might already have appropriate hyperparameters through domain knowledge. Therefore, the MLaaS platform provides a second protocol (denoted as \emph{Protocol II}), in which the user uploads a training dataset to the MLaaS platform, defines a hyperparameter value, and specifies a learning algorithm, and then the MLaaS platform produces the model parameters for the given hyperparameter. \emph{Protocol II} helps users learn models with less economical costs when they already have good hyperparameters. {We note that Amazon Machine Learning and Microsoft Azure Machine Learning support the two protocols.}

\subsubsection{Learning an Accurate Model with Less Costs}
We demonstrate that a user can use our hyperparameter stealing attacks to learn a model through MLaaS with much less economical costs without sacrificing model performance. In particular, we assume the user does not have a good hyperparameter yet. We compare the following three methods to learn a model through MLaaS. By default, we assume the MLaaS shares the model parameters with the user. If not, the user can use model parameter stealing attacks~\cite{tramer2016stealing} to steal them.  

\myparatight{Method 1 (M1)} The user leverages \emph{Protocol I} supported by the MLaaS platform to learn the model. Specifically, the user uploads the training dataset to the MLaaS platform and specifies a learning algorithm. The MLaaS platform learns the hyperparameter and then learns the model parameters using the learnt hyperparameter. The user then downloads the model parameters. 

\myparatight{Method 2 (M2)} In order to save economical costs, the user samples $p\%$ of the training  dataset uniformly at random and then uses \emph{Protocol I} to learn a model over the sampled subset of the training dataset. We expect that this method is less computationally expensive than M1, but it may sacrifice performance of the learnt model.  


\myparatight{Method 3 (M3)} In this method, the user uses our hyperparameter stealing attacks. Specifically, 
the user first samples $q\%$ of the training dataset uniformly at random. Second, the user learns a model over the sampled training dataset through the MLaaS via \emph{Protocol I}. We note that, for big data, even a very small fraction (e.g., 1\%) of the training dataset could be too large for the user to process locally, so we consider the user uses the MLaaS.  Third, the user estimates the hyperparamter learnt by the MLaaS using our hyperparameter stealing attacks. Fourth, the user re-learns a model over the entire training dataset through the MLaaS via \emph{Protocol II}.  We call this strategy \emph{``Train-Steal-Retrain"}. 

\subsubsection{Comparing the Three Methods Empirically}
We first show simulation results of the three methods. For these simulation results, we assume model parameters are known to the user. 
In the next subsection, we compare the three methods on Amazon Machine Learning, a real-world MLaaS platform. 


\myparatight{Setup} For each dataset in Table~\ref{data_reg}, we randomly split it into two halves,  which are used as the training dataset and the testing dataset, respectively. We consider the MLaaS learns the hyperparameter through 5-fold cross-validation on the training dataset. We measure the performance of the learnt model through \emph{mean square error (MSE)} (for regression models) or \emph{accuracy (ACC)} (for classification models). MSE and ACC are formally defined in Section~\ref{related}. 
Specifically, we use M1 as a baseline; then we measure the \emph{relative MSE (or ACC) error} of M2 and M3 over M1. For example, the relative MSE error of M3 is defined as $\frac{|\text{MSE}_{M3}-\text{MSE}_{M1}|}{\text{MSE}_{M1}}$. Moreover, we also measure the \emph{speedup} of M2 and M3 over M1 with respect to the overall amount of computation required to learn the model. Note that we also include the computation required to steal the hyperparameter for M3.

\myparatight{M3 vs. M1} Figure~\ref{ridge_UJIndoorLoc_m123} compares M3 with M1 with respect to model performance (measured by relative performance of M3 over M1) and speedup as we sample a larger fraction of training dataset (i.e., $q$ gets larger), where the regression algorithm is RR and the classification algorithm is SVM-SHL. Other learning algorithms and datasets have similar results, so we omit them for conciseness.  

We observe that M3 can learn a model that is as accurate as the model learnt by M1, while saving a significant amount of computation. Specifically, for RR on the dataset UJIndoorLoc, when we sample 3\% of training dataset, M3 learns a model that has almost 0 relative MSE error over M1, but M3 is around 8 times faster than M1. This means that the user can learn an accurate model using M3 with much less economic costs, when the MLaaS platform charges the user according to the amount of computation.   For the SVM-SHL algorithm on the Bank dataset, M3 can learn a model that has almost 0 relative ACC error over M1 and is around 15 times faster than M1, when we sample 1\% of training dataset. The reason why M3 and M1 can learn models with similar performances is that learning the hyperparameter using a subset of the training dataset changes it slightly and the learning algorithms are relatively robust to  small variations of the hyperparameter.

Moreover, we observe that the speedup of M3 over M1 is more significant when the training dataset becomes larger. 
 {Figure~\ref{speedup_gaussian} shows the speedup of M3 over M1 
 on binary-class training datasets with different sizes, where each class is synthesized via a Gaussian distribution with 10 dimensions. Entries of the mean vectors of the two Gaussian distributions are all 1's and all -1's, respectively. Entries of the covariance matrix of the two Gaussian distributions are generated from the standard Gaussian distribution.  
}
We select the parameter $q\%$ in M3 such that the relative ACC error is smaller than 0.1\%, i.e., 
M3 learns a model as accurately as M1.  The speedup of M3 over M1 is more significant as the training dataset gets larger. 
This is because the process of learning the hyperparameter has a computational complexity that is higher than linear. M1 learns the hyperparameter over the entire training dataset, while M3 learns it on a sampled subset of training dataset. As a result, the speedup is more significant for larger training datasets. 
This implies that a user can benefit more by using M3 when the user has a larger training dataset, which is often the case in the era of big data.

\begin{figure}[!t]
\vspace{-3mm}
\center
\subfloat[]{\includegraphics[width=0.25\textwidth]{./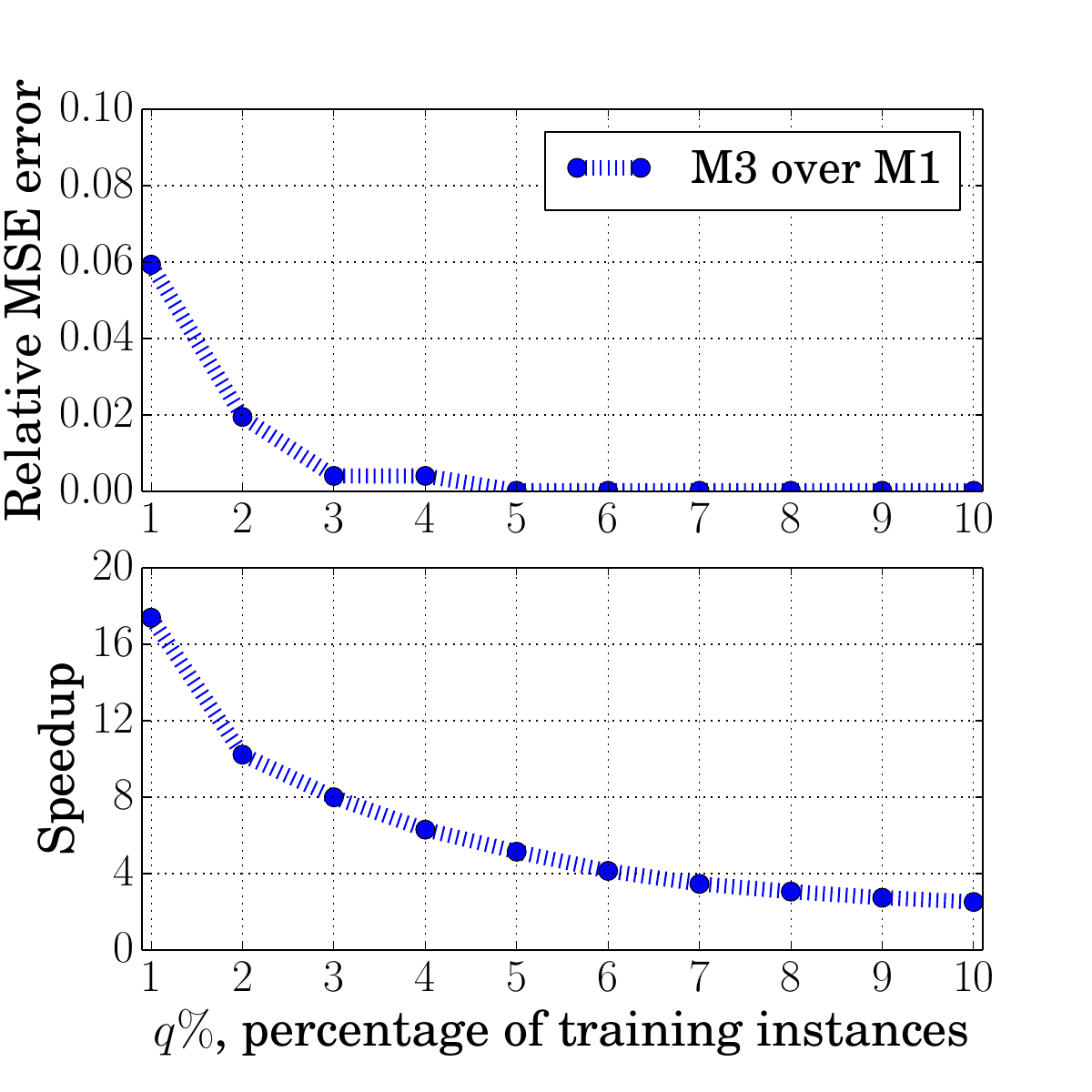}\label{ridge_UJIndoorLoc_m1_m3}} 
\subfloat[]{\includegraphics[width=0.25\textwidth]{./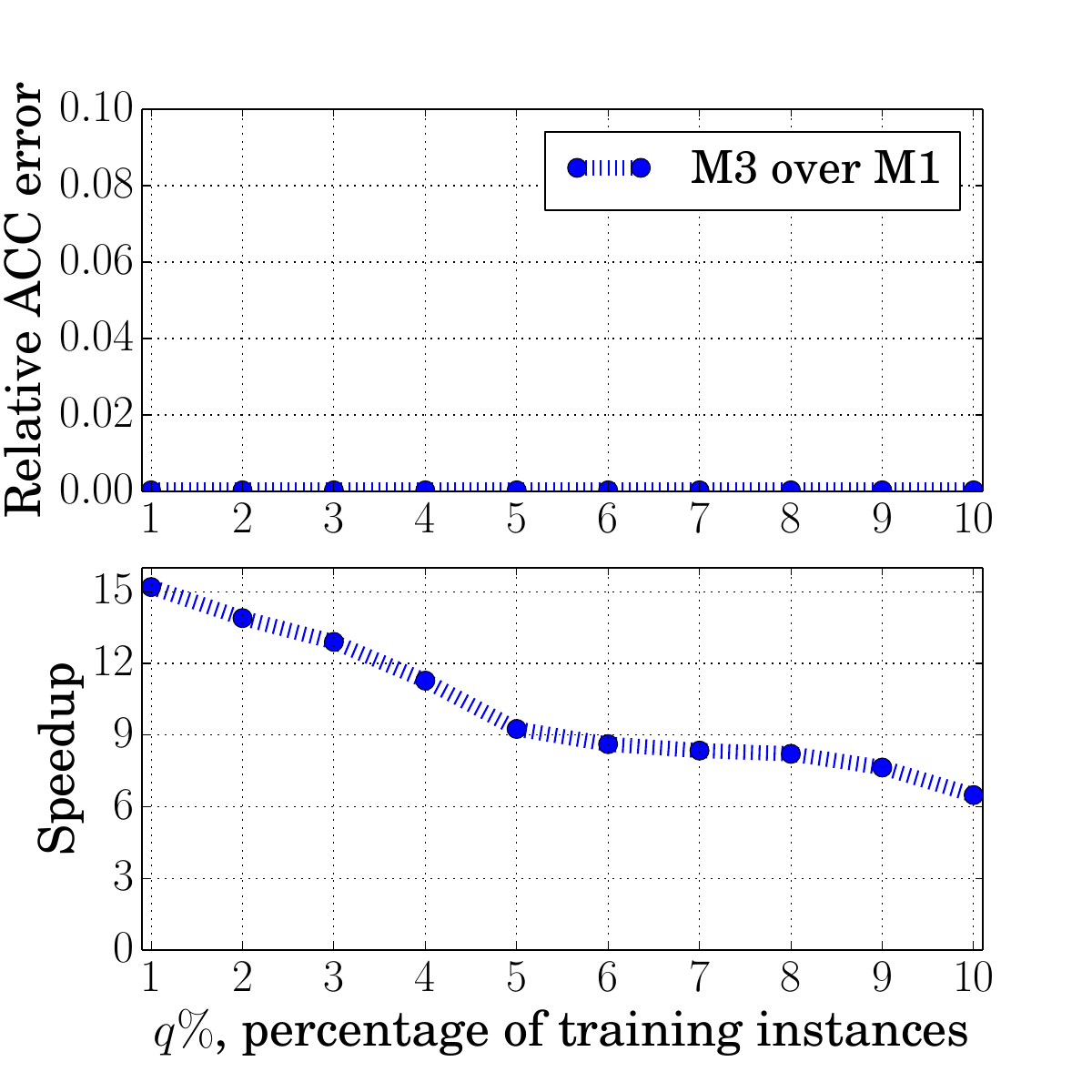}\label{svc_bank_m1_m3}}
\caption{M3 vs. M1. (a) Relative MSE error and speedup of M3 over M1 for RR on the dataset UJIndoorLoc. (b) Relative ACC error and speedup of M3 over M1 for SVM-SHL on the dataset Bank. 
}
\label{ridge_UJIndoorLoc_m123}
\end{figure}

\begin{figure}[t]
\center
\vspace{-6mm}
{\includegraphics[width=0.28\textwidth]{./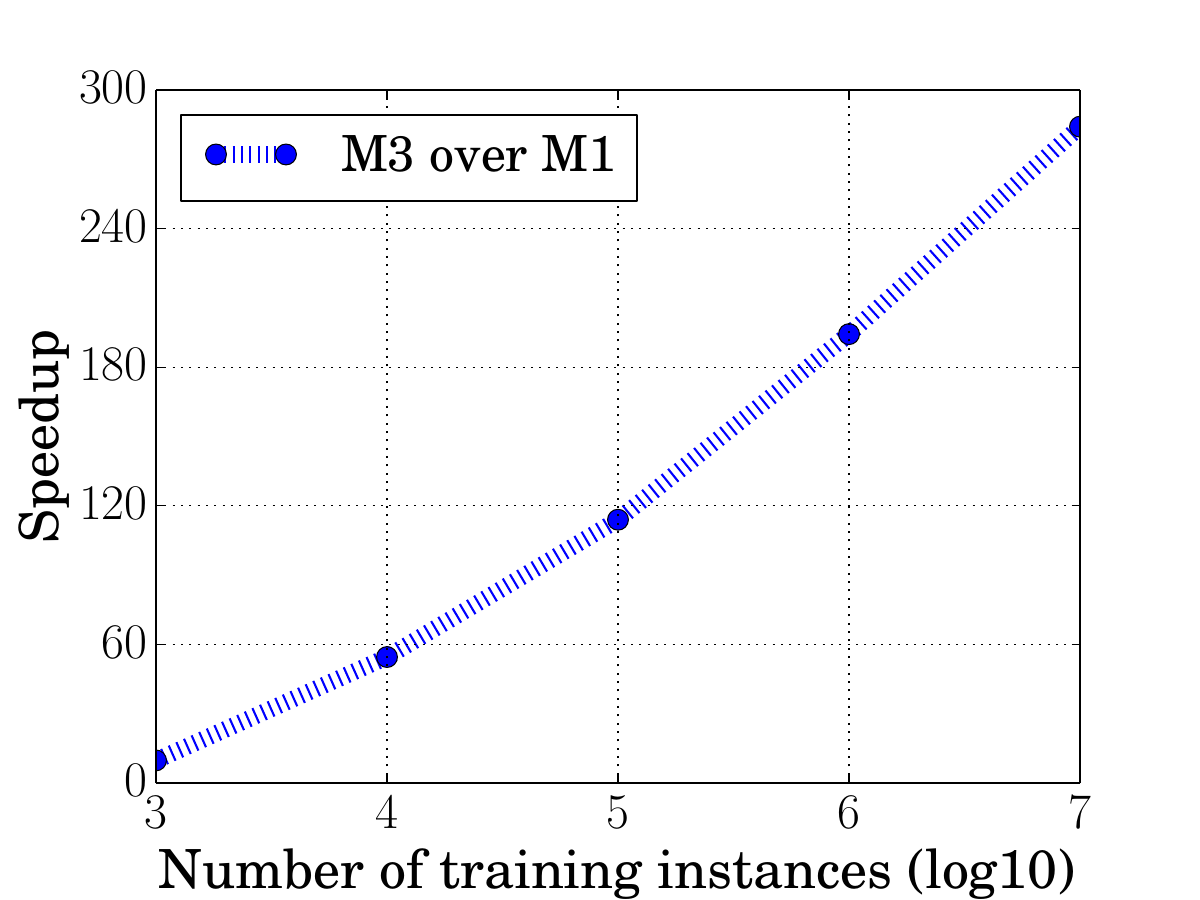}}
\caption{Speedup of M3 over M1 
for SVM-SHL as the training dataset size gets larger. 
}
\label{speedup_gaussian}
\vspace{-4mm}
\end{figure}

\myparatight{M3 vs. M2} Figure~\ref{svc_bank_123} compares M3 with M2 with respect to their relative performance over M1 as we sample more training dataset for M2 (i.e., we increase $p$). For M3, we set $q\%$ such that the relative MSE (or ACC) error of M3 over M1 is smaller than 0.1\%. In particular, $q\%=3\%$ and $q\%=1\%$ for RR on the UJIndoorLoc dataset and SVM-SHL on the Bank dataset, respectively. We observe that when M3 and M2 achieve the same speedup over M1, the model learnt by M3 is more accurate than that learnt by M2. For instance, for  RR on the UJIndoorLoc dataset, M2 has the same speedup as M3 when sampling 10\% of training dataset, but M2 has around 4\% of relative MSE error while M3's relative MSE error is almost 0. For SVM-SHL on the Bank dataset, M2 has the same speedup as M3 when sampling 4\% to 5\% of training dataset, but M2's relative ACC error is much larger than M3's. 

The reason is that M2 learns both the hyperparameter and the model parameters using a subset of the training dataset. According to Figure~\ref{learningflow}, the \emph{unrepresentativeness} of the subset is ``doubled" because 1) it directly influences the model parameters, and 2) it influences the hyperparameter, through which it indirectly influences the model parameters. In contrast, in M3, such unrepresentativeness only influences the hyperparameter and the learning algorithms are relatively robust to small variations of the hyperparameter.

\begin{figure}[!t]
\center
\vspace{-3mm}
\subfloat[]{\includegraphics[width=0.25\textwidth]{./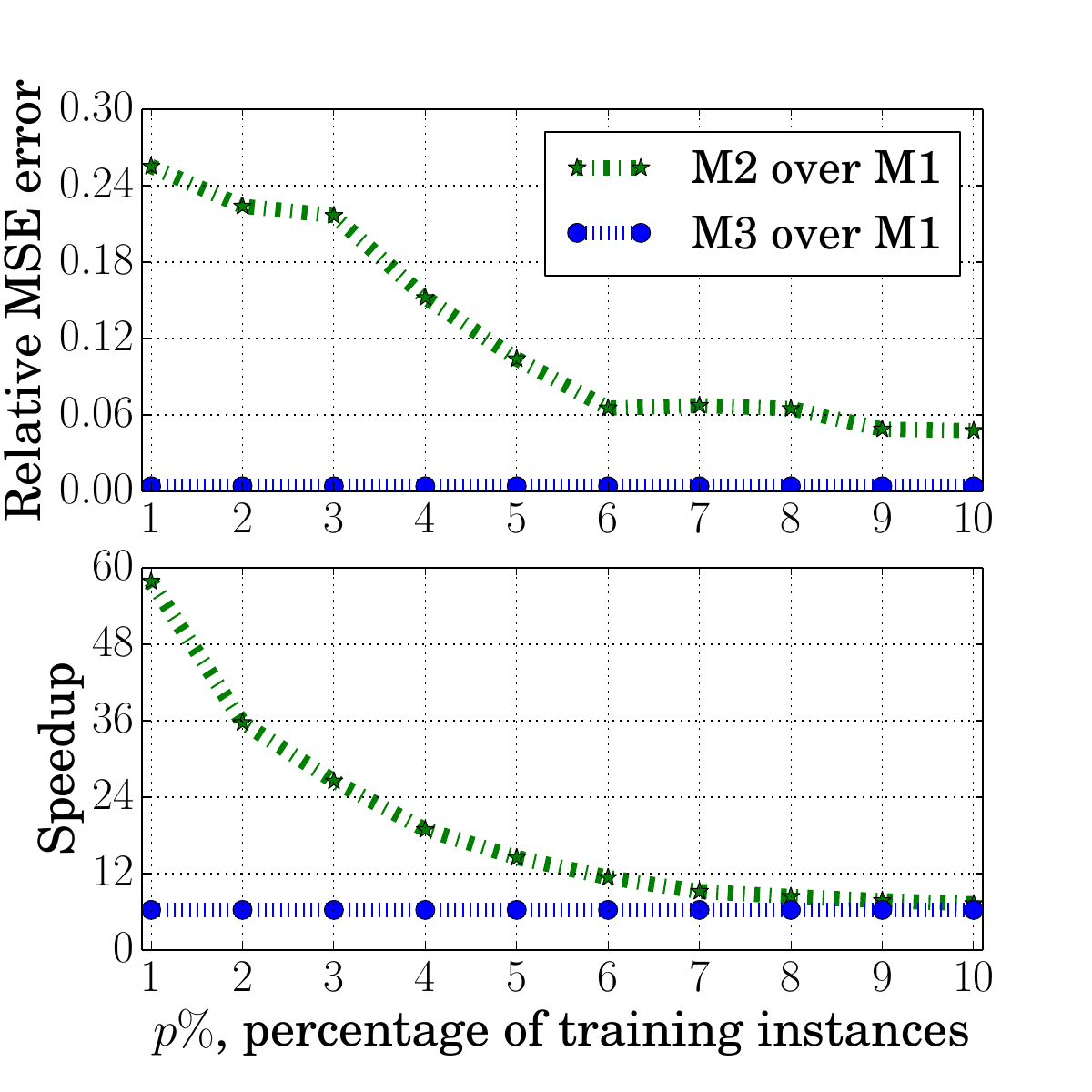}\label{ridge_UJIndoorLoc_m2_m3}}
\subfloat[]{\includegraphics[width=0.25\textwidth]{./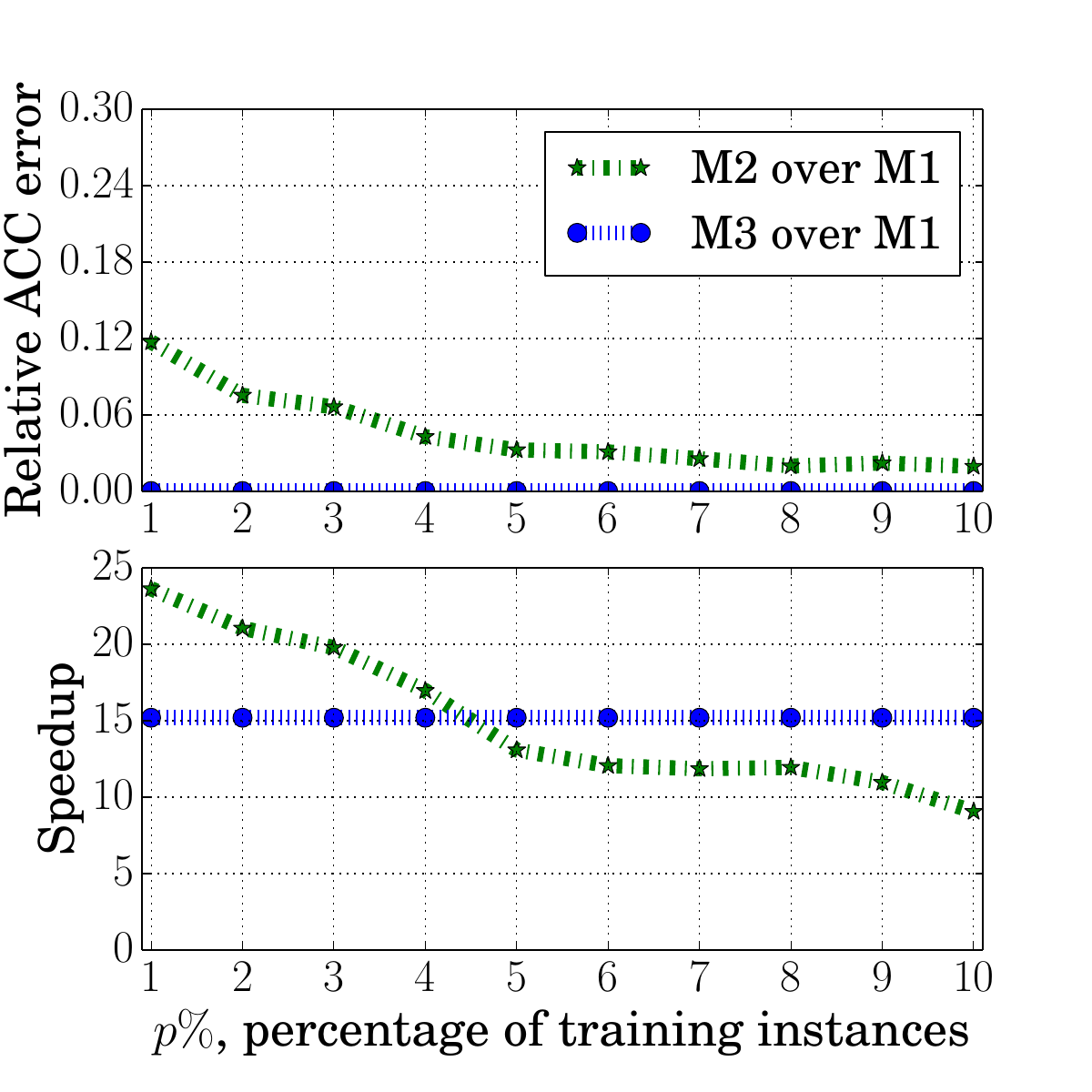}\label{svc_bank_m2_m3}}
\caption{M3 vs. M2. (a) Relative MSE error and speedup of M3 and M2 over M1 for RR on the dataset UJIndoorLoc. (b) Relative ACC error and speedup of M3 and M2 over M1 for SVM-SHL on the dataset Bank. 
}
\label{svc_bank_123}
\vspace{-6mm}
\end{figure}

\alan{
\subsubsection{Attacking Amazon Machine Learning} We also evaluate the three methods using Amazon Machine Learning~\cite{amazon}. Amazon Machine Learning is a black-box MLaaS platform, i.e., it does not disclose model parameters nor hyperparameters to users. However, the ML algorithm is known to users, e.g., the default algorithm is logistic regression. 
In our experiments, we use Amazon Machine Learning to learn a logistic regression model (with $L_2$ regularization) for the Bank dataset. We leverage the SigOpt API~\cite{sigopt}, a hyperparameter tuning service for Amazon Machine Learning, to learn the hyperparameter.  We obtained a free API token from SigOpt.

We split the Bank dataset into two halves; one for training and the other for testing. For M2 and M3, we sampled 15\% and 3\% of the training dataset, respectively, i.e., p\%=15\% and q\%=3\% (we selected these settings such that M2 and M3 have around the same overall training costs). Since Amazon Machine Learning is black-box, we use the model parameter stealing attack~\cite{tramer2016stealing} to steal model parameters in our M3. Specifically, in M3, we first used 3\% of the training dataset to learn a logistic regression model. Amazon discloses the prediction API of the learnt model. Second, we queried the prediction API for 200 testing examples and used the \emph{equation-solving-based} attack~\cite{tramer2016stealing} to steal the model parameters. Third, we used our hyperparameter stealing attack to estimate the hyperparameter. Fourth, we used the entire training dataset and the stolen hyperparameter to re-train a logistic regression model. We also evaluated the accuracy of the three models learnt by the three methods on the testing data via their prediction APIs.

The overall training costs for M1, M2, and M3 (including the cost of querying the prediction API for stealing model parameters) are \$1.02, \$0.15, and \$0.16, respectively. The cost per query of the prediction API is \$0.0001. The relative ACC error of M2 over M1 is 5.1\%, while the relative ACC error of M3 over M1 is 0.92\%. Therefore, compared to M1, M3 saves training costs significantly with little accuracy loss. When M2 and M3 have around the same training costs, M3 is much more accurate than M2.

\subsubsection{Summary}
Through empirical evaluations, we have the following key observations. First, M3 (i.e., the Train-Steal-Retrain strategy) can learn a model that is as accurate as that learnt by M1 with much less computational costs. This implies that, for the considered MLaaS platforms, a user can use our  attacks to learn an accurate model while saving a large amount of economic costs. Second, M3 has bigger speedup over M1 when the training dataset is larger. Third, {M3} is more accurate than M2 when having the same speedup over M1. 
}


\section{Rounding As a Defense}
\label{hyper_defense}

According to our Theorem~\ref{thm_2}, the estimation error of the hyperparameter is linear to the difference between the learnt model parameters and the minimum of the objective function that is closest to them. This theorem implies that we could defend against our hyperparameter stealing attacks via increasing such difference. Therefore, we propose that the learner \emph{rounds} the learnt model parameters before sharing them with the end user. For instance, suppose a model parameter is 0.8675342, rounding the model parameter to one decimal and two decimals results in 0.9 and 0.87, respectively. We note that this rounding technique was also used by Fredrikson et al.~\cite{fredrikson2015model} and Tram\`{e}r et al.~\cite{tramer2016stealing} to obfuscate confidence scores of model predictions to defend against model inversion attacks and model stealing attacks, respectively. 

Next, we perform experiments to empirically evaluate the effectiveness of the rounding technique at defending against our hyperparameter stealing attacks. 


\subsection{Evaluations}
\subsubsection{Setup}
We use the datasets listed in Table~\ref{data_reg}. Specifically, for each dataset, 
 we first randomly split the dataset into a training dataset and a testing dataset with an equal size. Second, for each ML algorithm we considered, 
 we learn a hyperparameter using the training dataset via 5-fold cross-validation, and learn the model parameters via the learnt hyperparameter and the training dataset. Third, we round each model parameter to a certain number of decimals (we explored from 1 decimal to 5 decimals). Fourth, we estimate the hyperparameter using
 the rounded model parameters.

\myparatight{Evaluation metrics} Similar to evaluating the effectiveness of our attacks, the first metric we adopt is the relative estimation error of the hyperparameter value, which is formally defined in Eqn.~\ref{rer}. We say rounding is an effective defense for an ML algorithm if rounding makes the relative estimation error  larger. Moreover, we say one ML algorithm can more effectively defend against our attacks than another ML algorithm using rounding, if the relative estimation error of the former algorithm increases more than that of the latter one. 


However, relative estimation error alone is insufficient because it only measures security, while ignoring the testing performance of the rounded model parameters. Specifically, severely rounding the model parameters could make the ML algorithm secure against our hyperparameter stealing attacks, but the testing performance of the rounded model parameters might also be affected significantly. 
Therefore, we also consider a metric to measure the testing-performance loss that is resulted from rounding model parameters. In particular, suppose the unrounded model parameters have a testing MSE (or ACC for classification algorithms), and the rounded model parameters have a testing MSE$_r$ (or ACC$_r$) on the same testing dataset. Then, we define the \emph{relative MSE error} and \emph{relative ACC error} as $\frac{|\text{MSE}-\text{MSE}_r|}{\text{MSE}}$ and $\frac{|\text{ACC}-\text{ACC}_r|}{\text{ACC}}$, respectively.  
Note that the relative MSE error and the relative ACC error used in this section are different from those used in Section~\ref{eval_attack}.
A larger relative estimation error and a smaller relative MSE (or ACC) error indicate a better defense strategy.

\begin{figure*}[!t]
\center
\vspace{-4mm}
\subfloat[Diabetes]{\includegraphics[width=0.33\textwidth]{./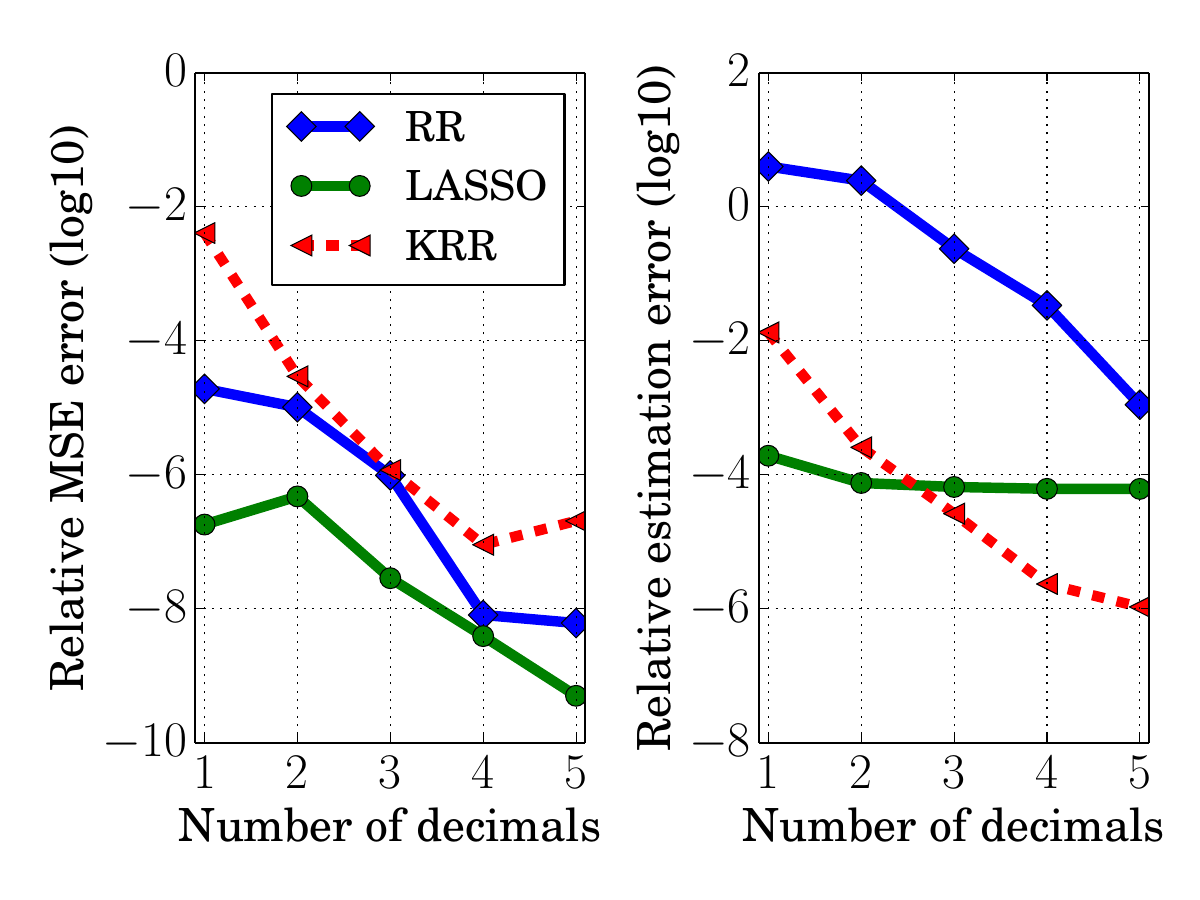}\label{diabetes_defense}} 
\subfloat[GeoOrigin]{\includegraphics[width=0.33\textwidth]{./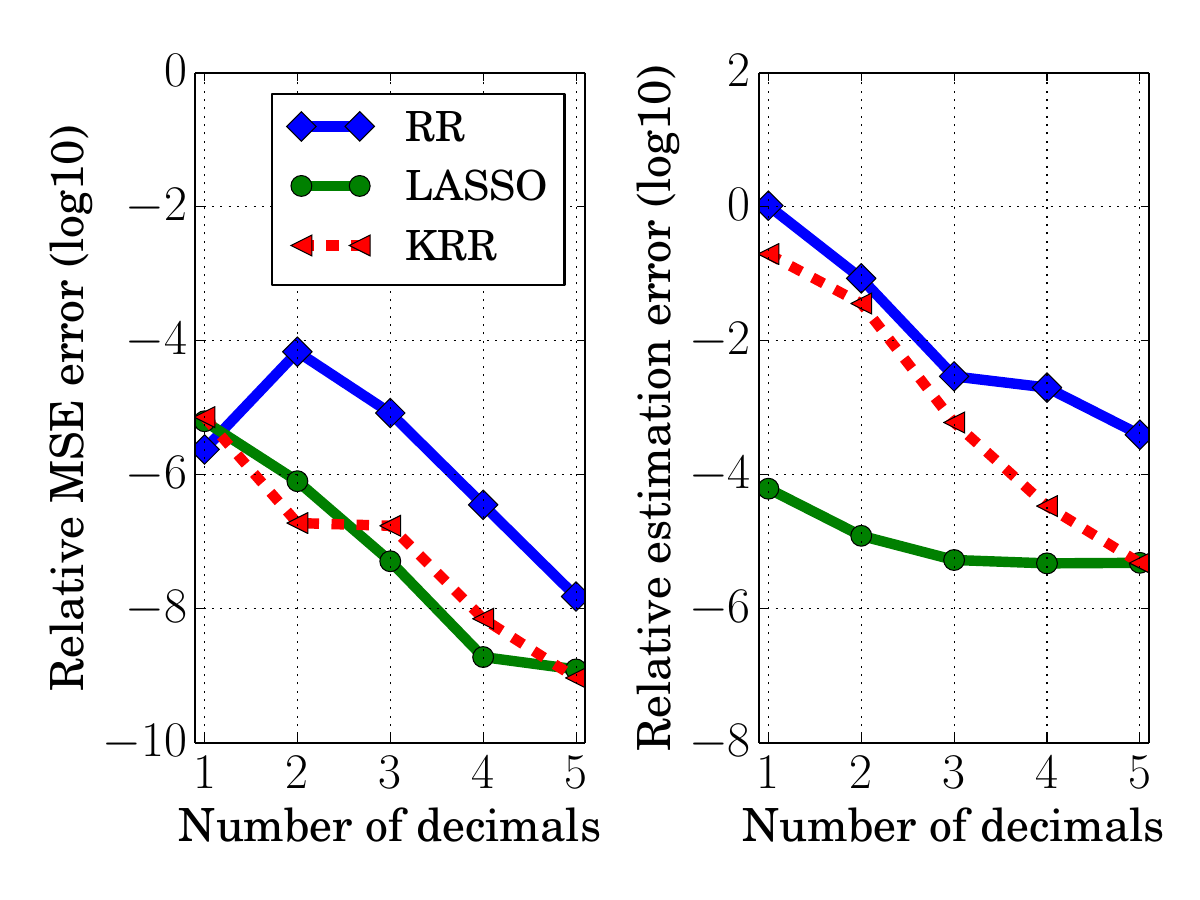}\label{geoorigin_defense}}
\subfloat[UJIIndoor]{\includegraphics[width=0.33\textwidth]{./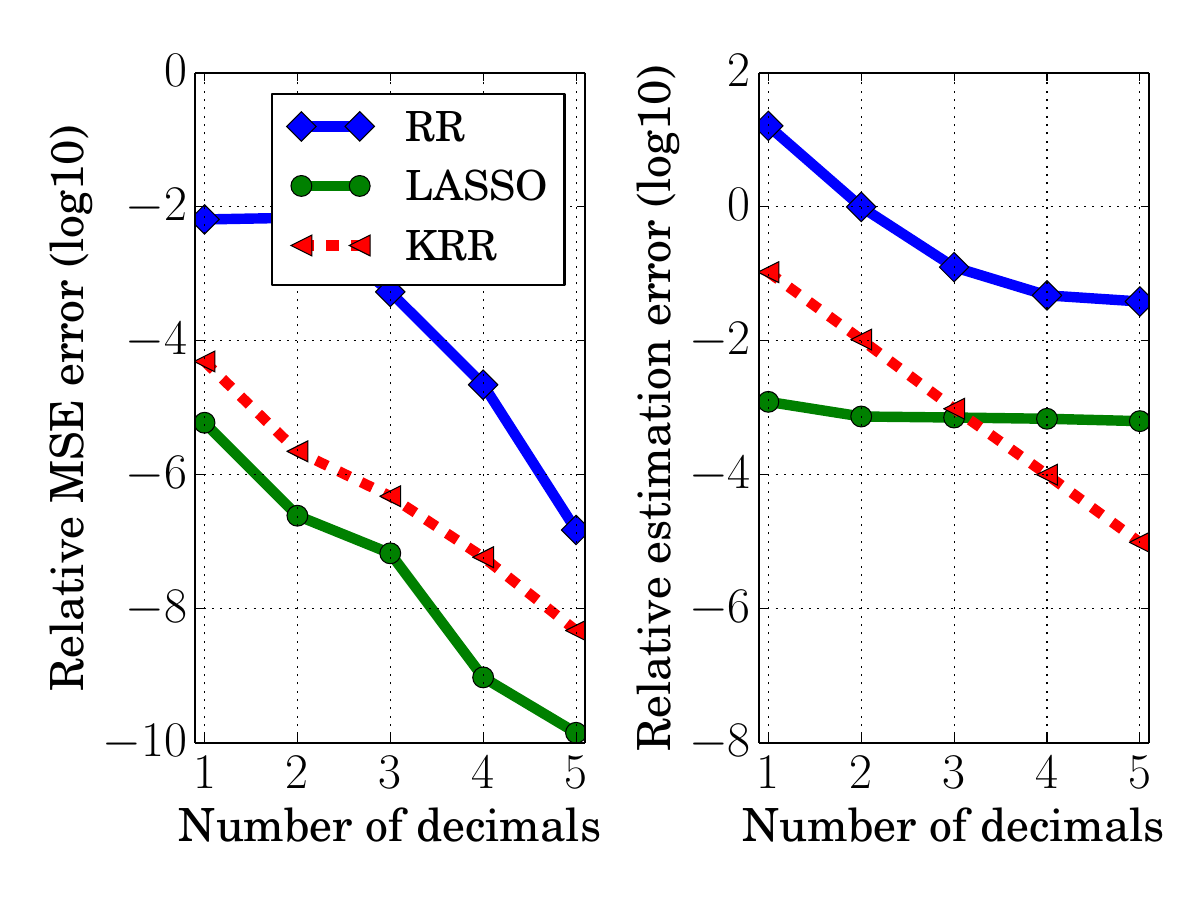}\label{ujiindoor_defense}}
\caption{Defense results of the rounding technique for regression algorithms.}
\label{reg_res_defense}
\vspace{-8mm}
\end{figure*}

\begin{figure*}[t]
\center
\subfloat[Iris]{\includegraphics[width=0.33\textwidth]{./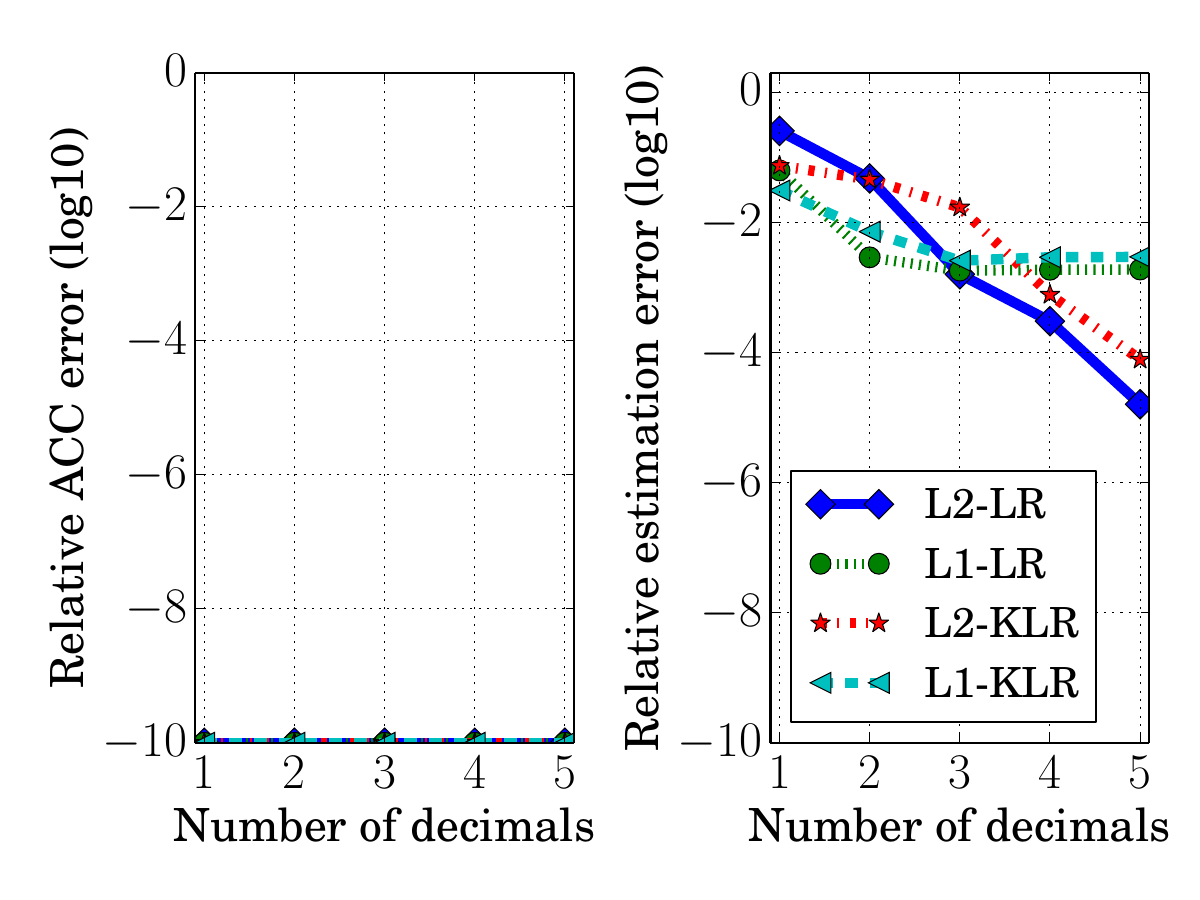}\label{clf_lr_madelon_defense}}
\subfloat[Madelon]{\includegraphics[width=0.33\textwidth]{./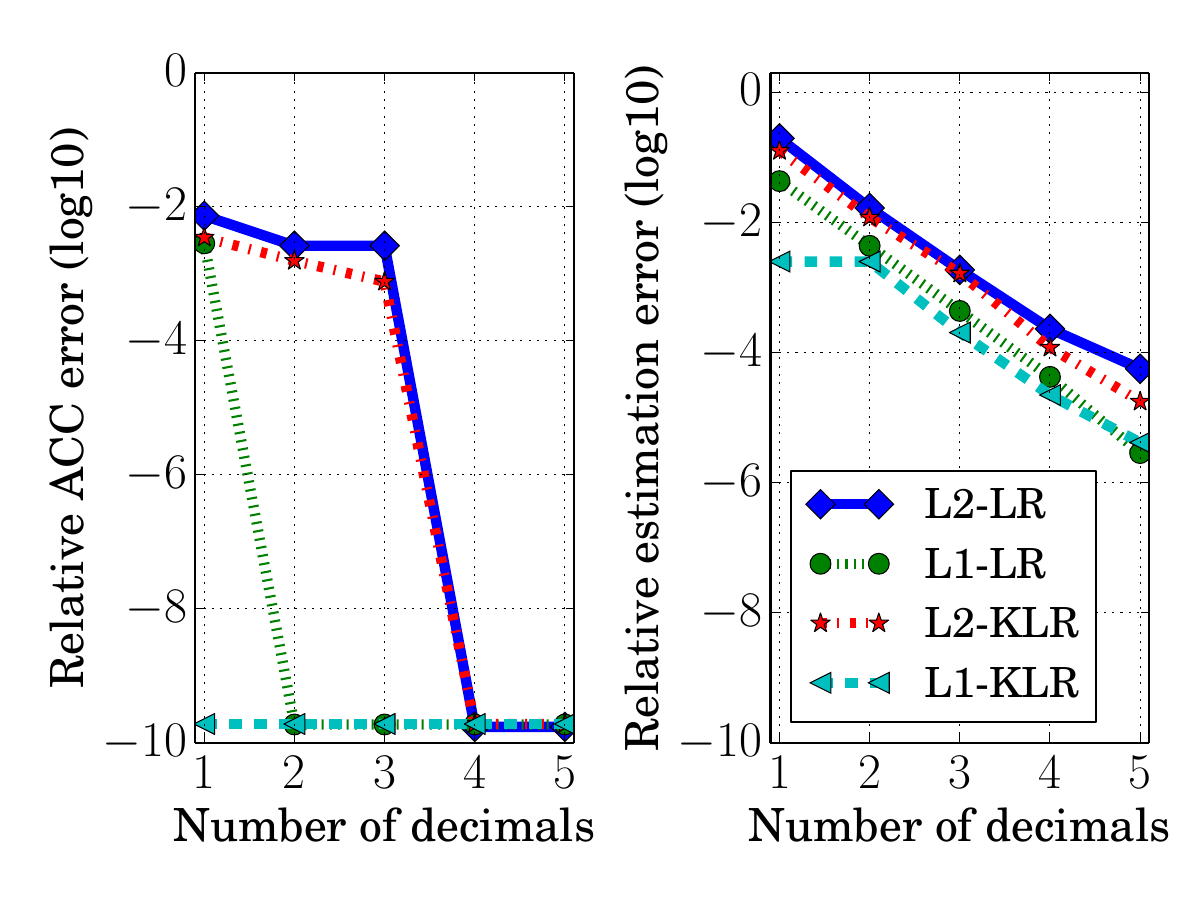}\label{clf_lr_madelon_defense}}
\subfloat[Bank]{\includegraphics[width=0.33\textwidth]{./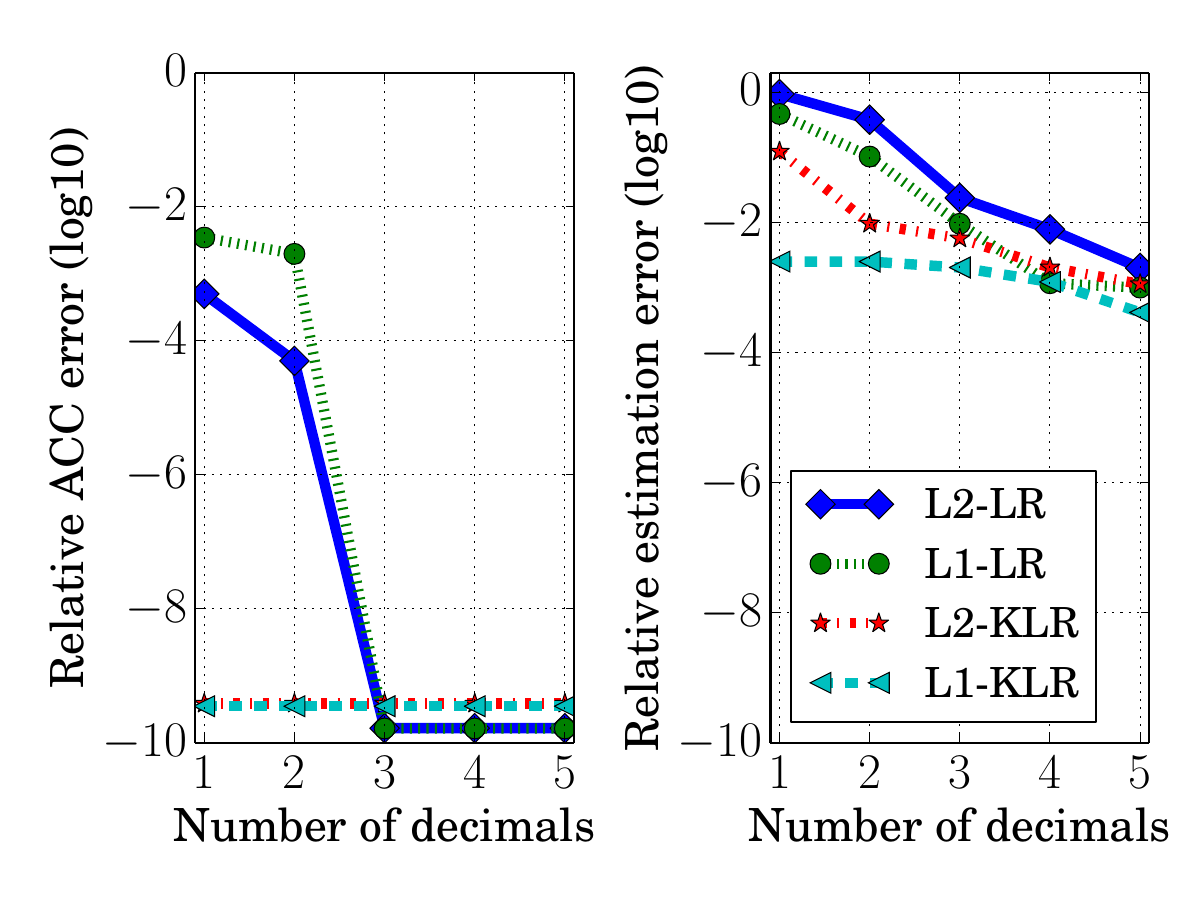}\label{clf_lr_bank_defense}} 
\caption{Defense results of the rounding technique for logistic regression classification algorithms. 
}
\label{lr_res_defense}
\vspace{-8mm}
\end{figure*}

\begin{figure*}[t]
\center
\subfloat[Iris]{\includegraphics[width=0.33\textwidth]{./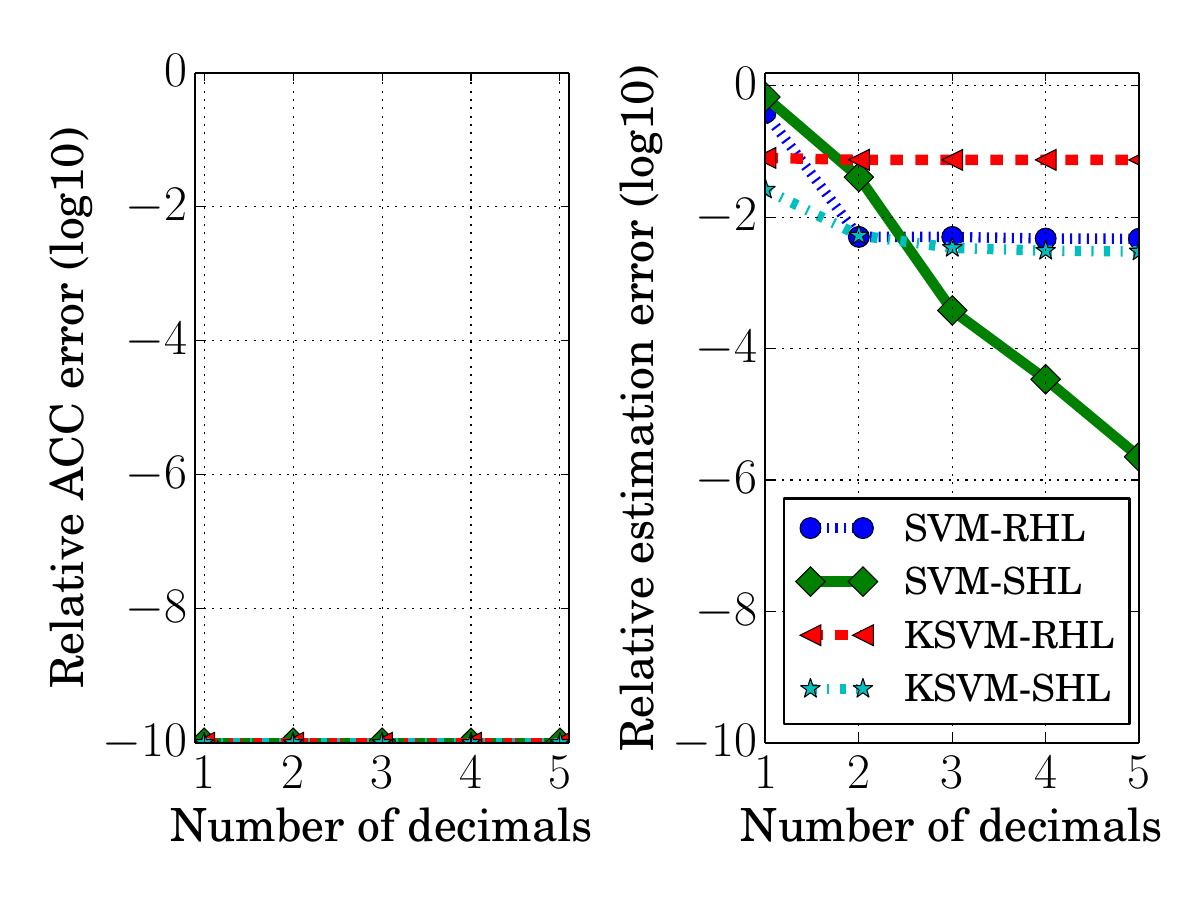}\label{clf_svc_iris_defense}}  
\subfloat[Madelon]{\includegraphics[width=0.33\textwidth]{./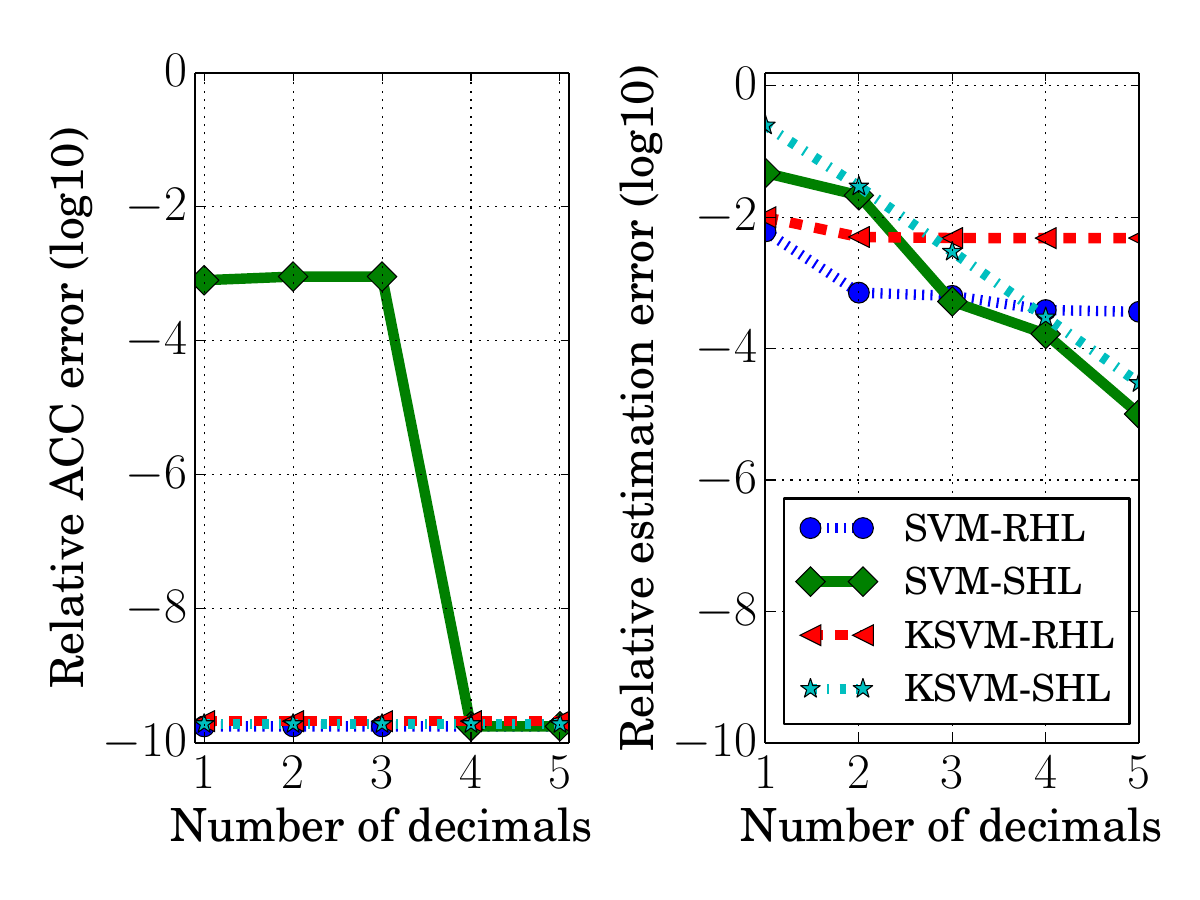}\label{clf_svc_madelon_defense}}
\subfloat[Bank]{\includegraphics[width=0.33\textwidth]{./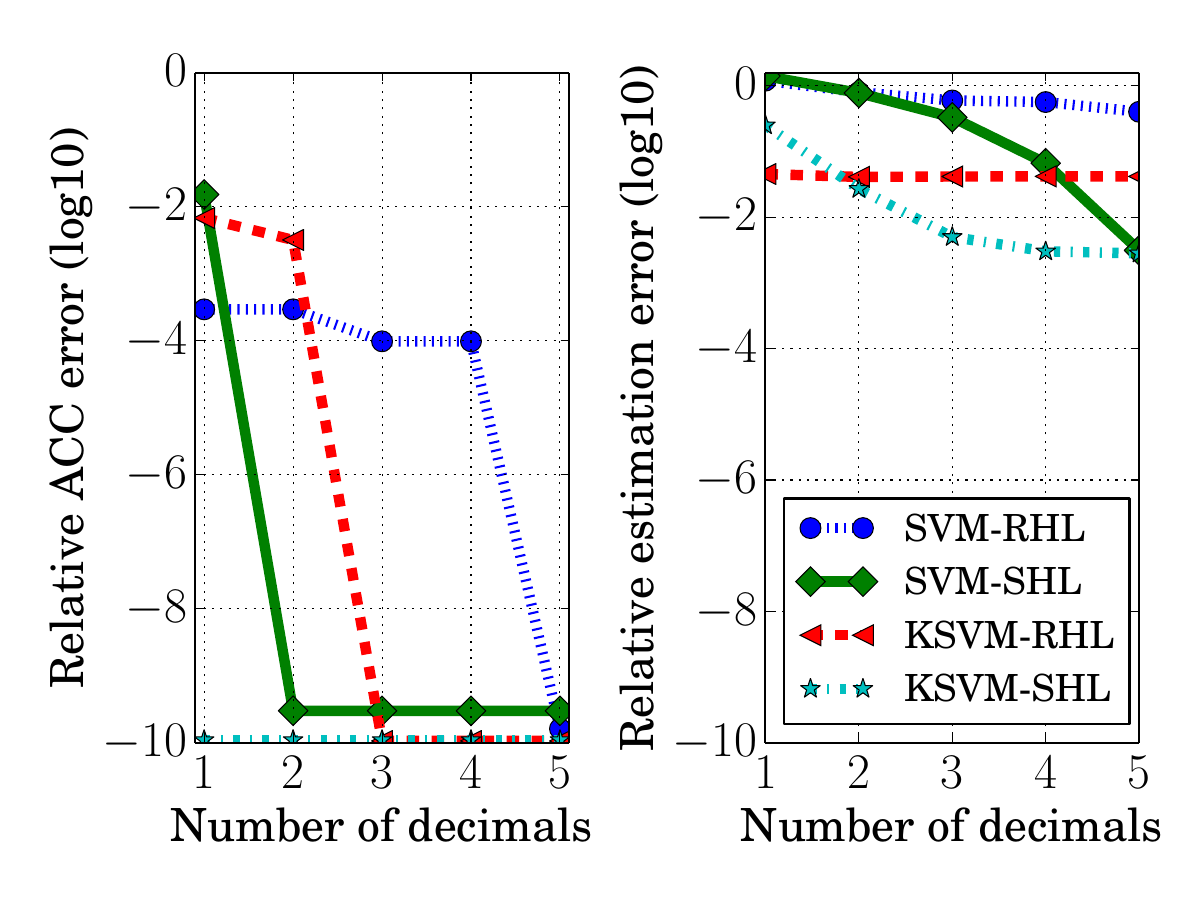}\label{clf_svc_bank_defense}}
\caption{Defense results of the rounding technique for SVM classification algorithms. 
}
\label{clf_res_defense}
\vspace{-6mm}
\end{figure*}

\begin{figure}[t]
\center
\vspace{-4mm}
\subfloat[Regression]{\includegraphics[width=0.25\textwidth]{./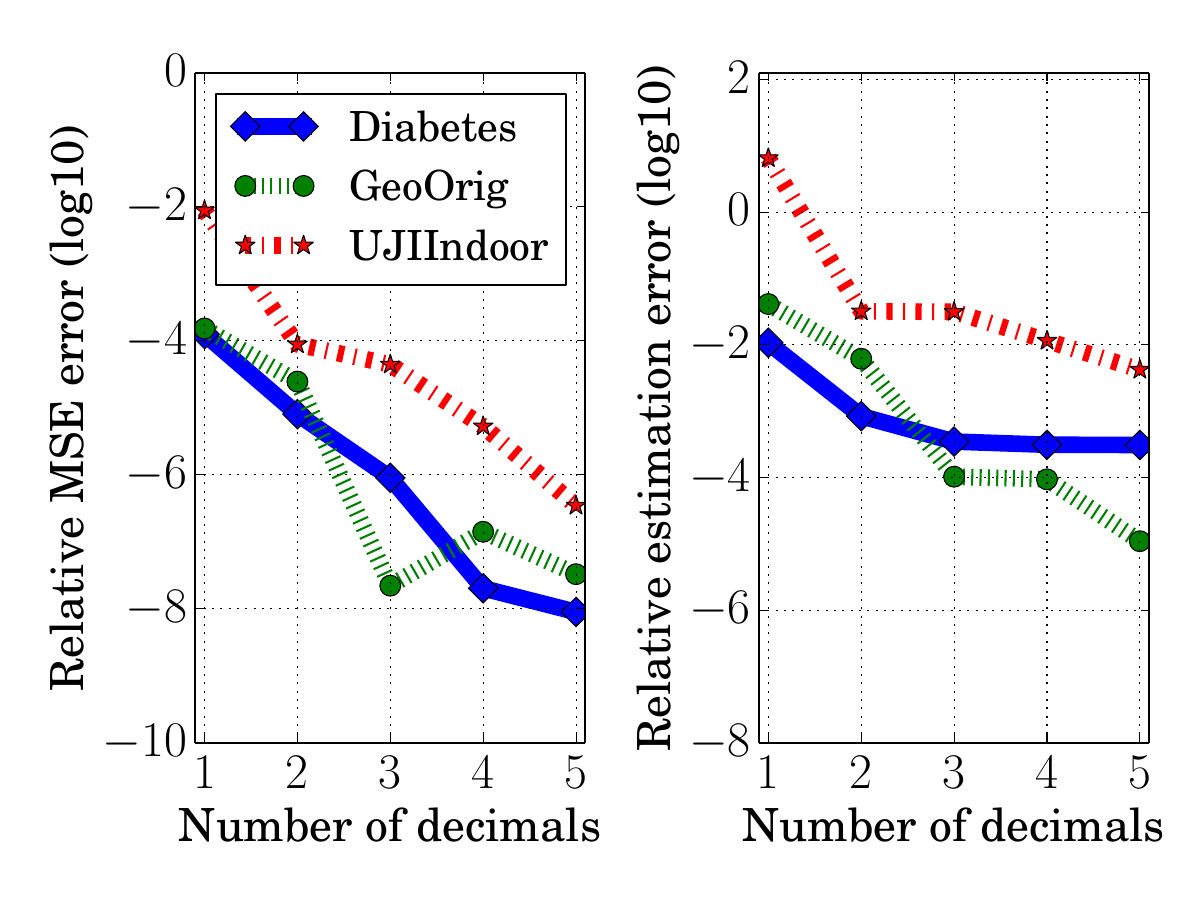}\label{nn_reg_defense}}  
\subfloat[Classification]{\includegraphics[width=0.25\textwidth]{./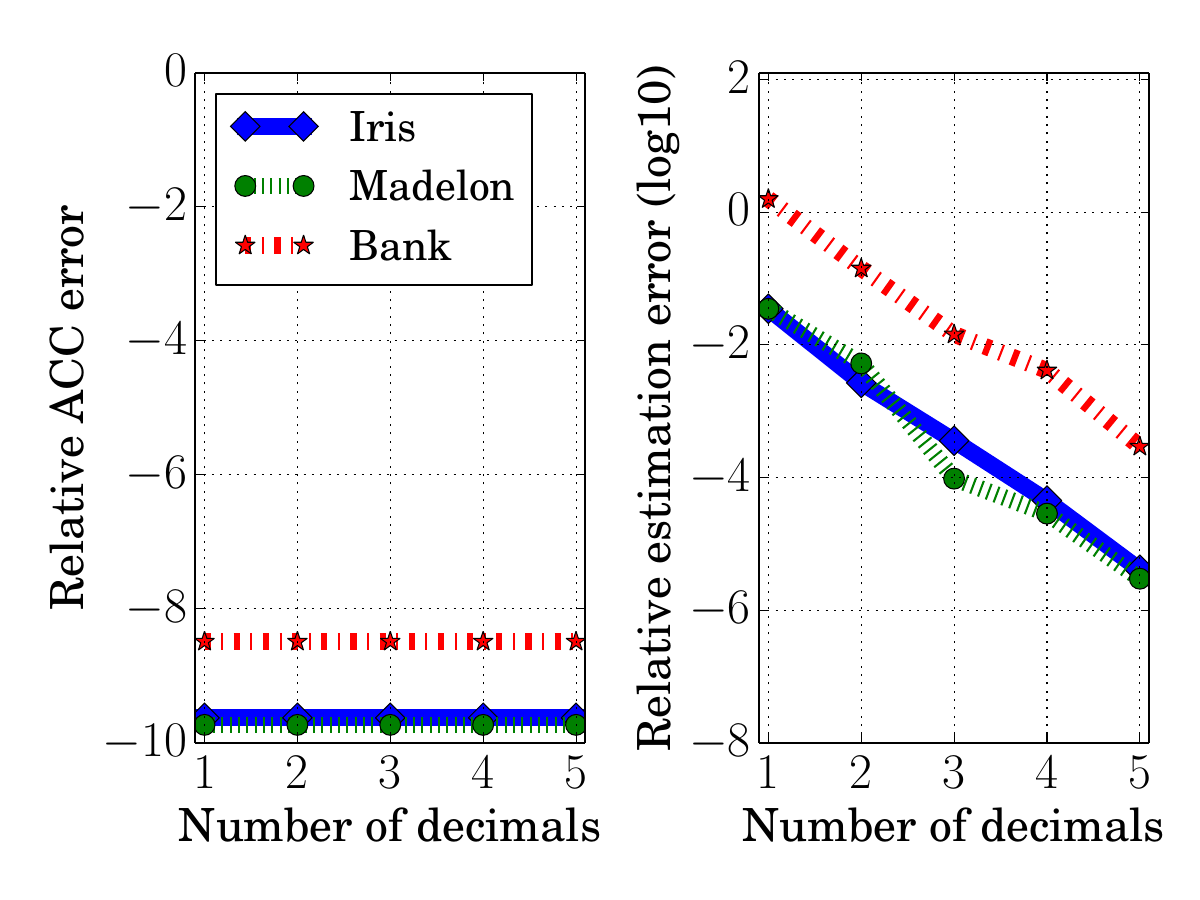}\label{nn_clf_defense}}
\caption{Defense results of the rounding technique for a) neural network regression algorithm and b) neural network classification algorithm. 
}
\label{nn_res_defense}
\vspace{-8mm}
\end{figure}

\subsubsection{Results}
\label{defense-eva}

%
%

Figure~\ref{reg_res_defense}, \ref{lr_res_defense}, \ref{clf_res_defense}, and~\ref{nn_res_defense} illustrate defense results for regression, logistic regression, SVM, and three-layer neural networks, respectively. \alan{Since we use log scale in the figures, we set the relative MSE (or ACC) errors to be $10^{-10}$ when they are 0.}
 
\noindent \textbf{Rounding is not effective enough for certain ML algorithms:} 
Rounding has small impact on the testing performance of the models. For instance, when we keep one decimal, 
all ML algorithms have relative MSE (or ACC) errors smaller than 2\%. Moreover, rounding model parameters increases the relative estimation errors of our attacks for all ML algorithms. 
However, for certain ML algorithms, the relative estimation errors are still very small when significantly rounding the model parameters, implying that our attacks are still very effective. For instance, for LASSO, our attacks have relative estimation errors that are consistently smaller than around $10^{-3}$ across the datasets, even if we round the model parameters to one decimal.  
These results highlight the needs for new countermeasures for certain ML algorithms. 
 

\noindent \textbf{Comparing regularization terms: $L_2$ regularization is more effective than $L_1$ regularization:} Different ML algorithms could use different regularization terms, so one natural question is which regularization term can more effectively defend against our attacks using rounding. All the SVM classification algorithms that we studied use $L_2$ regularization term. Therefore, we use results on regression algorithms and logistic regression classification algorithms (i.e., Figure~\ref{reg_res_defense} and Figure~\ref{lr_res_defense}) to compare regularization terms. In particular, we use three pairs: RR vs. LASSO, L2-LR vs. L1-LR, and L2-KLR vs. L1-KLR. The two algorithms in each pair have the same loss function, and use $L_2$ and $L_1$ regularizations, respectively.


We observe that $L_2$ regularization can more effectively defend against our attacks than $L_1$ regularization using rounding.
Specifically, the relative estimation errors of RR (or L2-LR or L2-KLR) increases faster than those of LASSO (or L1-LR or L1-KLR),
 as we round the model parameters to less decimals. For instance, when we round the model parameters to one decimal, the relative estimation errors increase by $10^{11}$ and $10^2$ for RR and LASSO on the Diabetes dataset, respectively, compared to those without rounding.

These observations are consistent with our Theorem~\ref{thm_2}. In particular, Appendix~\ref{approx} shows our approximations to the gradient $\nabla \hat{\lambda}(\mathbf{w}^{\star})$ in Theorem~\ref{thm_2} for RR, LASSO, L2-LR, L2-KLR, L1-LR, and L1-KLR.
For an algorithm with $L_2$ regularization, the magnitude of the gradient at the exact model parameters is inversely proportional to the $L_2$ norm of the model parameters. However, if the algorithm has $L_1$ regularization, the magnitude of the gradient is inversely proportional to the $L_2$ norm of the sign of the model parameters. For algorithms with $L_2$ regularization, the learnt model parameters are often small numbers, and thus the $L_2$ norm of the model parameters is smaller than that of the sign of the model parameters. As a result, the magnitude of the gradient for an algorithm with $L_2$ regularization is larger than that for an algorithm with $L_1$ regularization. Therefore, according to Theorem~\ref{thm_2}, when we round model parameters to less decimals, the estimation errors of an algorithm with $L_2$ regularization increase more than those with $L_1$ regularization.

%

\noindent \textbf{Comparing loss functions: cross entropy and square hinge loss can more effectively defend against our attacks than regular hinge loss:} We also compare defense effectiveness of different loss functions. Since all regression algorithms we studied have the same loss function, we use classification algorithms to compare loss functions. Specifically, we use two triples: (L2-LR, SVM-SHL, SVM-RHL) and  (L2-KLR, KSVM-SHL, KSVM-RHL). The three algorithms in each triple use cross entropy loss, square hinge loss, and regular hinge loss, respectively, while all using $L_2$ regularization. 

We find that cross entropy and square hinge loss have similar defense effectiveness against our attacks, while they can more effectively defend against our attacks than regular hinge loss using rounding. For instance, Figure~\ref{lossfunction} compares the relative estimation errors of the triple (L2-LR, SVM-SHL, SVM-RHL) on the  dataset Madelon when we use rounding. The relative estimation errors of L2-LR and SVM-SHL increase with a similar speed, but both increase faster than those of SVM-RHL, as we round the model parameters to less decimals.  For instance, when we round the model parameters to one decimal, the relative estimation errors increase by $10^{5}$, $10^6$, and $10^2$ for L2-LR, SVM-SHL, SVM-RHL on the  dataset Madelon, respectively, compared to those without rounding.


\subsection{Implications for MLaaS}





Recall that,  in Section~\ref{eval_attack}, we demonstrate that a user can use {M3}, i.e., the Train-Steal-Retrain strategy, to learn a model through an MLaaS platform with much less economical costs, while not sacrificing the model's testing performance. We aim to study whether M3 is still effective if the MLaaS rounds the model parameters. We follow the same experimental setup as in Section~\ref{eval_attack}, except that the MLaaS platform rounds the model parameters to one decimal before sharing them with the user. 
Figure~\ref{real_defense_reg} compares M3 with M1 with respect to relative  ACC error of M3 over M1. Note that the speedups of M3 over M1 are the same with those in Figure~\ref{ridge_UJIndoorLoc_m123}, so we do not show them again. We observe that M3 can still save many economical costs, though rounding makes the saved costs less. Specifically, when we sample 10\% of the training dataset, the relative ACC error of M3 is less than around 0.1\% in Figure~\ref{real_defense_reg}, while M3 is 6 times faster than M1 (see Figure~\ref{ridge_UJIndoorLoc_m123}).


\begin{figure}[t]
\vspace{-4mm}
\center
\subfloat[]{\includegraphics[width=0.25\textwidth]{./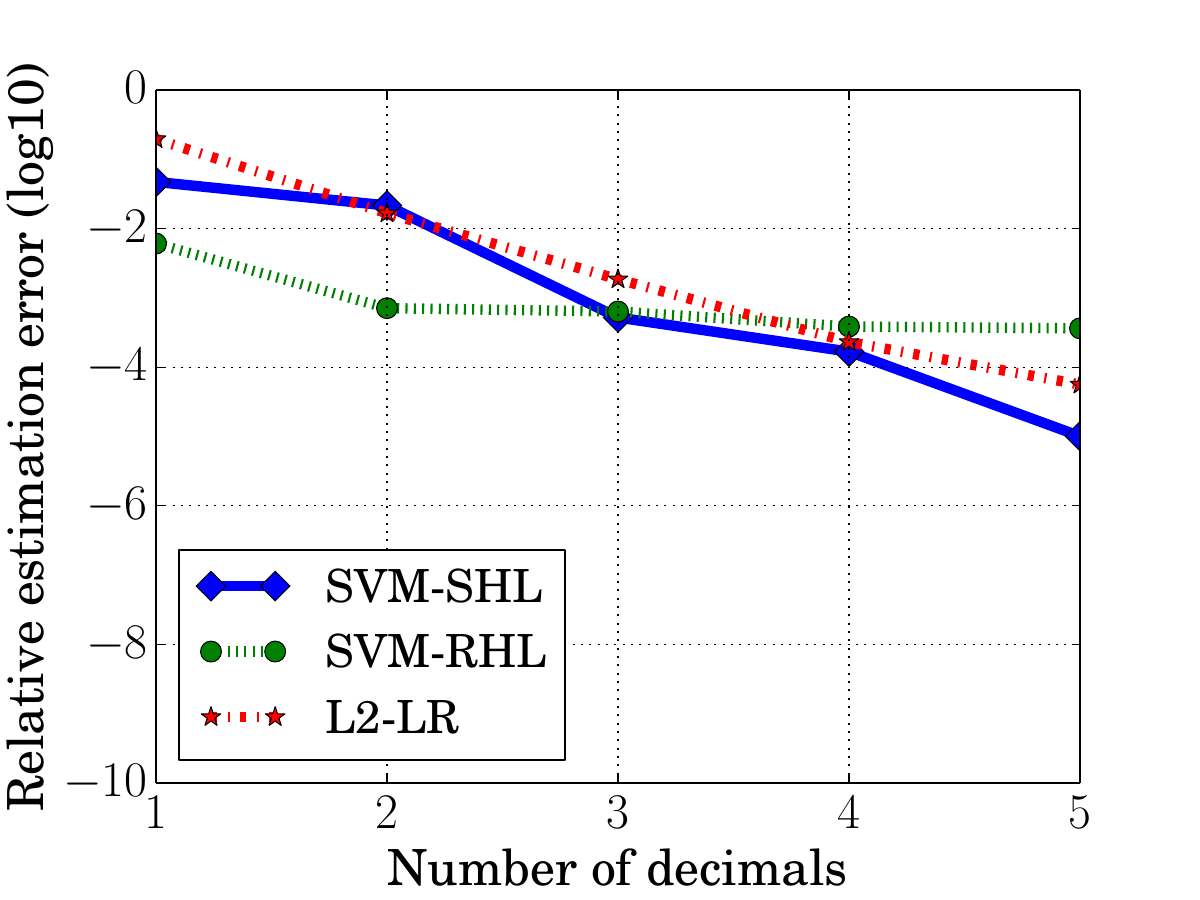}\label{lossfunction}}
\subfloat[]{\includegraphics[width=0.25\textwidth]{./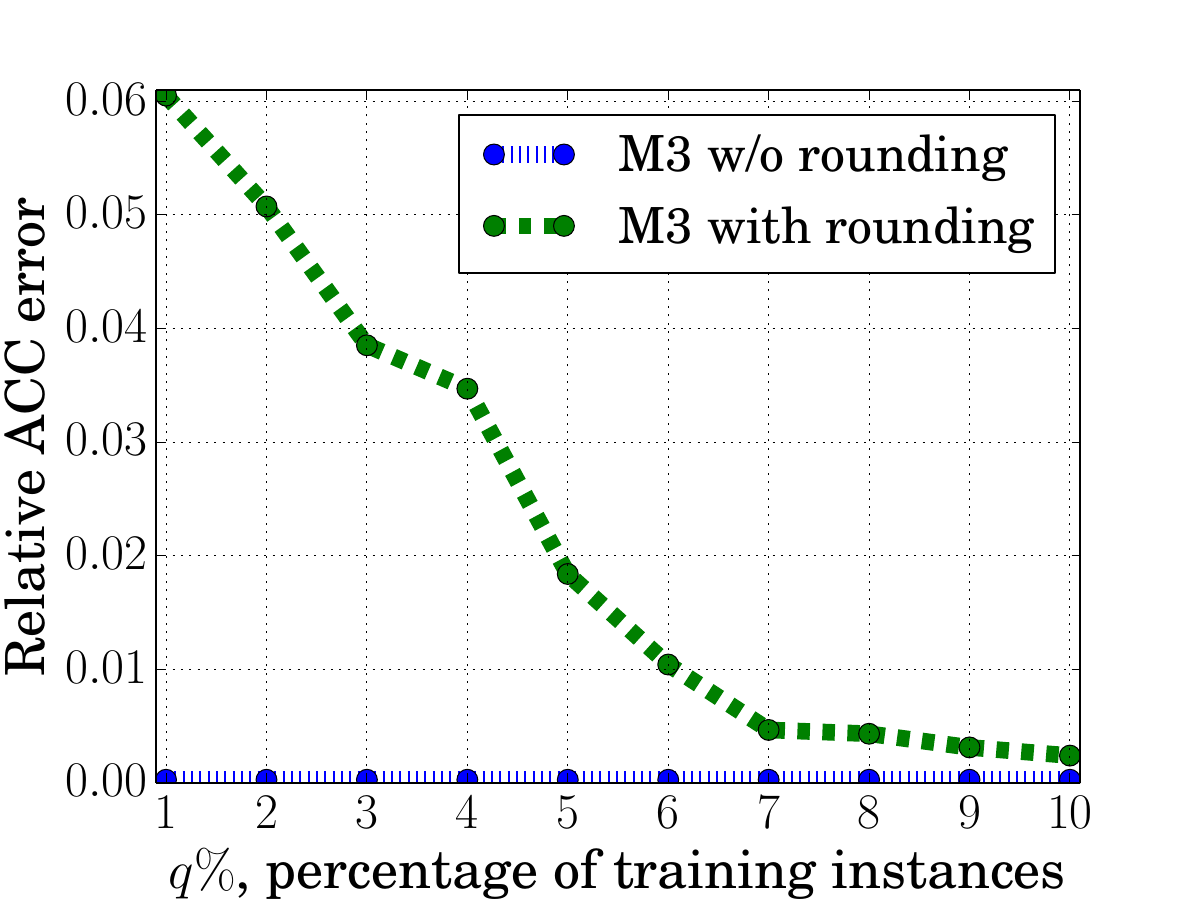}\label{svc_defense_m1_m3}\label{real_defense_reg}}
\caption{(a) Effectiveness of the rounding technique for different loss functions on the dataset Madelon. (b) Relative ACC error of M3 over M1 for SVM-SHL on the dataset Bank.}
\vspace{-4mm}
\end{figure}

\alan{
\subsection{Summary}
Through empirical evaluations, we have the following observations. First, rounding model parameters is not effective enough to prevent our attacks for certain ML algorithms. Second, $L_2$ regularization can more effectively defend against our attacks than  $L_1$ regularization. Third, cross entropy and square hinge loss have similar defense effectiveness. Moreover, they can more effectively defend against our attacks  than regular hinge loss. Fourth, the Train-Steal-Retrain strategy can still save lots of costs when MLaaS adopts rounding.
}

\section{Discussion}
\section{Conclusion and Future Work}

We demonstrate that various ML algorithms are vulnerable to hyperparameter stealing attacks. 
Our attacks encode the relationships between hyperparameters, model parameters, and training dataset into a system of linear equations, 
which is derived by setting the gradient of the objective function to be $\mathbf{0}$. 
Via both theoretical and empirical evaluations, we show that our attacks can accurately steal hyperparameters. 
Moreover, we find that rounding model parameters can increase the estimation errors of our attacks, with negligible 
impact on the testing performance of the model. However, for certain ML algorithms, our attacks still achieve very small estimation errors, highlighting the needs for new countermeasures.  
Future work includes studying security of other types of hyperparameters and new countermeasures. 

\noindent
{\bf Acknowledgement:} We thank the anonymous reviewers for their constructive comments. We also thank SigOpt for sharing a free API token. 

\bstctlcite{IEEEexample:BSTcontrol}

\bibliographystyle{IEEEtranS}
\bibliography{refs}

\begin{thebibliography}{10}
\providecommand{\url}[1]{#1}
\csname url@samestyle\endcsname
\providecommand{\newblock}{\relax}
\providecommand{\bibinfo}[2]{#2}
\providecommand{\BIBentrySTDinterwordspacing}{\spaceskip=0pt\relax}
\providecommand{\BIBentryALTinterwordstretchfactor}{4}
\providecommand{\BIBentryALTinterwordspacing}{\spaceskip=\fontdimen2\font plus
\BIBentryALTinterwordstretchfactor\fontdimen3\font minus
  \fontdimen4\font\relax}
\providecommand{\BIBforeignlanguage}[2]{{%
\expandafter\ifx\csname l@#1\endcsname\relax
\typeout{** WARNING: IEEEtranS.bst: No hyphenation pattern has been}%
\typeout{** loaded for the language `#1'. Using the pattern for}%
\typeout{** the default language instead.}%
\else
\language=\csname l@#1\endcsname
\fi
#2}}
\providecommand{\BIBdecl}{\relax}
\BIBdecl

\bibitem{amazon}
\BIBentryALTinterwordspacing
{AMAZON ML SERVICES}. (2017, May). [Online]. Available:
  \url{https://aws.amazon.com/cn/machine-learning}
\BIBentrySTDinterwordspacing

\bibitem{barreno2006can}
M.~Barreno, B.~Nelson, R.~Sears, A.~D. Joseph, and J.~D. Tygar, ``Can machine
  learning be secure?'' in \emph{ACM ASIACCS}, 2006.

\bibitem{biggio2013evasion}
B.~Biggio, I.~Corona, D.~Maiorca, B.~Nelson, N.~{\v{S}}rndi{\'c}, P.~Laskov,
  G.~Giacinto, and F.~Roli, ``Evasion attacks against machine learning at test
  time,'' in \emph{ECML-PKDD}.\hskip 1em plus 0.5em minus 0.4em\relax Springer,
  2013.

\bibitem{biggio2013poisoning}
B.~Biggio, L.~Didaci, G.~Fumera, and F.~Roli, ``Poisoning attacks to compromise
  face templates,'' in \emph{IEEE ICB}, 2013.

\bibitem{biggio2012poisoning}
B.~Biggio, B.~Nelson, and P.~Laskov, ``Poisoning attacks against support vector
  machines,'' in \emph{ICML}, 2012.

\bibitem{BigML}
\BIBentryALTinterwordspacing
BigML. (2017, May). [Online]. Available: \url{https://www.bigml.com}
\BIBentrySTDinterwordspacing

\bibitem{bishop2006pattern}
C.~M. Bishop, \emph{Pattern recognition and Machine Learning}.\hskip 1em plus
  0.5em minus 0.4em\relax Springer, 2006.

\bibitem{region}
X.~Cao and N.~Z. Gong, ``Mitigating evasion attacks to deep neural networks via
  region-based classification,'' in \emph{ACSAC}, 2017.

\bibitem{CarliniUsenixSecurity16}
N.~Carlini, P.~Mishra, T.~Vaidya, Y.~Zhang, M.~Sherr, C.~Shields, D.~Wagner,
  and W.~Zhou, ``Hidden voice commands,'' in \emph{USENIX Security Symposium},
  2016.

\bibitem{CarliniDistillation06}
N.~Carlini and D.~Wagner, ``Towards evaluating the robustness of neural
  networks,'' in \emph{IEEE S \& P}, 2017.

\bibitem{chang2011libsvm}
C.-C. Chang and C.-J. Lin, ``Libsvm: a library for support vector machines,''
  \emph{ACM TIST}, 2011.

\bibitem{chaudhuri2011differentially}
K.~Chaudhuri, C.~Monteleoni, and A.~D. Sarwate, ``Differentially private
  empirical risk minimization,'' \emph{JMLR}, pp. 1069--1109, 2011.

\bibitem{cortes1995support}
C.~Cortes and V.~Vapnik, ``Support-vector networks,'' \emph{Machine learning},
  1995.

\bibitem{fredrikson2015model}
M.~Fredrikson, S.~Jha, and T.~Ristenpart, ``Model inversion attacks that
  exploit confidence information and basic countermeasures,'' in \emph{ACM
  CCS}, 2015.

\bibitem{fredrikson2014privacy}
M.~Fredrikson, E.~Lantz, S.~Jha, S.~Lin, D.~Page, and T.~Ristenpart, ``Privacy
  in pharmacogenetics: An end-to-end case study of personalized warfarin
  dosing,'' in \emph{USENIX Security Symposium}, 2014.

\bibitem{goodfellow2016deep}
I.~Goodfellow, Y.~Bengio, and A.~Courville, \emph{Deep learning}.\hskip 1em
  plus 0.5em minus 0.4em\relax MIT Press, 2016.

\bibitem{goodfellow2014explaining}
I.~J. Goodfellow, J.~Shlens, and C.~Szegedy, ``Explaining and harnessing
  adversarial examples,'' \emph{arXiv}, 2014.

\bibitem{google}
\BIBentryALTinterwordspacing
{Google Cloud Platform}. (2017, May). [Online]. Available:
  \url{https://cloud.google.com/prediction}
\BIBentrySTDinterwordspacing

\bibitem{hoerl1970ridge}
A.~E. Hoerl and R.~W. Kennard, ``Ridge regression: Biased estimation for
  nonorthogonal problems,'' \emph{Technometrics}, 1970.

\bibitem{hosmer2013applied}
D.~W. Hosmer~Jr, S.~Lemeshow, and R.~X. Sturdivant, \emph{Applied logistic
  regression}.\hskip 1em plus 0.5em minus 0.4em\relax John Wiley \& Sons, 2013.

\bibitem{hsu2003practical}
C.-W. Hsu, C.-C. Chang, C.-J. Lin \emph{et~al.}, ``A practical guide to support
  vector classification,'' \emph{Preprint}, 2003.

\bibitem{huang2011adversarial}
L.~Huang, A.~D. Joseph, B.~Nelson, B.~I. Rubinstein, and J.~Tygar,
  ``Adversarial machine learning,'' in \emph{ACM AISec}, 2011.

\bibitem{spamEvasion16}
L.~Ke, B.~Li, and Y.~Vorobeychik, ``Behavioral experiments in email filter
  evasion,'' in \emph{AAAI}, 2016.

\bibitem{KloftpoisoningAnomalydetection}
M.~Kloft and P.~Laskov, ``Online anomaly detection under adversarial impact,''
  in \emph{AISTATS}, 2010.

\bibitem{ms}
\BIBentryALTinterwordspacing
M.~A.~M. Learning. (2017, May). [Online]. Available:
  \url{https://azure.microsoft.com/sevices/machine-learning}
\BIBentrySTDinterwordspacing

\bibitem{featurePositioning}
B.~Li and Y.~Vorobeychik, ``Feature cross-substitution in adversarial
  classification,'' in \emph{NIPS}, 2014.

\bibitem{poisoningattackRecSys16}
B.~Li, Y.~Wang, A.~Singh, and Y.~Vorobeychik, ``Data poisoning attacks on
  factorization-based collaborative filtering,'' in \emph{NIPS}, 2016.

\bibitem{liu2016delving}
Y.~Liu, X.~Chen, C.~Liu, and D.~Song, ``Delving into transferable adversarial
  examples and black-box attacks,'' in \emph{ICLR}, 2017.

\bibitem{lowd2005adversarial}
D.~Lowd and C.~Meek, ``Adversarial learning,'' in \emph{ACM SIGKDD}, 2005.

\bibitem{montgomery2015introduction}
D.~C. Montgomery, E.~A. Peck, and G.~G. Vining, \emph{Introduction to linear
  regression analysis}.\hskip 1em plus 0.5em minus 0.4em\relax John Wiley \&
  Sons, 2015.

\bibitem{Nelson08poisoningattackSpamfilter}
B.~Nelson, M.~Barreno, F.~J. Chi, A.~D. Joseph, B.~I.~P. Rubinstein, U.~Saini,
  C.~Sutton, J.~D. Tygar, and K.~Xia., ``Exploiting machine learning to subvert
  your spam filter,'' in \emph{LEET}, 2008.

\bibitem{Nelson09poisoningspamfilter}
B.~Nelson, M.~Barreno, F.~J. Chi, A.~D. Joseph, B.~I.~P. Rubinstein, U.~Saini,
  C.~Sutton, J.~D. Tygar, and K.~Xia, ``Misleading learners: Co-opting your
  spam filter,'' in \emph{Machine Learning in Cyber Trust: Security, Privacy,
  Reliability}, 2009.

\bibitem{nelson2010near}
B.~Nelson, B.~I. Rubinstein, L.~Huang, A.~D. Joseph, S.-h. Lau, S.~J. Lee,
  S.~Rao, A.~Tran, and J.~D. Tygar, ``Near-optimal evasion of convex-inducing
  classifiers.'' in \emph{AISTATS}, 2010.

\bibitem{newsome2005polygraph}
J.~Newsome, B.~Karp, and D.~Song, ``Polygraph: Automatically generating
  signatures for polymorphic worms,'' in \emph{IEEE S \& P}, 2005.

\bibitem{newsome2006paragraph}
J.~Newsome, B.~Karp, and D.~Song, ``Paragraph: Thwarting signature learning by
  training maliciously,'' in \emph{RAID Workshop}.\hskip 1em plus 0.5em minus
  0.4em\relax Springer, 2006.

\bibitem{PracticalBlackBox17}
N.~Papernot, P.~McDaniel, I.~Goodfellow, S.~Jha, Z.~B. Celik, and A.~Swami,
  ``Practical black-box attacks against machine learning,'' in \emph{AsiaCCS},
  2017.

\bibitem{Papernot16}
N.~Papernot, P.~McDaniel, S.~Jha, M.~Fredrikson, Z.~B. Celik, and A.~Swami,
  ``The limitations of deep learning in adversarial settings,'' in
  \emph{EuroS\&P}, 2016.

\bibitem{Papernot16SoK}
N.~Papernot, P.~McDaniel, A.~Sinha, and M.~Wellman, ``Towards the science of
  security and privacy in machine learning,'' in \emph{Arxiv}, 2016.

\bibitem{Papernot16Distillation}
N.~Papernot, P.~McDaniel, X.~Wu, S.~Jha, and A.~Swami, ``Distillation as a
  defense to adversarial perturbations against deep neural networks,'' in
  \emph{IEEE S \& P}, 2016.

\bibitem{pedregosa2011scikit}
F.~Pedregosa, G.~Varoquaux, A.~Gramfort, V.~Michel, B.~Thirion, O.~Grisel,
  M.~Blondel, P.~Prettenhofer, R.~Weiss, V.~Dubourg \emph{et~al.},
  ``Scikit-learn: Machine learning in python,'' \emph{JMLR}, 2011.

\bibitem{perdisci2006misleading}
R.~Perdisci, D.~Dagon, W.~Lee, P.~Fogla, and M.~Sharif, ``Misleading worm
  signature generators using deliberate noise injection,'' in \emph{IEEE S \&
  P}, 2006.

\bibitem{membershipLocation}
A.~Pyrgelis, C.~Troncoso, and E.~D. Cristofaro, ``Knock knock, who's there?
  membership inference on aggregate location data,'' in \emph{NDSS}, 2018.

\bibitem{rubinstein2009antidote}
B.~I. Rubinstein, B.~Nelson, L.~Huang, A.~D. Joseph, S.-h. Lau, S.~Rao,
  N.~Taft, and J.~Tygar, ``Antidote: understanding and defending against
  poisoning of anomaly detectors,'' in \emph{ACM IMC}, 2009.

\bibitem{sharif2016accessorize}
M.~Sharif, S.~Bhagavatula, L.~Bauer, and K.~M. Reiter, ``Accessorize to a
  crime: Real and stealthy attacks on state-of-the-art face recognition,'' in
  \emph{ACM CCS}, 2016.

\bibitem{membershipInfer}
R.~Shokri, M.~Stronati, C.~Song, and V.~Shmatikov, ``Membership inference
  attacks against machine learning models,'' in \emph{IEEE S \& P}, 2017.

\bibitem{sigopt}
\BIBentryALTinterwordspacing
{SigOpt}. (2017, December). [Online]. Available: \url{https://sigopt.com/}
\BIBentrySTDinterwordspacing

\bibitem{SmutzPDFRate}
C.~Smutz and A.~Stavrou, ``Malicious pdf detection using metadata and
  structural features,'' in \emph{ACSAC}, 2012.

\bibitem{membershipInferCCS17}
C.~Song, T.~Ristenpart, and V.~Shmatikov, ``Machine learning models that
  remember too much,'' in \emph{CCS}, 2017.

\bibitem{dropout}
N.~Srivastava, G.~Hinton, A.~Krizhevsky, I.~Sutskever, and R.~Salakhutdinov,
  ``Dropout: A simple way to prevent neural networks from overfitting,''
  \emph{JMLR}, 2014.

\bibitem{SrndicHidost}
N.~Srndic and P.~Laskov, ``Detection of malicious pdf files based on
  hierarchical document structure,'' in \emph{NDSS}, 2013.

\bibitem{laskov2014practical}
N.~Srndic and P.~Laskov, ``Practical evasion of a learning-based classifier: A
  case study,'' in \emph{IEEE S \& P}, 2014.

\bibitem{szegedy2013intriguing}
C.~Szegedy, W.~Zaremba, I.~Sutskever, J.~Bruna, D.~Erhan, I.~Goodfellow, and
  R.~Fergus, ``Intriguing properties of neural networks,'' \emph{arXiv}, 2013.

\bibitem{tibshirani1996regression}
R.~Tibshirani, ``Regression shrinkage and selection via the lasso,''
  \emph{JRSSB}, 1996.

\bibitem{tramer2016stealing}
F.~Tram{\`e}r, F.~Zhang, A.~Juels, M.~K. Reiter, and T.~Ristenpart, ``Stealing
  machine learning models via prediction apis,'' in \emph{USENIX Security
  Symposium}, 2016.

\bibitem{vovk2013kernel}
V.~Vovk, ``Kernel ridge regression,'' in \emph{Empirical inference}.\hskip 1em
  plus 0.5em minus 0.4em\relax Springer, 2013.

\bibitem{XuEvasionMalware}
W.~Xu, Y.~Qi, and D.~Evans, ``Automatically evading classifiers: A case study
  on pdf malware classifiers,'' in \emph{NDSS}, 2016.

\bibitem{YangRecSys17}
G.~Yang, N.~Z. Gong, and Y.~Cai, ``Fake co-visitation injection attacks to
  recommender systems,'' in \emph{NDSS}, 2017.

\bibitem{zou2005regularization}
H.~Zou and T.~Hastie, ``Regularization and variable selection via the elastic
  net,'' \emph{JRSSB}, 2005.

\end{thebibliography}

\appendices

\begin{figure}
\center
\includegraphics[width=0.4\textwidth]{./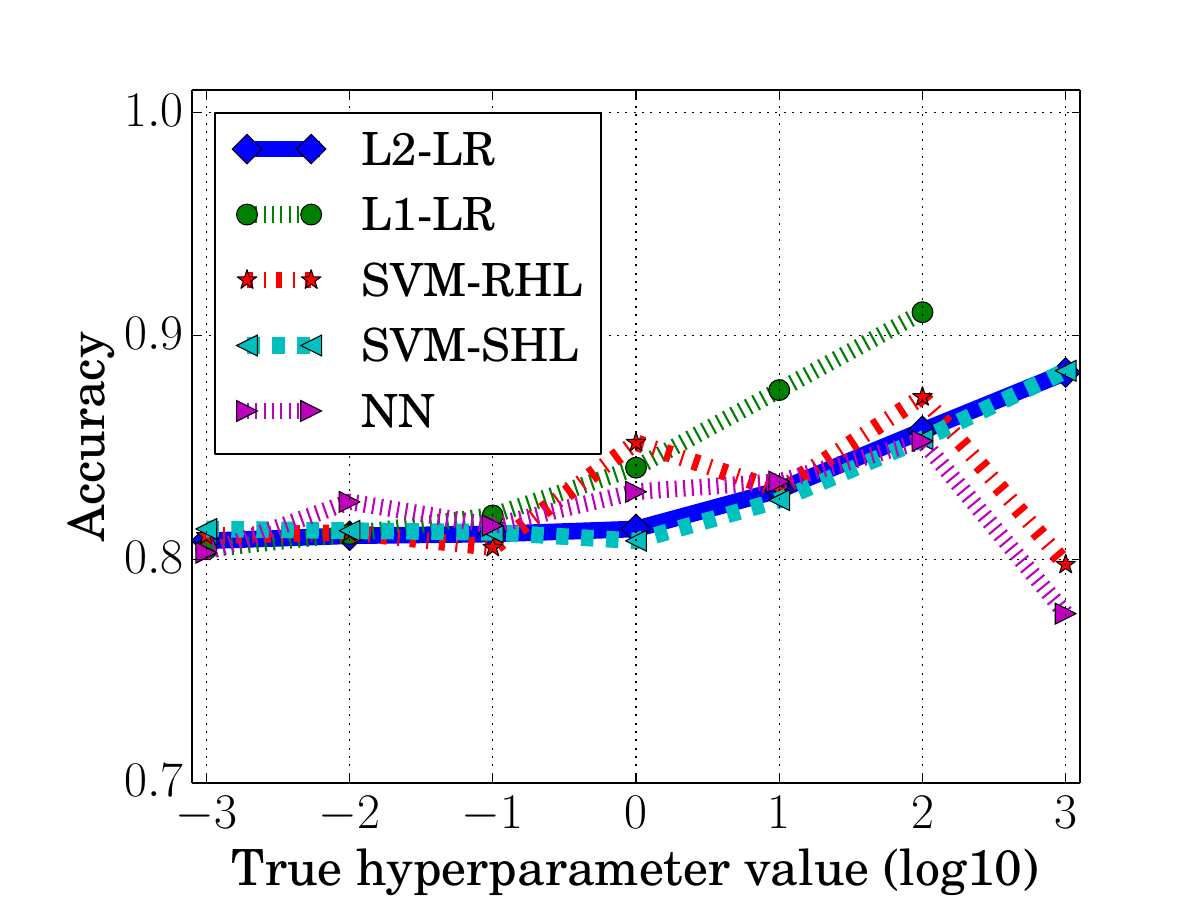}
\caption{Testing accuracy vs. hyperparameter (log10 scale) of classification algorithms on Madelon in Table~\ref{data_reg}.} 
\label{accvshyper}
\end{figure}

\section{Attacks to Other Learning Algorithms}
\label{app:other}

\myparatight{LASSO}
The objective function of LASSO is:
\begin{small}
\begin{align}
\label{lasso}
\mathcal{L}(\mathbf{w}) = \| \mathbf{y}-\mathbf{X}^T\mathbf{w} \|_2^2 + \lambda \| \mathbf{w} \|_1,
\end{align}
\end{small}
whose gradient  is:
\begin{small}
\begin{align*}
\frac{\partial \mathcal{L}(\mathbf{w})}{\partial \mathbf{w}} = -2 \mathbf{X y} + 2 \mathbf{X}\mathbf{X}^T\mathbf{w} + \lambda \text{sign}(\mathbf{w}),
\end{align*}
\end{small}
where 
  $| w_i |$ is \emph{not differentiable} when $w_i=0$, so
we define the derivative at $w_i=0$ as 0, which means that we do not use the model parameters that are 0 to estimate the hyperparameter. By setting the gradient to be $\mathbf{0}$, we can estimate $\lambda$ using Eqn.~\ref{fianl_est} with $\mathbf{a}=\text{sign}(\mathbf{w})$ and $\mathbf{b}=-2 \mathbf{X y} + 2 \mathbf{X}\mathbf{X}^T\mathbf{w}$.
 
 We note that if $\lambda \geq \lambda_{\max} = \| \mathbf{X} \mathbf{y} \|_{\infty}$, then $\mathbf{w} = \mathbf{0}$. In such cases, we cannot estimate the exact hyperparameter. However, in practice,  $\lambda < \lambda_{\max}$ must hold in order to learn meaningful model parameters. 

\myparatight{$L_2$-regularized LR (L2-LR)} Its objective function is 
\begin{small}
\begin{align}
\label{l2lr}
\mathcal{L}(\mathbf{w}) = \text{L}(\mathbf{X}, \mathbf{y}, \mathbf{w}) + \lambda \| \mathbf{w} \|_2^2,
\end{align}
\end{small}
where $\text{L}(\mathbf{X}, \mathbf{y}, \mathbf{w})$=$-\sum_{i=1}^n  ( y_i \, \log h_\mathbf{w}(\mathbf{x}_i) + (1-y_i)$ $ \log (1- h_\mathbf{w}(\mathbf{x}_i)))$, which is called \emph{cross entropy} loss function.  $h_\mathbf{w}(\mathbf{x})$ is defined to be $\frac{1}{1+\exp{(- \mathbf{w}^T \mathbf{x})}} $. 
The gradient of the objective function with respect to $\mathbf{w}$ is:
\begin{small}
\begin{align*}
\begin{split}
& \frac{\partial \mathcal{L}(\mathbf{w})}{\partial \mathbf{w}} =
\mathbf{X} (h_\mathbf{w}(\mathbf{X})  - \mathbf{y}) + 2 \lambda \mathbf{w},
\end{split}
\end{align*}
\end{small}
where  $h_\mathbf{w}(\mathbf{X})  = \left[ h_\mathbf{w}(\mathbf{x}_1); h_\mathbf{w}(\mathbf{x}_2); \cdots; h_\mathbf{w}(\mathbf{x}_n) \right]$ is a vector. 
Via setting the gradient to be \textbf{0}, we can estimate $\lambda$ using Eqn.~\ref{fianl_est} with
$\mathbf{a} = 2 \mathbf{w}$ and $\mathbf{b} = \mathbf{X} (h_\mathbf{w}(\mathbf{X})   - \mathbf{y})$.


\myparatight{SVM with regular hinge loss (SVM-RHL)} 
The objective function of SVM-RHL is:
\begin{small}
\begin{align}
\label{hl_bsvc}
\mathcal{L}(\mathbf{w}) = \sum_{i=1}^n L(\mathbf{x}_i, y_i, \mathbf{w}) + \lambda \| \mathbf{w} \|_2^2, 
\end{align}
\end{small}
where $L(\mathbf{x}_i, y_i, \mathbf{w}) = \max(0, 1 - y_i  \mathbf{w}^T \mathbf{x}_i)$ is called regular hinge loss function.  
The gradient with respect to $\mathbf{w}$ is: 
\begin{small}
\begin{align*}
\frac{\partial L}{\partial \mathbf{w}} = 
\begin{cases}
-y_i \mathbf{x}_i & \text{if } y_i  \mathbf{w}^T \mathbf{x}_i  < 1 \\
\mathbf{0} & \text{if } y_i  \mathbf{w}^T \mathbf{x}_i  > 1,
\end{cases}
\end{align*}
\end{small}
where $L(\mathbf{x}_i, y_i, \mathbf{w})$ is non-differentiable at the point where  $y_i  \mathbf{w}^T \mathbf{x}_i  = 1$.
 We estimate $\lambda$ using only training instances $\mathbf{x}_i$ that satisfy $y_i  \mathbf{w}^T \mathbf{x}_i  < 1$. 
Specifically, we estimate $\lambda$ using Eqn.~\ref{fianl_est} with $\mathbf{a} = 2 \mathbf{w}$ and $\mathbf{b} = \sum_{i=1}^n -y_i \mathbf{x}_i \mathbf{1}_{y_i  \mathbf{w}^T \mathbf{x}_i < 1}$, where $\mathbf{1}_{y_i  \mathbf{w}^T \mathbf{x}_i < 1}$ is an indicator function with value 1 if $ y_i  \mathbf{w}^T \mathbf{x}_i < 1 $ and 0 otherwise. 

\myparatight{SVM with square hinge loss (SVM-SHL)}  
The objective function of SVM-SHL is:
\begin{small}
\begin{align}
\label{hl_bsvc}
\mathcal{L}(\mathbf{w}) = \sum_{i=1}^n L(\mathbf{x}_i, y_i, \mathbf{w}) + \lambda \| \mathbf{w} \|_2^2, 
\end{align}
\end{small}
where $L(\mathbf{x}_i, y_i, \mathbf{w}) = \max(0, 1 - y_i  \mathbf{w}^T \mathbf{x}_i)^2$ is called square hinge loss function.  
The gradient with respect to $\mathbf{w}$ is: 
\begin{small}
\begin{align*}
\frac{\partial L}{\partial \mathbf{w}} = 
\begin{cases}
-2y_i \mathbf{x}_i (1 - y_i  \mathbf{w}^T \mathbf{x}_i) & \text{if } y_i  \mathbf{w}^T \mathbf{x}_i  \leq 1 \\
\mathbf{0} & \text{if } y_i  \mathbf{w}^T \mathbf{x}_i  > 1.
\end{cases}
\end{align*}
\end{small}
  
Therefore, we estimate $\lambda$ using Eqn.~\ref{fianl_est} with $\mathbf{a} =  \mathbf{w}$ and $\mathbf{b} = \sum_{i=1}^n - y_i \mathbf{x}_i (1 - y_i  \mathbf{w}^T \mathbf{x}_i) \mathbf{1}_{y_i  \mathbf{w}^T \mathbf{x}_i \leq 1}$.



\myparatight{$L_1$-regularized kernel LR (L1-KLR)} Its objective function is:
\begin{small}
\begin{align}
\label{l2klr_dual}
\begin{split}
\mathcal{L}(\bm{\alpha}) =\text{L}(\mathbf{X}, \mathbf{y}, \mathbf{\bm \alpha}) + 
\lambda \| \mathbf{K} \bm{\alpha} \|_1,
\end{split}
\end{align} 
\end{small}
where $\text{L}(\mathbf{X}, \mathbf{y}, \mathbf{\bm \alpha}) = $ $-\sum_{i=1}^n ( y_i \, \log h_\mathbf{\bm \alpha}(\mathbf{x}_i) + (1-y_i)$ $ \log (1- h_\mathbf{\bm \alpha}(\mathbf{x}_i)))$ and $h_\mathbf{\bm \alpha}(\mathbf{x}) = \frac{1}{1 + \exp{(- \sum_{j=1}^n  \alpha_j \phi(\mathbf{x})^T\phi(\mathbf{x}_j))}}$. 
The gradient of the objective function with respect to $\bm{\alpha}$ is:
\begin{small}
\begin{align*}
\begin{split}
\frac{\partial \mathcal{L}(\bm{\alpha})}{\partial \bm{\alpha}} 
& = \mathbf{K} (h_\mathbf{\bm \alpha}(\mathbf{X}) - \mathbf{y} +  \lambda \mathbf{t}),
\end{split}
\end{align*}
\end{small}
where $h_\mathbf{\bm \alpha}(\mathbf{X}) = \left[ h_\mathbf{\bm \alpha}(\mathbf{x}_1); h_\mathbf{\bm \alpha}(\mathbf{x}_2); \cdots; h_\mathbf{\bm \alpha}(\mathbf{x}_n) \right]$ 
and $\mathbf{t}=\text{sign}(\mathbf{K} {\bm \alpha})$.
Via setting the gradient to be \textbf{0} and considering that $\mathbf{K}$ is invertible, we can estimate $\lambda$ using Eqn.~\ref{fianl_est} with  
$\mathbf{a} = \mathbf{t}$ and $\mathbf{b} = h_\mathbf{\bm \alpha}(\mathbf{X}) - \mathbf{y}$.

\myparatight{$L_2$-regularized kernel LR (L2-KLR)} Its objective function of L1-KLR is: 
\begin{small}
\begin{align}
\label{l2klr_dual}
\begin{split}
\mathcal{L}(\bm{\alpha}) =\text{L}(\mathbf{X}, \mathbf{y}, \mathbf{\bm \alpha}) + 
\lambda \bm{\alpha}^T \mathbf{K} \bm{\alpha},
\end{split}
\end{align} 
\end{small}
where $\text{L}(\mathbf{X}, \mathbf{y}, \mathbf{\bm \alpha}) = $ $-\sum_{i=1}^n ( y_i \, \log h_\mathbf{\bm \alpha}(\mathbf{x}_i) + (1-y_i)$ $ \log (1- h_\mathbf{\bm \alpha}(\mathbf{x}_i)))$ is a cross entropy loss function and $h_\mathbf{\bm \alpha}(\mathbf{x}) = \frac{1}{1 + \exp{(- \sum_{j=1}^n  \alpha_j \phi(\mathbf{x})^T\phi(\mathbf{x}_j))}}$. 
The gradient of the objective function with respect to $\bm{\alpha}$ is:
\begin{small}
\begin{align*}
\begin{split}
\frac{\partial \mathcal{L}(\bm{\alpha})}{\partial \bm{\alpha}} 
& = \mathbf{K} (h_\mathbf{\bm \alpha}(\mathbf{X}) - \mathbf{y} + 2 \lambda \bm{\alpha}),
\end{split}
\end{align*}
\end{small}
where $h_\mathbf{\bm \alpha}(\mathbf{X})= \left[ h_\mathbf{\bm \alpha}(\mathbf{x}_1); h_\mathbf{\bm \alpha}(\mathbf{x}_2); \cdots; h_\mathbf{\bm \alpha}(\mathbf{x}_n) \right]$.
Via setting the gradient to be \textbf{0} and considering that $\mathbf{K}$ is invertible, we can estimate $\lambda$ using Eqn.~\ref{fianl_est} with  
$\mathbf{a} = 2 \bm{\alpha}$ and $\mathbf{b} = h_\mathbf{\bm \alpha}(\mathbf{X}) - \mathbf{y}$.



\myparatight{Kernel SVM with square hinge loss (KSVM-SHL)}
The objective function of KSVM-SHL  is:
\begin{small}
\begin{align}
\label{kernel_bsvc_dual}
\begin{split}
\mathcal{L}(\bm{\alpha}) = \sum_{i=1}^n L(\mathbf{x}_i, y_i,  \bm{\alpha}) + \lambda \bm{\alpha}^T \mathbf{K} \bm{\alpha},
\end{split}
\end{align}
\end{small}
where $L(\mathbf{x}_i, y_i,  \bm{\alpha}) = \max(0, 1 - y_i  \bm{\alpha}^T \mathbf{k}_i )^2$.
Following the same methodology we used for SVM-SHL,  
we estimate $\lambda$ using Eqn.~\ref{fianl_est} with $\mathbf{a} = \mathbf{K} \bm{\alpha}$ and $\mathbf{b} = \sum_{i=1}^n -y_i \mathbf{k}_i (1-y_i  \bm{\alpha}^T \mathbf{k}_i) \mathbf{1}_{y_i  \bm{\alpha}^T \mathbf{k}_i \leq 1}$. 

\begin{figure*}
\center
\subfloat[Attack]{\includegraphics[width=0.45\textwidth]{./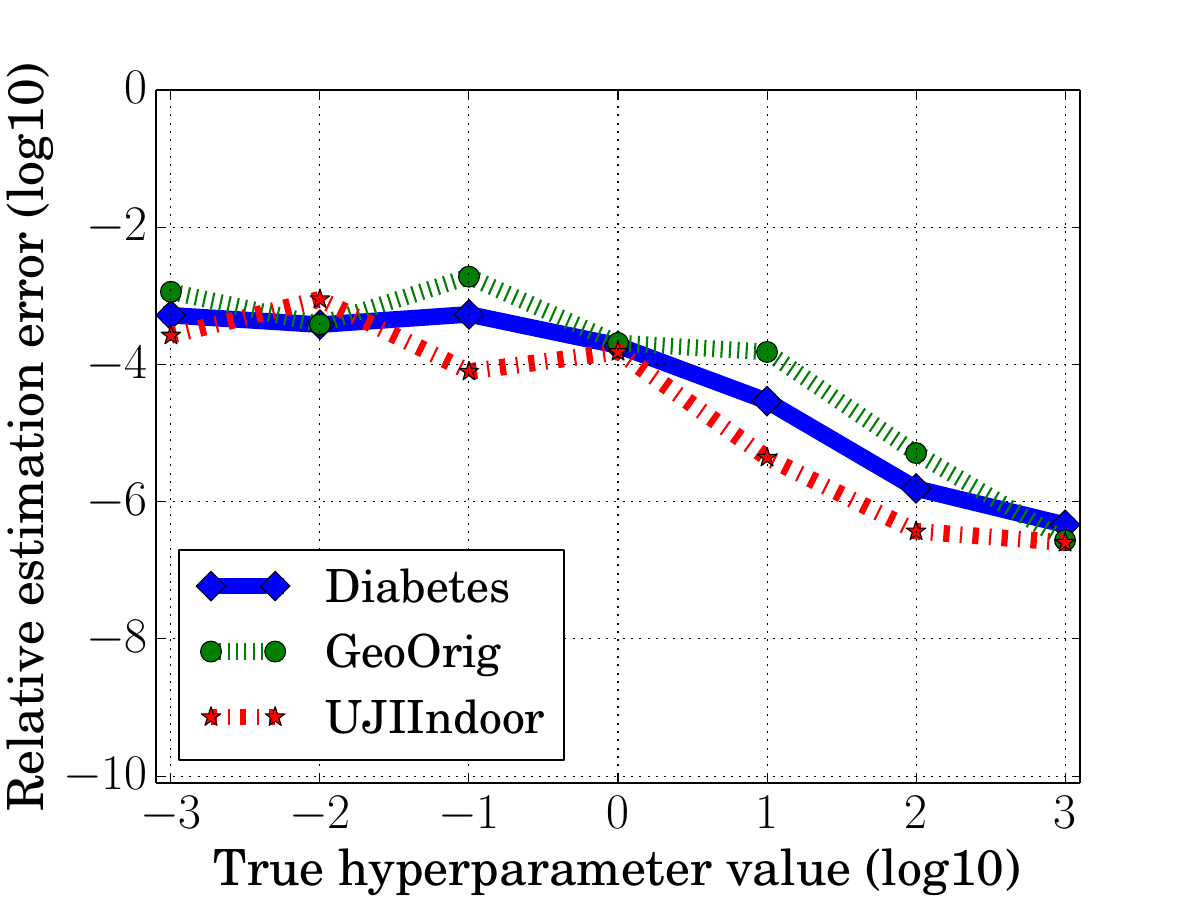}\label{ent_attack}} 
\subfloat[Defense]{\includegraphics[width=0.45\textwidth]{./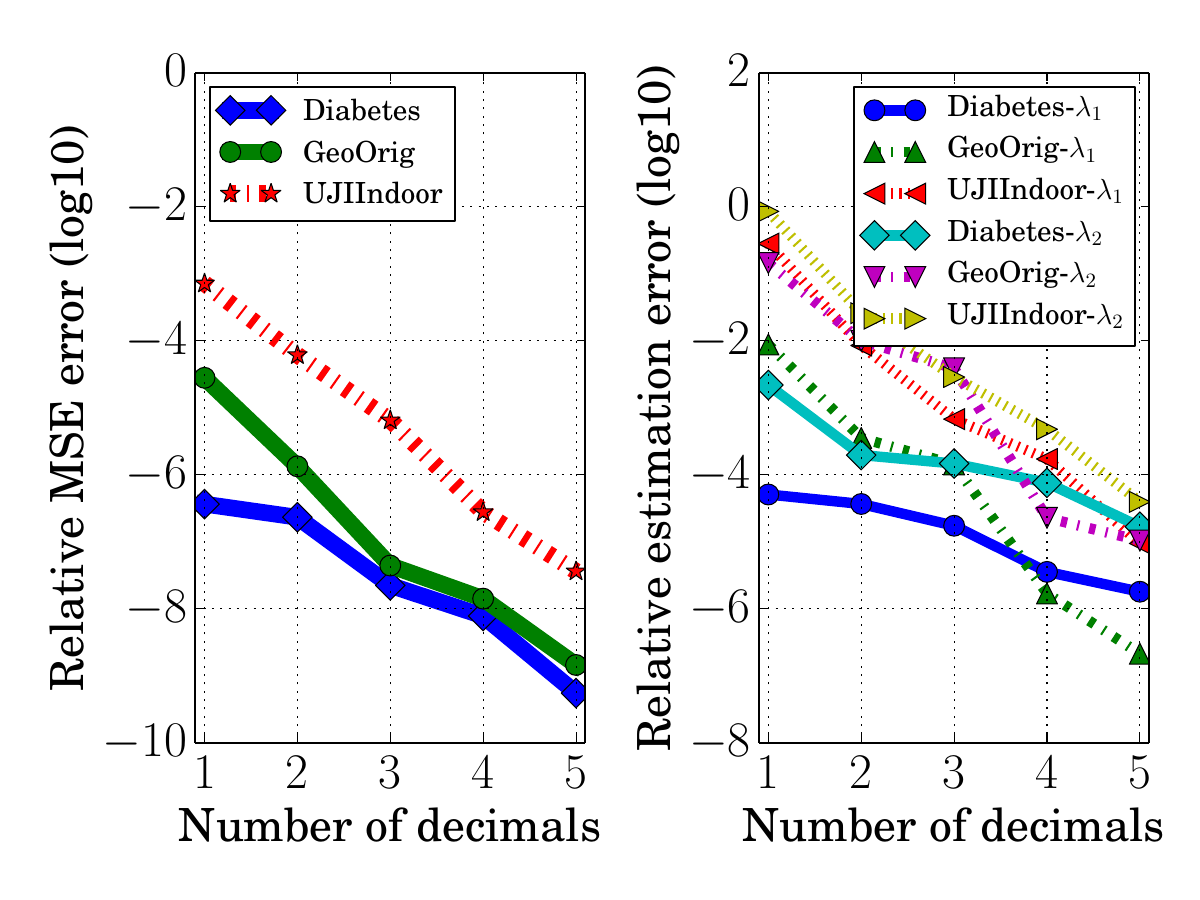}\label{enet_defense}}
\caption{Attack and defense results of ENet.}
\label{enet_res}
\end{figure*}

%
%
%
%
%

\section{More than One Hyperparameter}
\label{app:enet}
We use a popular regression algorithm called \emph{Elastic Net (ENet)}~\cite{zou2005regularization} as an example to illustrate how we can apply attacks 
to learning algorithms with more than one hyperparameter. 
The objective function of ENet is: 
\begin{small}
\begin{align}
\label{enet}
\mathcal{L}(\mathbf{w}) = \| \mathbf{y}-\mathbf{X}^T\mathbf{w} \|_2^2 + \lambda_1 \| \mathbf{w} \|_1 + \lambda_2 \| \mathbf{w} \|_2^2,
\end{align}
\end{small}
where the loss function is least square and regularization term is the combination of $L_2$ regularization and $L_1$ regularization. 
We compute the gradient as follows: 
\begin{small}
\begin{align*}
\frac{\partial \mathcal{L}(\mathbf{w})}{\partial \mathbf{w}} = -2 \mathbf{X y} + 2 \mathbf{X}\mathbf{X}^T\mathbf{w} + \lambda_1 \text{sign}(\mathbf{w}) + 2 \lambda_2 \mathbf{w} \odot |\text{sign}(\mathbf{w})|,
\end{align*}
\end{small}
where $\mathbf{w} \odot |\text{sign}(\mathbf{w})|=[w_1|\text{sign}(\mathbf{w}_1)|; \cdots;w_m|\text{sign}(\mathbf{w}_m)|]$. 
Similar to LASSO, we do not use the model parameters that are 0 to estimate the hyperparameters. Via setting the gradient to $\mathbf{0}$ and using the linear least square to solve the overdetermined system, we have:
\begin{small}
\begin{align}
\hat{\bm{\lambda}} = -\left( \mathbf{A}^T \mathbf{A} \right)^{-1} \mathbf{A}^T \mathbf{b},
\end{align}
\end{small}
where $\hat{\bm{\lambda}}=[\hat{\lambda}_1;\hat{\lambda}_2]$,  $\mathbf{A}=[\text{sign}(\mathbf{w}); 2\mathbf{w} \odot |\text{sign}(\mathbf{w})|]$, and $\mathbf{b}=-2 \mathbf{X y} + 2 \mathbf{X}\mathbf{X}^T\mathbf{w}$.

Figure~\ref{enet_res} shows the attack and defense results for ENet on the three regression datasets. We observe that our attacks are effective for learning algorithms with more than one hyperparameter, and rounding is also an effective defense.


\section{Neural Network (NN)}
\label{neuralNet}
We evaluate attack and defense on a three-layer neural network (NN) for both regression and classification.

\myparatight{Regression} The objective function of the three-layer NN for regression is defined as
\begin{small}
\begin{align}
\mathcal{L}(\mathbf{W}_1, \mathbf{w}_2) = \| \mathbf{y}- \hat{\mathbf{y}} \|_2^2 + \lambda \big( \| \mathbf{W}_1 \|_F^2  + \| \mathbf{w}_2 \|_2^2 \big),
\end{align}
\end{small}
where $ \hat{\mathbf{y}} = \textbf{sig} \big( \mathbf{X}^T  \mathbf{W}_1 + \mathbf{b}_1 \big) \mathbf{w}_2  + \mathbf{b}_2$; $\textbf{sig}(\mathbf{A}) = \frac{1}{1+ \exp(-\mathbf{A})}$ is the logistic function; $\mathbf{W}_1 \in \mathbb{R}^{m \times d}$ is the weight matrix of input layer to hidden layer and $\mathbf{w}_2 \in \mathbf{R}^{d}$ is the weight vector of hidden layer to output layer; $d$ is the number of hidden units; $\mathbf{b}_1$ and $\mathbf{b}_2$ are the bias terms of the two layers. 

To perform our hyperparameter stealing attack, we can leverage the gradient of $\mathcal{L}(\mathbf{W}_1, \mathbf{w}_2)$ with respect to either $\mathbf{W}_1$ or $\mathbf{w}_2$. For simplicity, we use $\mathbf{w}_2$. Specifically, the gradient of the objective function of $\mathbf{w}_2$ is:
\begin{small}
\begin{align*}
\frac{\partial \mathcal{L}(\mathbf{W}_1, \mathbf{w}_2)}{\partial \mathbf{w}_2} = 2 \textbf{sig} \big( \mathbf{X}^T  \mathbf{W}_1 + \mathbf{b}_1 \big)^T (\mathbf{y}- \hat{\mathbf{y}}) + 2 \lambda \mathbf{w}_2.
\end{align*}
\end{small}
By setting the gradient to be $\mathbf{0}$, we can estimat $\lambda$ using Eqn.~\ref{fianl_est} with $\mathbf{a} = \mathbf{w}_2 $ and $\mathbf{b} = \textbf{sig} \big( \mathbf{X}^T  \mathbf{W}_1 + \mathbf{b}_1 \big)^T (\mathbf{y}- \hat{\mathbf{y}})$.

\myparatight{Classification} The objective function of the three-layer NN for binary classification is defined as
\begin{small}
\begin{align}
\mathcal{L}(\mathbf{W}_1, \mathbf{w}_2) &= - \sum_{i=1}^n (y_i \log \hat{y}_i + (1-y_i) \log (1 - \hat{y}_i)) \nonumber \\
&+ \frac{\lambda}{2} \big( \| \mathbf{W}_1 \|_F^2  + \| \mathbf{w}_2 \|_2^2 \big),
\end{align}
\end{small}
where $\hat{y}_i = \textbf{sig} \big( \mathbf{w}_2^T \textbf{sig} \big( \mathbf{W}_1^T \mathbf{x}_i + \mathbf{b}_1 \big)  + {b}_2 \big) $.
Similarly with regression, we use $\mathbf{w}_2$ to steal the hyperparameter $\lambda$. The gradient of the objective function of $\mathbf{w}_2$ is:
\begin{small}
\begin{align*}
\frac{\partial \mathcal{L}(\mathbf{W}_1, \mathbf{w}_2)}{\partial \mathbf{w}_2} = - \sum_{i=1}^n (y_i - \hat{y}_i) \textbf{sig} \big( \mathbf{W}_1^T \mathbf{x}_i + \mathbf{b}_1 \big) + \lambda \mathbf{w}_2.
\end{align*}
\end{small}
Setting the gradient to be $\mathbf{0}$, we can estimat $\lambda$ via Eqn.~\ref{fianl_est} with $\mathbf{a} =  \mathbf{w}_2 $ and $\mathbf{b} = - \sum_{i=1}^n (y_i - \hat{y}_i) \textbf{sig} \big( \mathbf{W}_1^T \mathbf{x}_i + \mathbf{b}_1 \big) $.

\section{Proof of Theorem 5.1}
\label{app:thm_1}

When $\mathbf{w}$ is an exact minimum of the objective function, we have $\mathbf{b}=-{\lambda} {\mathbf{a}}$.
Therefore, we have:  
\begin{align*}
\hat{\lambda} & = - \big( \mathbf{a}^T \mathbf{a} \big)^{-1} \mathbf{a}^T \mathbf{b} = - \big( \mathbf{a}^T \mathbf{a} \big)^{-1} \mathbf{a}^T (- \lambda \mathbf{a}) \\
			& = \lambda \big( \mathbf{a}^T \mathbf{a} \big)^{-1} \mathbf{a}^T \mathbf{a} = \lambda. 
\end{align*}
\section{Proof of Theorem 5.2}
\label{app:thm_2}
We prove the theorem for linear learning algorithms, as it is similar for kernel learning algorithms. 
We treat our estimated hyperparameter $\hat{\lambda}$ as a function of model parameters.
We expand $\hat{\lambda}(\mathbf{w}^{\star} + \Delta \mathbf{w})$ at $\mathbf{w}^{\star}$ using Taylor expansion:
\begin{align*}
 \hat{\lambda}(\mathbf{w}^{\star} + \Delta \mathbf{w}) &= \hat{\lambda}(\mathbf{w}^{\star}) + \Delta \mathbf{w}^T \nabla \hat{\lambda}(\mathbf{w}^{\star}) \\
 & + \frac{1}{2} \Delta \mathbf{w}^T \nabla^2 \hat{\lambda}(\mathbf{w}^{\star}) \Delta \mathbf{w} +\cdots \\
 &= \hat{\lambda}(\mathbf{w}^{\star}) + \Delta \mathbf{w}^T \nabla \hat{\lambda}(\mathbf{w}^{\star}) + O(\|\Delta \mathbf{w}\|_2^2) \\
 &= \lambda + \Delta \mathbf{w}^T \nabla \hat{\lambda}(\mathbf{w}^{\star}) + O(\|\Delta \mathbf{w}\|_2^2)
\end{align*}
Therefore, $\Delta \lambda$=$ \hat{\lambda}(\mathbf{w}^{\star} + \Delta \mathbf{w}) - \lambda$ =$\Delta \mathbf{w}^T \nabla \hat{\lambda}(\mathbf{w}^{\star}) + O(\|\Delta \mathbf{w}\|_2^2)$.

\section{Approximations of Gradients}
\label{approx}

We approximate the gradient $\nabla \hat{\lambda}(\mathbf{w}^{\star})$ in Theorem~\ref{thm_2} for RR, LASSO, L2-LR, L2-KLR, L1-LR, and L1-KLR.
According to the definition of gradient, we have:
\begin{footnotesize}
\begin{align*}
\nabla \hat{\lambda}(\mathbf{w}^{\star}) = \lim_{\Delta \mathbf{w} \rightarrow \mathbf{0}} \frac{\hat{\lambda}(\mathbf{w}^{\star} + \Delta \mathbf{w}) - \hat{\lambda}(\mathbf{w}^{\star})}{\Delta \mathbf{w}},
\end{align*}
\end{footnotesize}
where the division and limit are component-wise for $\Delta \mathbf{w}$.



\myparatight{RR} 
We approximate the gradient as follows:
\begin{footnotesize}
\begin{align*}
&  \nabla \hat{\lambda}_{RR}(\mathbf{w}^{\star})  =  \lim_{\Delta \mathbf{w} \to \mathbf{0}} \frac{\hat{\lambda}_{RR}(\mathbf{w}^{\star} + \Delta \mathbf{w}) - \hat{\lambda}_{RR}(\mathbf{w}^{\star})}{\Delta \mathbf{w}} \\
& =  \lim_{\Delta \mathbf{w} \to \mathbf{0}}  \frac{1}{\Delta \mathbf{w}} \Big( \frac{(\mathbf{w}^{\star} + \Delta \mathbf{w})^T (\mathbf{Xy} - \mathbf{X}\mathbf{X}^T (\mathbf{w}^{\star}+\Delta \mathbf{w}))} {(\mathbf{w}^{\star}+\Delta \mathbf{w})^T (\mathbf{w}^{\star}+\Delta \mathbf{w})} \\ 
& \qquad - \frac{(\mathbf{w}^{\star})^T (\mathbf{Xy} - \mathbf{X}\mathbf{X}^T \mathbf{w}^{\star})}{(\mathbf{w}^{\star})^T \mathbf{w}^{\star}} \Big) \\
& \approx \lim_{\Delta \mathbf{w} \to \mathbf{0}}  \frac{1}{\Delta \mathbf{w}} \Big( \frac{\Delta \mathbf{w}^T (\mathbf{Xy} - 2 \mathbf{X} \mathbf{X}^T \mathbf{w}^{\star}) - \Delta \mathbf{w}^T \mathbf{X} \mathbf{X}^T \Delta \mathbf{w}}{(\mathbf{w}^{\star})^T \mathbf{w}^{\star}} \Big) \\
& \approx  \frac{\mathbf{X} (\mathbf{y}- 2 \mathbf{X}^T \mathbf{w}^{\star})} {\|\mathbf{w}^{\star}\|_2^2},
\end{align*}
\end{footnotesize}
where in the third and fourth equations, we use $(\mathbf{w}^{\star} + \Delta \mathbf{w})^T (\mathbf{w}^{\star} + \Delta \mathbf{w}) \approx (\mathbf{w}^{\star})^T \mathbf{w}^{\star} $ and $\Delta \mathbf{w}^T \mathbf{X} \mathbf{X}^T \Delta \mathbf{w} \approx 0$ for sufficiently small $\Delta \mathbf{w}$, respectively.


\myparatight{LASSO} 
We approximate the gradient as follows:
\begin{footnotesize}
\begin{align*}
& \nabla \hat{\lambda}_{LASSO}(\mathbf{w}^{\star}) = \lim_{\Delta \mathbf{w} \to \mathbf{0}} \frac{ \hat{\lambda}_{LASSO}(\mathbf{w}^{\star} + \Delta \mathbf{w}) - \hat{\lambda}_{LASSO}(\mathbf{w}^{\star})}{\Delta \mathbf{w}} \\
& = \lim_{\Delta \mathbf{w} \to \mathbf{0}}  \frac{1}{\Delta \mathbf{w}}  \Big( \frac{2 \text{sign}(\mathbf{w}^{\star} + \Delta \mathbf{w})^T (\mathbf{Xy} - \mathbf{X} \mathbf{X}^T (\mathbf{w}^{\star} + \Delta \mathbf{w}))}{\text{sign}(\mathbf{w}^{\star} + \Delta \mathbf{w})^T \text{sign}(\mathbf{w}^{\star} + \Delta \mathbf{w})} \\
&	\qquad - \frac{2 \text{sign}(\mathbf{w}^{\star})^T (\mathbf{Xy} - \mathbf{X} \mathbf{X}^T \mathbf{w}^{\star})}{\text{sign}(\mathbf{w}^{\star})^T \text{sign}(\mathbf{w}^{\star})} \Big) \\
& \approx \lim_{\Delta \mathbf{w} \to \mathbf{0}} \frac{1}{\Delta \mathbf{w}} \frac{\text{sign}(\mathbf{w}^{\star})^T \mathbf{X} \mathbf{X}^T \Delta \mathbf{w}}{\|\text{sign}(\mathbf{w}^{\star})\|_2^2}  \\
& \approx \lim_{\Delta \mathbf{w} \to \mathbf{0}} \frac{1}{\Delta \mathbf{w}} \frac{\Delta \mathbf{w}^T \mathbf{X} \mathbf{X}^T \text{sign}(\mathbf{w}^{\star})}{\|\text{sign}(\mathbf{w}^{\star})\|_2^2} \approx \frac{\mathbf{X} \mathbf{X}^T \text{sign}(\mathbf{w}^{\star})}{\|\text{sign}(\mathbf{w}^{\star})\|_2^2},
\end{align*}
\end{footnotesize}
where in the third equation, we use $\text{sign}(\mathbf{w}^{\star} + \Delta \mathbf{w}) \approx \text{sign}(\mathbf{w}^{\star})$. 


\myparatight{L2-LR} 
We approximate the gradient as follows:
\begin{footnotesize}
\begin{align*}
& \nabla \hat{\lambda}_{L2-LR}(\mathbf{w}^{\star}) = \lim_{\Delta \mathbf{w} \to \mathbf{0}} \frac{\hat{\lambda}_{L2-LR}(\mathbf{w}^{\star} + \Delta \mathbf{w}) - \hat{\lambda}_{L2-LR}(\mathbf{w}^{\star})}{\Delta \mathbf{w}} \\ 
& = \lim_{\Delta \mathbf{w} \to \mathbf{0}}  \frac{1}{\Delta \mathbf{w}}  \Big( \frac{(\mathbf{w}^{\star}+ \Delta \mathbf{w})^T (\mathbf{y} - \mathbf{h}_{\mathbf{w}^{\star}+\Delta \mathbf{w}}(\mathbf{X}))}{(\mathbf{w}^{\star} + \Delta \mathbf{w})^T (\mathbf{w}^{\star} + \Delta \mathbf{w})} \\
& \qquad -  \frac{(\mathbf{w}^{\star})^T \mathbf{X} (\mathbf{y} - \mathbf{h}_{\mathbf{w}^{\star}}(\mathbf{X}))}{(\mathbf{w}^{\star})^T \mathbf{w}^{\star}} \Big) \\
& \approx \lim_{\Delta \mathbf{w} \to \mathbf{0}}  
\frac{1}{\Delta \mathbf{w}} \frac{\Delta (\mathbf{w}^{\star})^T \mathbf{X}(\mathbf{y} - \mathbf{h}_{\mathbf{w}^{\star} + \Delta \mathbf{w}}(\mathbf{X}))}{(\mathbf{w}^{\star})^T \mathbf{w}^{\star}} \\
& \qquad - \frac{(\mathbf{w}^{\star})^T \mathbf{X} (\mathbf{h}_{\mathbf{w}^{\star} + \Delta \mathbf{w}}(\mathbf{X}) - \mathbf{h}_{\mathbf{w}^{\star}}(\mathbf{X}))}{(\mathbf{w}^{\star})^T \mathbf{w}^{\star}}   \\
& \approx \frac{\mathbf{X}(\mathbf{y} - \mathbf{h}_{\mathbf{w}^{\star}}(\mathbf{X})) }{\|\mathbf{w}^{\star}\|_2^2},
\end{align*}
\end{footnotesize}
where in the third and fourth equations, we use $(\mathbf{w}^{\star} + \Delta \mathbf{w})^T (\mathbf{w}^{\star} + \Delta \mathbf{w}) \approx (\mathbf{w}^{\star})^T \mathbf{w}^{\star}$ and $\mathbf{h}_{\mathbf{w}^{\star} + \Delta \mathbf{w}}(\mathbf{X}) \approx \mathbf{h}_{\mathbf{w}^{\star}}(\mathbf{X})$ for sufficiently small $\Delta \mathbf{w}$, respectively.

\myparatight{L2-KLR} 
Similar to L2-LR, we  approximate the gradient as:
\begin{footnotesize}
\begin{align*}
\nabla \hat{\lambda}_{L2-KLR}(\bm{\alpha}^{\star})  \approx \frac{\mathbf{K}(\mathbf{y} - \mathbf{h}_{\bm{\alpha}^{\star}}(\mathbf{K})) }{\|\bm{\alpha}^{\star}\|_2^2}.
\end{align*}
\end{footnotesize}

\myparatight{L1-LR} 
We approximate the gradient as follows:
\begin{footnotesize}
\begin{align*}
& \nabla \hat{\lambda}_{L1-LR}(\mathbf{w}^{\star})  = \lim_{\Delta \mathbf{w} \to \mathbf{0}} \frac{\hat{\lambda}_{L1-LR}(\mathbf{w}^{\star} + \Delta \mathbf{w}) - \hat{\lambda}_{L1-LR}(\mathbf{w}^{\star})}{\Delta \mathbf{w}} \\ 
& = \lim_{\Delta \mathbf{w} \to \mathbf{0}}  \frac{1}{\Delta \mathbf{w}} \Big( \frac{\text{sign}(\mathbf{w}^{\star} + \Delta \mathbf{w})^T \mathbf{X} (\mathbf{y} - \mathbf{h}_{\mathbf{w}^{\star} + \Delta \mathbf{w}}(\mathbf{X}))}{\text{sign}(\mathbf{w}^{\star} + \Delta \mathbf{w})^T \text{sign}(\mathbf{w}^{\star} + \Delta \mathbf{w})} - \\
& \qquad \frac{\text{sign}(\mathbf{w}^{\star})^T \mathbf{X} (\mathbf{y} - \mathbf{h}_{\mathbf{w}^{\star}}(\mathbf{X}))}{\text{sign}(\mathbf{w}^{\star})^T \text{sign}(\mathbf{w}^{\star})} \Big) \\
& \approx \lim_{\Delta \mathbf{w} \to \mathbf{0}} \frac{(\mathbf{h}_{\mathbf{w}^{\star}+\Delta \mathbf{w}}(\mathbf{X}) - \mathbf{h}_{\mathbf{w}^{\star}}(\mathbf{X}))^T }{\Delta \mathbf{w}} \frac{\mathbf{X}^T \text{sign}(\mathbf{w}^{\star})}{{\| \text{sign}(\mathbf{w}^{\star})\|_2^2}} \\
& \approx \frac{\nabla \mathbf{h}_{\mathbf{w}^{\star}}(\mathbf{X})) \mathbf{X}^T \text{sign}(\mathbf{w}^{\star})^T}{{\| \text{sign}(\mathbf{w}^{\star})\|_2^2}}. 
\end{align*}
\end{footnotesize}

\myparatight{L1-KLR} 
Similar to L1-LR, we approximate the gradient as:
\begin{footnotesize}
\begin{align*}
\nabla \hat{\lambda}_{L1-KLR}(\bm{\alpha}^{\star}) \approx \frac{\nabla \mathbf{h}_{\bm{\alpha}^{\star}}(\mathbf{K})) \mathbf{K}^T \text{sign}(\bm{\alpha}^{\star})^T}{{\| \text{sign}(\bm{\alpha}^{\star})\|_2^2}}.
\end{align*}
\end{footnotesize}

\end{document}